\documentclass[11pt,a4paper]{article}

\usepackage{fancyhdr}
\usepackage{amsmath,mathtools}
\usepackage{bm}
\usepackage{esint}
\usepackage{amsfonts}
\usepackage{amsthm}
\usepackage{amssymb}
\usepackage{amsbsy}
\usepackage{verbatim}
\usepackage{graphicx}
\usepackage{url}
\usepackage{mathrsfs}
\usepackage[hmargin = 1in,vmargin=.75in]{geometry}
\usepackage{color}
\usepackage{amstext}
\usepackage{stmaryrd}
\usepackage{enumerate}
\usepackage{algpseudocode}
\usepackage{algorithm}
\usepackage{pifont}
\usepackage{prettyref}
\usepackage{subcaption}
%\usepackage{showkeys}
%\usepackage{natbib}
%\usepackage{amsfonts}
%\usepackage{latexsym}
%\usepackage{amsmath}
%\usepackage{amsthm,amssymb,mathrsfs,amstext}

%\usepackage{theorem}
%\usepackage{amssymb}
%\usepackage{lastpage}
%\usepackage{fancyhdr}
%\usepackage{fancybox}
%\usepackage{graphicx}
%\input epsf.tex
%\usepackage{epsf}
%\usepackage{epsfig}
%\usepackage[all]{xy}
%\pagestyle{plain}
%\pagenumbering{arabic}
%\textwidth15cm
%\oddsidemargin0.5cm
\evensidemargin0.5cm
\font\msbm=msbm10

\numberwithin{equation}{section}

\theoremstyle{plain}
\newtheorem{theorem}{Theorem}[section]
\newtheorem{lemma}[theorem]{Lemma}

\newtheorem{definition}{Definition}[section]

\def\mathbb#1{\hbox{\msbm{#1}}}

%% bold symbol for SMALL LETTER

\newcommand{\bu}{\boldsymbol{u}}

\newcommand{\bx}{\boldsymbol{x}}

%% bold symbol for BIG LETTER
\newcommand{\BA}{\boldsymbol{A}}

\newcommand{\BE}{\boldsymbol{E}}

\newcommand{\BG}{\boldsymbol{G}}

\newcommand{\BM}{\boldsymbol{M}}

\newcommand{\BO}{\boldsymbol{O}}
\newcommand{\BQ}{\boldsymbol{Q}}
\newcommand{\BR}{\boldsymbol{R}}
\newcommand{\BS}{\boldsymbol{S}}

\newcommand{\BU}{\boldsymbol{U}}
\newcommand{\BV}{\boldsymbol{V}}
\newcommand{\BW}{\boldsymbol{W}}
\newcommand{\BX}{\boldsymbol{X}}
\newcommand{\BY}{\boldsymbol{Y}}
\newcommand{\BZ}{\boldsymbol{Z}}

\newcommand{\BPhi}{\boldsymbol{\Phi}}
\newcommand{\BDelta}{\boldsymbol{\Delta}}

%\newcommand{\blambda}{\boldsymbol{\lambda}}

%%%%  \hat

% bar and tilde

% 

\newcommand{\BPsi}{\boldsymbol{\Psi}}

\newcommand{\BLambda}{\boldsymbol{\Lambda}}
\newcommand{\BSigma}{\boldsymbol{\Sigma}}

\newcommand{\PP}{\mathcal{P}}

\newcommand{\I}{\boldsymbol{I}}
\newcommand{\RR}{\mathbb{R}}
\newcommand{\lag}{\langle}
\newcommand{\rag}{\rangle}

 % left parentheses
 % right parentheses
 % left square bracket
 % right square bracket
 % left curly bracket
 % right curly bracket

\newcommand{\eps}{\epsilon}

\DeclareMathOperator{\Tr}{Tr}

\DeclareMathOperator{\mi}{\mathrm{i}}
\DeclareMathOperator{\E}{\mathbb{E}}

\DeclareMathOperator{\SO}{SO}
\DeclareMathOperator{\Od}{O}

\DeclareMathOperator{\spec}{spec}

\DeclareMathOperator{\rank}{\text{rank}}
\DeclareMathOperator{\blkdiag}{blkdiag}
\DeclareMathOperator{\argmin}{argmin}

\renewcommand{\Pr}{\mathbb{P}}
%{\qed}
%{}

\long\def\\#1//{}

%%% TS: I suggest for comments to use color.

%%XL

\definecolor{xl}{RGB}{200,50,120}

\begin{document}
\title{\bf Improved Performance Guarantees for \\ Orthogonal Group Synchronization \\ via Generalized Power Method}
\author{Shuyang Ling\thanks{New York University Shanghai (Email: sl3635@nyu.edu). This work is (partially) financially supported by the National Key R\&D Program of China, Project Number 2021YFA1002800, National Natural Science Foundation of China (NSFC) No.12001372, Shanghai Municipal Education Commission (SMEC) via Grant 0920000112, and NYU Shanghai Boost Fund.}}

\maketitle

\begin{abstract}
Given the noisy pairwise measurements among a set of unknown group elements, how to recover them efficiently and robustly? This problem, known as group synchronization, has drawn tremendous attention in the scientific community. In this work, we focus on orthogonal group synchronization that has found many applications, including computer vision, robotics, and cryo-electron microscopy. One commonly used approach is the least squares estimation that requires solving a highly nonconvex optimization program. The past few years have witnessed considerable advances in tackling this challenging problem by convex relaxation and efficient first-order methods. However, one fundamental theoretical question remains to be answered: how does the recovery performance depend on the noise strength? To answer this question, we study a benchmark model: recovering orthogonal group elements from their pairwise measurements corrupted by additive Gaussian noise. We investigate the performance of convex relaxation and the generalized power method (GPM). By applying the novel~\emph{leave-one-out} technique, we prove that the GPM with spectral initialization enjoys linear convergence to the global optima to the convex relaxation that also matches the maximum likelihood estimator. Our result achieves a near-optimal performance bound on the convergence of the GPM and improves the state-of-the-art theoretical guarantees on the tightness of convex relaxation by a large margin. 

\end{abstract}

\section{Introduction}
Group synchronization aims to recover group elements $\{g_i\}_{i=1}^n$ from their pairwise measurements:
\[
g_{ij} = g_ig_j^{-1} + \text{noise}, \quad (i,j)\in {\cal E}
\]
where ${\cal E}$ is a given edge set. It has attracted an increasing amount of attention in the past few years as it can be often found in various applications. Examples include computer vision (special orthogonal group SO($d$))~\cite{AKKSB12,CKS15}, robotics (special Euclidean group SE($d$))~\cite{RCBL19}, cryo-electron microscopy (SO(3))~\cite{SS12,S18,SS11}, clock-synchronization on networks~\cite{GK06}, sensor network localization (Euclidean group)~\cite{CLS12}, joint alignment ($\mathbb{Z}_n$-group)~\cite{CC18} and community detection ($\mathbb{Z}_2$-group)~\cite{ABBS14,C15}. Without noise, solving group synchronization is a trivial task since one can recover every group element sequentially and they are uniquely determined modulo a global group action. However, it is often highly challenging to solve the group synchronization problem in the presence of noise. When the noise exists, sequential recovery no longer works as it would amplify the noise. 

Recent few years have seen many theoretical and algorithmic progresses in solving group synchronization under various types of noise and with different underlying network structures~\cite{BBV16,LS21,LYS17,PWBM18b,RCBL19,ZB18}. 
In this work, we are interested in studying how to recover orthogonal group elements $\BG_i\in \Od(d)$ from the pairwise measurements corrupted by Gaussian noise, 
\begin{align}
\BA_{ij} & = \BG_i\BG_j^{\top} + \sigma\BW_{ij}, \label{eq:model1}\\ 
\Od(d) & : = \{\BO\in\RR^{d\times d}: \BO\BO^{\top} = \BO^{\top}\BO = \I_d\}, \nonumber %\label{eq:od}
\end{align}
where $\sigma$ measures the noise strength and $\BW_{ij} = \BW_{ji}^{\top}\in\RR^{d\times d}$ is a Gaussian random matrix with each entry an independent standard normal random variable. 

Why is~\eqref{eq:model1} an interesting model to study?
Orthogonal group is a generalization of many useful groups such as $\mathbb{Z}_2:=\{1,-1\}$ and unitary group $U(1) := \{e^{\mi \theta}:\theta\in[0,2\pi)\}$, and contains permutation groups and special orthogonal group as subgroups. Moreover, the $\Od(d)$ synchronization plays an important role in the special Euclidean group synchronization problem $\text{SE}(d)$, a core problem in SLAM (simultaneous localization and mapping)~\cite{RCBL19} and sensor network localization~\cite{CLS12}. Therefore, studying this benchmark model provides more insights into solving many relevant problems. 

Given this statistical model~\eqref{eq:model1}, one common approach is maximum likelihood estimation (MLE) which is equivalent to finding the least squares estimator. Minimizing the least squares cost function is a potentially NP-hard problem in general since it includes graph max-cut~\cite{GW95} as a special case $d=1$. On the other hand, many numerical simulations and previous theoretical works indicate that despite its seeming NP-hardness, the $\Od(d)$ synchronization is solvable with an efficient algorithm if noise level is relatively low. The focus of this paper is analyzing the performance of semidefinite relaxation and generalized power method in solving~\eqref{eq:model1}. More precisely, we attempt to answer two questions: 
\begin{center}
\text{\emph{When does convex relaxation recover the least squares estimator? }}
\end{center}
The first question concerns tightness: the solution to the relaxed convex program is not necessarily equal to the least squares estimator in general. When are they the same?
\begin{center}
\text{\emph{Can we design an efficient algorithm to solve this orthogonal group synchronization?}}
\end{center}
The second question is about efficient recovery: in practice, we do not often use the SDP (semidefinite programming) relaxation due to its expensive computational costs. As a result, finding an efficient algorithm with global convergence will be very favorable. In particular, we will study the convergence of generalized power method applied to the $\Od(d)$ synchronization~\cite{B16,LYS17,LYS20,ZB18}.
The answers to both questions are closely related to the noise strength, namely $\sigma$. As the noise strength increases, the recovery problem becomes more challenging. Therefore, we aim to identify a regime for $\sigma$ within which the generalized power method will successfully and efficiently recover the maximum likelihood estimator.

\subsection{Related works}

Since there is an extensive literature on group synchronization, we are not able to provide an exhaustive literature review. Instead, we will give a brief review of these works that inspire ours. 
Group synchronization has been studied for several different groups including $\mathbb{Z}_2=\{-1,1\}$~\cite{ABBS14,AFWZ20,BBV16,LXB19}, angular (phase) synchronization~\cite{BBS17,LYS17,S11,ZB18}, permutation group~\cite{CSG16,HG13,PKS13,SHSS16}, special Euclidean group~\cite{CLS12,RCBL19}, and finite cyclic group~\cite{CC18}. For the $\Od(d)$ synchronization, it has found broad applications in computer vision~\cite{AKKSB12,CGH14,OVBS17,S10}, generalized Procrustes problem~\cite{L21a,L21b,SS12}, and cryo-electron microscopy~\cite{S18,SS11}. 

Due to the practical importance of $\Od(d)$ synchronization, numerous efforts have been made towards developing fast and reliable algorithms, as well as mathematical theory.
One major difficulty of solving $\Od(d)$ synchronization is its severe nonconvexity in the least squares estimation. The nonconvexity makes naive local search algorithms vulnerable to poor local optima.
Therefore, one needs to come up with alternative approaches to overcome these issues. Convex relaxation, especially semidefinite relaxation, is viewed as one of the most powerful methods in solving these highly nonconvex programs. The SDP relaxation of $\mathbb{Z}_2$ synchronization is studied in~\cite{ABBS14} which obtains a near-optimal performance bound for the exact recovery of hidden group elements from corrupted measurements.
The work~\cite{S11} by Singer studies angular synchronization by spectral methods and semidefinite program in presence of random noise.  
Later on,~\cite{BBS17} provides the first yet suboptimal theoretical guarantee for the tightness of the SDP relaxation for angular synchronization. The near-optimal bound is first established in~\cite{ZB18} which uses the leave-one-out technique. In~\cite{WS13}, the authors studies the SDP relaxation approach for $\SO(d)$ synchronization with a subset of measurements corrupted by arbitrary random orthogonal matrices, and obtained a near-optimal performance guarantee with respect to the corruption level. The convex relaxation of $\Od(d)$ synchronization under arbitrary noise is recently studied in~\cite{L20a} which gives a sub-optimal performance bound. 

However, even if the SDP relaxation works in many scenarios, its expensive computational costs does not allow wide uses in practice. 
Instead, it is more practical to use fast low-rank nonconvex optimization approach for $\Od(d)$ synchronization including Riemannian optimization~\cite{AMS09,B15,BVB20,BM03,BM05,DRWMC20,MMMO17,RCBL19,WW20,WY13} since the solution is often low-rank. In particular, the Burer-Monteiro factorization~\cite{BM03,BM05} has gained substantial popularity due to its great empirical successes in solving large-scale SDPs. However, since the objective function associated with Burer-Monteiro factorization is nonconvex, there is always risk that the iterates might get stuck at one of the poor local optima. 
Recent theoretical progresses have shown that as long as the degree of freedom $p$ in the search space is large (scales like $p \approx \sqrt{2n}$ where $n$ is the number of constraints), the first and second order necessary optimality conditions are sufficient to guarantee global optimality~\cite{BVB20,WW20}, which provides a solid theoretical foundation for the Burer-Monteiro approach. Moreover, even if one chooses a search space of much smaller dimension, the Burer-Monteiro factorization still works~\cite{BVB16,L20b} in the examples such as community detection, and $\mathbb{Z}_2$- and $\Od(d)$ synchronization, provided that the noise is sufficiently small (not optimal). 
Other approaches of group synchronization includes the convenient spectral relaxation~\cite{L20b,RG19,S11,SS11}, often providing a solution close to the ground truth but not as good as the maximum likelihood estimator. Message passing type algorithms are also discussed in group synchronization~\cite{LS21,PWBM18b} in which a general algorithmic framework is established to solve a large class of compact group synchronization problems.

The performances of projected power method in joint alignment problem ($\mathbb{Z}_m$-cyclic group) and angular synchronization are studied in a series of works~\cite{B16,CC18,LYS17,ZB18}. The algorithm consists of an ordinary power iteration plus a projection to the group. One core question regarding the power method is: when can we recover the maximum likelihood estimator (MLE) via generalized power method? The answer depends on the noise level. The work~\cite{CC18} proves that projected power method with proper initialization converges to the MLE for joint alignment problem and achieves an information-theoretically optimal performance bound in the asymptotic regime.
Several papers~\cite{B16,LYS17,ZB18} have been devoted to solving the angular synchronization by using the generalized power method and to providing convergence analysis of the algorithm. The inspiring work~\cite{ZB18} employs the leave-one-out technique and is able to establish a near-optimal bound on the noise strength to ensure the global convergence of the generalized power method to the MLE. As a by-product,~\cite{ZB18} also provides a near-optimal bound for the tightness of the SDP relaxation. 
Recently,~\cite{LYS20} proposes an algorithmic framework based on the generalized power method to solve the synchronization of $\Od(d)$ group and its subgroups such as permutation group and special orthogonal group. The authors also provide a theoretical analysis on how far the output from the GPM is away from the ground truth. 

One important technical ingredient in our work is matrix perturbation theory~\cite{DK70,RCY11,V17,EBW18}, commonly used in statistics, signal processing, and machine learning. In particular, our analysis relies on matrix perturbation argument to analyze the spectral initialization.
However, the classical Davis-Kahan theorem~\cite{DK70} is insufficient to directly provide a tight bound because we need a block-wise operator norm bound for the eigenvectors while Davis-Kahan theorem only gives an error bound in $\ell_2$-norm. Our work benefits greatly from the leave-one-out technique in~\cite{AFWZ20,CCFMY20,FWZ18,MWCC20,ZB18} which is used to overcome this technical challenge. This technique has proven to be a surprisingly powerful method in dealing with the statistical dependence between the signal and  noisy measurements, and allowing us to achieve near-optimal performance bounds. We have seen many successful examples including $\mathbb{Z}_2$-synchronization and community detection under stochastic block model~\cite{AFWZ20}, statistical ranking problem~\cite{CFMW19}, $\ell_{\infty}$-norm eigenvector perturbation~\cite{FWZ18}, spectral clustering~\cite{DLS21}, and phase retrieval/blind deconvolution~\cite{LMCC19,MWCC20}, and matrix completion~\cite{CCFMY20,MWCC20}.

Our main contribution of this work is multifold. The work studies the global convergence of the generalized power method~\cite{LYS20} and the performance of the SDP relaxation~\cite{B15,RCBL19,BVB20,L20a} in solving orthogonal group synchronization with data corrupted by Gaussian noise. Despite the aforementioned progresses, there still exists a gap between the theory and practice. Numerical experiments indicate that the SDP relaxation and generalized power methods work if the noise strength $\sigma \lesssim \sqrt{n}/(\sqrt{d} + \sqrt{\log n})$ holds, modulo a possible log-factor. However, the state-of-the-art guarantee $\sigma\lesssim n^{1/4}/d^{3/4}$ in~\cite{L20a} is still far from this empirical observations in $\Od(d)$-synchronization for $d\geq 3.$
We establish a near-optimal performance bound for the global convergence to the maximum likelihood estimator as well as the tightness of the SDP relaxation. Our result narrows the gap between theory and empirical observation by showing that the GPM and the SDP relaxation work if $\sigma \lesssim \sqrt{n}/\sqrt{d}(\sqrt{d} + \sqrt{\log n})$, only losing a factor of $\sqrt{d}$ to the detection threshold~\cite{PWBM18}. Moreover, we provide a near-optimal error bound between the MLE and ground truth.
Our work can be viewed as a generalization of the projected power method for phase synchronization~\cite{ZB18} to orthogonal group synchronization~\cite{LYS20}. However, since U(1) in~\cite{ZB18} is a commutative group while orthogonal group $\Od(d)$ is non-commutative for $d\geq 3$, this fundamental difference requires many different technical treatments in the proof. Moreover, this result complements the work~\cite{LYS20} by answering what the limiting point of the GPM is, i.e., the convergence to the MLE for the iterates from generalized power method under~\eqref{eq:model1}.

\subsection{Organization}
We will discuss the preliminaries including model setup and algorithm in Section~\ref{s:prelim}. The main results will be provided in Section~\ref{s:main} and the proofs are given in Section~\ref{s:proof}.

\subsection{Notation}
We denote vectors and matrices by boldface letters $\bx$ and $\BX$ respectively. For a given matrix $\BX$, $\BX^{\top}$ is the transpose of $\BX$ and $\BX\succeq 0$ means $\BX$ is positive semidefinite. Let $\I_n$ be the identity matrix of size $n\times n$. For two matrices  $\BX$ and $\BY$ of the same size, their inner product is $\lag \BX,\BY\rag= \Tr(\BX^{\top}\BY) = \sum_{i,j}X_{ij}Y_{ij}.$ Let $\|\BX\|$ be the operator norm, $\|\BX\|_*$ be the nuclear norm, and $\|\BX\|_F$ be the Frobenius norm. {We denote the $i$th largest and the smallest singular value (and eigenvalue) of $\BX$ by $\sigma_{i}(\BX)$ and $\sigma_{\min}(\BX)$ (and $\lambda_i(\BX)$ and $\lambda_{\min}(\BX)$) respectively.}
For a non-negative function $f(x)$, we write $f(x)\lesssim g(x)$ and $f(x) = O(g(x))$ if there exists a positive constant $C_0$ such that $f(x)\leq C_0g(x)$ for all $x.$

\section{Preliminaries}\label{s:prelim}
Recall the orthogonal group synchronization under Gaussian noise:
\[
\BA_{ij} = \BG_i\BG_j^{\top} + \sigma\BW_{ij}\in \RR^{d\times d}
\]
where $\BG_i\in\Od(d)$ and $\BW_{ij}$ is a $d\times d$ Gaussian random matrix.
Let $\BG^{\top} = [\BG^{\top}_1, \cdots, \BG^{\top}_n]\in\RR^{d\times nd}$. Then we write the model into a more convenient block matrix form
\begin{equation}\label{eq:model}
\BA = \BG\BG^{\top} + \sigma \BW, \quad \BW\in\RR^{nd\times nd}
\end{equation}
where $\BW = [\BW_{ij}]_{1\leq i,j\leq n}$ is an $nd\times nd$ symmetric Gaussian random matrix whose entries are i.i.d. standard normal.
The task is to recover the group elements from their pairwise measurements $\BA_{ij}$. 

Among many existing approaches for $\Od(d)$ synchronization, one popular approach is maximum likelihood estimation (MLE). Under Gaussian noise, the likelihood function is equivalent to the least squares cost function:
\[
\min_{\BR_i\in \Od(d)}\sum_{i<j} \| \BR_i\BR_j^{\top} - \BA_{ij} \|_F^2
\]
which is equivalent to
\begin{equation}\label{def:od}
\max_{\BR_i\in\Od(d)} \sum_{i<j} \lag\BR_i\BR_j^{\top}, \BA_{ij}  \rag.
\end{equation}

Directly maximizing~\eqref{def:od} is not an easy task since~\eqref{def:od} is a highly nonconvex optimization program and potentially NP-hard. Convex relaxation is a powerful approach to overcome the nonconvexity. Let $\BR\in\Od(d)^{\otimes n}$ be a matrix of size $nd\times d$ whose $i$th block is $\BR_i.$ Note that every element in $\{\BR\BR^{\top}: \BR_i\in\Od(d)\}$ is positive semidefinite and its diagonal block equals $\I_d.$ The convex relaxation is obtained by ``lifting" the feasible set:
\begin{equation}\label{def:sdp}
\max~\lag \BA, \BX\rag \quad \text{ such that }\quad \BX\succeq 0,~ \BX_{ii} = \I_d.
\end{equation}
This semidefinite relaxation is a generalization of Goemans-Williamson relaxation~\cite{GW95} for graph max-cut. The global maximizer of~\eqref{def:sdp} can be recovered by using a standard convex program solver~\cite{GB14,NN94}. However, due to the relaxation, it is likely that the solution to~\eqref{def:sdp} does not match that to~\eqref{def:od}. Therefore, one major theoretical question is the tightness, i.e., when~\eqref{def:od} and~\eqref{def:sdp} share the same global maximizer. More generally, we are interested in when~\eqref{def:od} is solvable by an algorithm with polynomial time complexity. 
The answer depends on the value of $\sigma$: for~\eqref{def:od} with small $\sigma$, it is more likely to obtain the optimal solution. 

Our discussion focuses on answering two questions: (a) when is the SDP relaxation tight? How does the tightness depend on $\sigma$?; (b) if the SDP is tight, does there exist an efficient algorithm to recover the optimal solution to~\eqref{def:sdp} and~\eqref{def:od}? Now we introduce the generalized power method: the algorithm consists of a two-step procedure widely employed in nonconvex optimization literature~\cite{CLS15,MWCC20,ZB18}: first we find a good initialization by using spectral method and then show the global convergence of the generalized power method. Before proceeding to the algorithm, we give a few useful definitions.
{
\begin{definition}[Generalized ``phase"]
The generalized ``phase" of a matrix $\BPsi\in\RR^{d\times d}$ is defined as 
\begin{equation}\label{def:P}
\PP(\BPsi) : = \BU\BV^{\top} +\BU_{\perp}\BV_{\perp}^{\top}, \qquad \BPsi  = \BU\BSigma\BV^{\top}, 
\end{equation}
where $\BPsi = \BU\BSigma\BV^{\top}$ is the compact SVD of $\BPsi$, i.e., $\BU^{\top}\BU=\BV^{\top}\BV=\I_r$ and $r=\rank(\BPsi)$. Here $\BU_{\perp}$ and $\BV_{\perp}$ are $d\times (d-r)$ matrices satisfying $\BU_{\perp}^{\top}\BU=\BV_{\perp}^{\top}\BV=0$ and $\BU_{\perp}^{\top}\BU_{\perp}=\BV_{\perp}^{\top}\BV_{\perp}=\I_{d-r}$. 

For a block matrix $\BX^{\top} = [\BX_1^{\top},\cdots, \BX_n^{\top}]\in\RR^{d\times nd}$ with $\BX_i\in\RR^{d\times d}$, we define $\PP_n(\cdot)$ as an operator $\RR^{nd\times d}\rightarrow\Od(d)^{\otimes n}$:
\[
\PP_n(\BX) := 
\begin{bmatrix}
\PP(\BX_1) \\
\vdots \\
\PP(\BX_n)
\end{bmatrix}\in\Od(d)^{\otimes n}.
\]
\end{definition}
The operator $\PP(\cdot)$ is also known as the~\emph{matrix sign function}~\cite{AFWZ20,G11}.
Indeed, $\PP(\BPsi)$ is not unique because there are multiple choices of $\BU_{\perp}$ and $\BV_{\perp}$ if $\BPsi$ is not full rank. Therefore, it is more reasonable to treat $\PP(\cdot)$ as a set-valued function and 
 $\PP(\BPsi)$ outputs one representative from the set 
\[
\{\BU\BV^{\top} + \BU_{\perp}\BV_{\perp}^{\top}: \BU_{\perp}^{\top}\BU=\BV_{\perp}^{\top}\BV=0 \}, \qquad \BPsi = \BU\BSigma\BV^{\top},
\]
which is a subset of $\Od(d)$.
With a bit abuse of notation, a possibly simpler  alternative way of defining $\PP(\cdot)$ is to let $\BPsi =\BU\BSigma\BV^{\top}$ be any full SVD of $\BPsi$ and 
\[
\PP(\BPsi) := \BU\BV^{\top}\in\Od(d)
\]
where $\BU$ and $\BV$ are both $d\times d$ orthogonal matrices.}
In particular, if $\BPsi\in\RR^{d\times d}$ is invertible, then
\[
\PP(\BPsi) : = \BPsi(\BPsi^{\top}\BPsi)^{-\frac{1}{2}} = (\BPsi\BPsi^{\top})^{-\frac{1}{2}}\BPsi
\]
is uniquely determined.

The operator $\PP_n$ is essentially the projection operator which maps a matrix of $\RR^{nd\times d}$ to $\Od(d)^{\otimes n}$ since for each block $\BX_i$, $\PP(\BX_i)$ is the orthogonal matrix closest to $\BX_i$:
\[
\PP(\BX_i) : = \argmin_{\BQ\in\Od(d)} \|\BQ - \BX_i\|_F^2.
\]

In our discussion, we also need to introduce a new distance function between two matrices $\BX$ and $\BY$ in $\RR^{nd\times d}$. 
Remember that for orthogonal group synchronization, the solution is equivalent up to a global group action, i.e., for any fixed $\BG\in\Od(d)^{\otimes n},$ $\BG$ and $\BG\BQ$ are equivalent for any $\BQ\in\Od(d)$. This fact needs to be taken into consideration and we define the following distance function for two matrices in $\RR^{nd\times d}$: 
\begin{equation}\label{def:df}
d_F (\BY, \BX) := \min_{\BQ\in \Od(d)}\| \BY - \BX\BQ \|_F,
\end{equation}
where $\BQ = \PP(\BX^{\top}\BY)$ minimizes $\|\BY - \BX\BQ\|_F$ {since
\begin{align*}
d_F(\BY,\BX) & = \min_{\BQ\in\Od(d)}\sqrt{\|\BX\|_F^2 + \|\BY\|_F^2 - 2\lag \BX^{\top}\BY,\BQ\rag} \\
& = \sqrt{\|\BX\|_F^2 + \|\BY\|_F^2 - 2\| \BX^{\top}\BY\|_*}.
\end{align*}}
This distance function satisfies triangle inequality. For three arbitrary $nd\times d$ matrices $\BX,\BY,$ and $\BZ$, we have
\begin{align*}
d_F(\BX,\BZ) & = \| \BX - \BZ\BQ_{ZX}\|_F \leq \| \BX - \BZ\BQ_{ZY}\BQ_{YX} \|_F  \\
& \leq  \| \BX - \BY\BQ_{YX} \|_F + \| \BY\BQ_{YX} - \BZ \BQ_{ZY}\BQ_{YX} \|_F \\
& \leq d_F(\BX,\BY) + d_F(\BY, \BZ)
\end{align*}
where $\BQ_{ZX} = \PP(\BZ^{\top}\BX)$, $\BQ_{ZY} = \PP(\BZ^{\top}\BY)$, and $\BQ_{YX} = \PP(\BY^{\top}\BX)$ all belong to $\Od(d).$

Now we are ready to present the algorithm, which has been proposed in~\cite{LYS20,ZB18}.

\begin{algorithm}[h!]
\caption{Generalized power methods for orthogonal group synchronization}
\begin{algorithmic}[1]
\State Compute the top $d$ eigenvectors $\BPhi\in\RR^{nd\times d}$ of $\BA$ with $\BPhi^{\top}\BPhi = n\I_d.$
\State Compute $\PP(\BPhi_i)$ for all $1\leq i\leq n$ where $\BPhi_i$ is the $i$th block of $\BPhi$.
\State Initialize $\BS^{0} = \PP_n(\BPhi).$  

\State $\BS^{t+1} = \PP_n( \BA\BS^{t})$, \quad $t=0,1,\cdots$

\State Stop when the iteration stabilizes.

\end{algorithmic}
\label{algo1}
\end{algorithm}\label{alg}

The algorithm begins with a spectral initialization which first computes the top $d$ eigenvectors of $\BA$ and then projects them onto $\Od(d)^{\otimes n}.$ After that,
each step of the algorithm is essentially one step of the power iteration followed by projection. In order to better understand this algorithm, we first take a closer look at this algorithm and its possible fixed point. Let $\BS_i^t$ be the $i$th block of $\BS^t$ and the update rule for the $i$th block is
\[
\BS_i^{t+1} = \PP\left(\sum_{j=1}^n\BA_{ij} \BS_j^{t} \right) = {\BU_{i}^{t}}(\BV_i^{t})^{\top}, \quad 1\leq i\leq n,
\]
where $\BU_i^t$ and $\BV_i^t$ are the left and right singular vectors of $\sum_{j=1}^n\BA_{ij} \BS_j^{t}$.
Note that
\begin{equation}\label{eq:fixed}
\sum_{j=1}^n \BA_{ij}\BS_j^{t} = \BU_i^{t}\BSigma_i^{t}(\BV_i^{t})^{\top} = \BU_i^{t}\BSigma_i^{t}(\BU_i^{t})^{\top} \cdot \BU_i^{t}(\BV_i^{t})^{\top} = \BLambda_{ii}^{t}  \BS_i^{t+1} 
\end{equation}
where
$\BLambda_{ii}^{t} := \BU_i^{t}\BSigma_i^{t}(\BU_i^{t})^{\top}\succeq 0.$ 
{In other words, $\BLambda_{ii}^t = \left([\BA\BS^t]_i [\BA\BS^t]_i^{\top}\right)^{1/2}$ is the matrix square root of $[\BA\BS^t]_i [\BA\BS^t]_i^{\top}$ and is uniquely determined.}
A more compact form of the update rule is
\[
\BA \BS^t = \BLambda^t \BS^{t+1}
\]
where $\BLambda^t = \blkdiag(\BLambda_{11}^t, \cdots, \BLambda_{nn}^t)$ is a block-diagonal matrix whose diagonal blocks consist of $\{\BLambda_{ii}^t\}_{i=1}^n.$

Suppose $\{\BS^t\}_{t\geq 0}$ finally converges under $d_F(\cdot,\cdot)$, the limiting point $\BS^{\infty}$ is~\emph{likely} to satisfy (we will justify why it holds later):
\begin{equation}\label{eq:opt_eq}
\BA \BS^{\infty} = \BLambda \BS^{\infty}
\end{equation}
where $\BLambda = \blkdiag(\BLambda_{11}, \cdots, \BLambda_{nn})\in\RR^{nd\times nd}\succeq 0$. In fact,~\eqref{eq:opt_eq} arises again when we want to characterize the global optimality of $\BS^{\infty}(\BS^{\infty})^{\top}$ in~\eqref{def:sdp}: if $\BS^{\infty}$ satisfies~\eqref{eq:opt_eq} and $\BLambda - \BA\succeq 0$, then $\BS^{\infty}(\BS^{\infty})^{\top}$ is a global maximizer to~\eqref{def:sdp}, which follows from the duality theory in convex optimization.

\section{Main theorems}\label{s:main}
The first result improves the state-of-the-art bound for the tightness of the SDP relaxation to solve orthogonal group synchronization. 

\begin{theorem}\label{thm:sdp}
Consider orthogonal group synchronization problem~\eqref{eq:model} under additive Gaussian noise and its convex relaxation~\eqref{def:sdp}. 
The SDP relaxation is tight, i.e., {the globally optimal solution $\BX$ to~\eqref{def:sdp} is rank-$d$ and can be factorized into $\BX=\BS\BS^{\top}$ where $\BS\in\RR^{nd\times d}$ equals the maximum likelihood estimator~\eqref{def:od}}, if
\begin{equation}\label{eq:sigma1}
\sigma < \frac{c_0 \sqrt{n}}{\sqrt{d} (\sqrt{d} + \sqrt{\log n})}
\end{equation}
with high probability for a small constant $c_0.$
Moreover, its solution is unique and satisfies
\[
\min_{\BQ\in\Od(d)}\max_{1\leq i\leq n}\|\widehat{\BG}_{i} - \BG_i\BQ\|_F \lesssim \sigma\sqrt{n^{-1}d}(\sqrt{d} + \sqrt{\log n})
\]
where $\widehat{\BG}_{i}$ is the MLE and the solution to~\eqref{def:sdp}.
\end{theorem}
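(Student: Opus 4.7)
The plan is to construct an explicit dual certificate for~\eqref{def:sdp} out of the MLE $\widehat{\BG}$ itself. Set
\[
\widehat{\BLambda}_{ii} := \Bigl(\sum_{j=1}^{n} \BA_{ij}\widehat{\BG}_{j}\Bigr)\widehat{\BG}_{i}^{\top}, \qquad \widehat{\BLambda} := \blkdiag(\widehat{\BLambda}_{11},\ldots,\widehat{\BLambda}_{nn}).
\]
At the MLE the identity $\widehat{\BG}_{i}=\PP(\sum_{j}\BA_{ij}\widehat{\BG}_{j})$ holds, so by the polar decomposition~\eqref{eq:fixed} each $\widehat{\BLambda}_{ii}$ is symmetric positive semidefinite and $\widehat{\BLambda}\widehat{\BG}=\BA\widehat{\BG}$ by construction. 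By Lagrangian duality for~\eqref{def:sdp}, tightness, uniqueness, and coincidence with the MLE all reduce to
\[
\widehat{\BLambda}-\BA \succeq 0 \quad\text{with}\quad \ker(\widehat{\BLambda}-\BA)=\mathrm{span}(\widehat{\BG}).
\]

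\textbf{Row-wise MLE bound via leave-one-out.} The key input---and the second claim of the theorem---is a row-wise estimate $\max_{i}\|\widehat{\BG}_{i}-\BG_{i}\BQ\|_{F}\lesssim\sigma\sqrt{d/n}(\sqrt{d}+\sqrt{\log n})$ for some $\BQ\in\Od(d)$. I would first obtain an $\ell_{2}$-level bound $d_{F}(\widehat{\BG},\BG)\lesssim\sigma\sqrt{d}$ via a Davis--Kahan/quadratic-optimality argument applied to $\BA=\BG\BG^{\top}+\sigma\BW$, using $\|\BW\|\lesssim\sqrt{nd}$ and the size-$n$ spectral gap of $\BG\BG^{\top}$. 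To promote this to a uniform row-wise bound, for each $m\in\{1,\dots,n\}$ I would introduce a leave-one-out MLE $\widehat{\BG}^{(m)}$ computed from a modified data matrix whose $m$-th block row and column have been resampled (or zeroed); because $\widehat{\BG}^{(m)}$ is then independent of $\{\BW_{mj}\}_{j}$, standard Gaussian concentration controls $\|\sigma\sum_{j}\BW_{mj}\widehat{\BG}^{(m)}_{j}\|_{\op}\lesssim \sigma\sqrt{d}(\sqrt{n}+\sqrt{\log n})$ uniformly in $m$ after a union bound. A quadratic-growth/stability argument near the MLE converts this into a bound on $\|\widehat{\BG}_{m}-\widehat{\BG}^{(m)}_{m}\|_{F}$, and combining with the $\ell_{2}$ baseline yields the target row-wise estimate after dividing by the spectral gap $n$.

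\textbf{Verifying the certificate.} Align $\widehat{\BG}$ to $\BG$ by the minimizing $\BQ$ and set $\BDelta:=\widehat{\BG}-\BG\BQ$. Expanding $\BA_{ij}=\BG_{i}\BG_{j}^{\top}+\sigma\BW_{ij}$ in the definition of $\widehat{\BLambda}_{ii}$ gives
\[
\widehat{\BLambda}_{ii}-n\I_{d} = n\,\BG_{i}\BQ\BDelta_{i}^{\top}+\BG_{i}\Bigl(\sum_{j}\BG_{j}^{\top}\BDelta_{j}\Bigr)\widehat{\BG}_{i}^{\top}+\sigma\Bigl(\sum_{j}\BW_{ij}\widehat{\BG}_{j}\Bigr)\widehat{\BG}_{i}^{\top}+\text{lower-order},
\]
and the dominant term $n\,\BG_{i}\BQ\BDelta_{i}^{\top}$ is controlled by the row-wise bound at size $n\max_{i}\|\BDelta_{i}\|_{F}\lesssim\sigma\sqrt{nd}(\sqrt{d}+\sqrt{\log n})$; the other two terms are of the same order or smaller, so $\|\widehat{\BLambda}-n\I\|$ (a block-diagonal maximum) obeys the same bound. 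For any unit $\bv$ with $\widehat{\BG}^{\top}\bv=0$ one has $\|\BG^{\top}\bv\|=\|\BDelta^{\top}\bv\|\leq\|\BDelta\|_{\op}\lesssim\sigma\sqrt{d}$ from the $\ell_{2}$ baseline, so the decomposition $\widehat{\BLambda}-\BA=(\widehat{\BLambda}-n\I)+(n\I-\BG\BG^{\top})-\sigma\BW$ yields
\[
\bv^{\top}(\widehat{\BLambda}-\BA)\bv \;\geq\; n - \|\widehat{\BLambda}-n\I\| - \sigma\|\BW\| - \|\BDelta\|_{\op}^{2} \;\gtrsim\; n - C\sigma\sqrt{nd}(\sqrt{d}+\sqrt{\log n}),
\]
which is strictly positive under the hypothesis $\sigma\lesssim\sqrt{n}/(\sqrt{d}(\sqrt{d}+\sqrt{\log n}))$. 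Combined with $(\widehat{\BLambda}-\BA)\widehat{\BG}=0$, this gives tightness and uniqueness, and the row-wise estimate furnishes the quoted error bound.

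\textbf{Main obstacle.} The principal difficulty is the row-wise leave-one-out analysis on the non-commutative group $\Od(d)$. In contrast to the $U(1)$ synchronization setting of~\cite{ZB18}, where a single phase per vertex makes the alignment between $\widehat{\BG}^{(m)}$ and $\widehat{\BG}$ a scalar, here the alignment is a $d\times d$ orthogonal matrix that must be propagated through the stability and concentration estimates via the distance $d_{F}(\cdot,\cdot)$ defined in~\eqref{def:df}. A careful bootstrap loop is needed to refine the $\ell_{2}$ Davis--Kahan baseline into the target $\sigma\sqrt{d/n}(\sqrt{d}+\sqrt{\log n})$ row-wise scale while keeping this rotational degree of freedom under control.
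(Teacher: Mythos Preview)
Your dual-certificate structure and its verification match the paper's Theorem~\ref{thm:cvx} and Theorem~\ref{thm:optlimit}. The substantive divergence is in how the row-wise control on the MLE is obtained. You propose to analyze $\widehat{\BG}$ directly: an $\ell_2$ baseline from optimality, then leave-one-out surrogates $\widehat{\BG}^{(m)}$ together with a ``quadratic-growth/stability'' argument to bound $d_F(\widehat{\BG},\widehat{\BG}^{(m)})$. The paper deliberately avoids this because the MLE has no explicit form. Instead, for the SDP tightness claim it initializes the generalized power method at $\BS^0=\BZ$, proves that $\PP_n\circ{\cal L}$ is a contraction on $\mathcal{N}_\eps\cap\mathcal{N}_{\xi,\infty}$ (Lemmas~\ref{lem:L}--\ref{lem:con}), runs leave-one-out on the \emph{iterates} $\BS^{\ell,m}$---which are explicit and independent of $\BW_m$ by construction---and only identifies the limit $\BS^\infty$ with the MLE a posteriori (Section~\ref{ss:main_proof}).

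The step you flag as the main obstacle is where your plan has a genuine gap. To bound $d_F(\widehat{\BG},\widehat{\BG}^{(m)})$ you need the MLE to depend stably on the data, but $\widehat{\BG}$ is defined only as the argmax of a nonconvex program; a quadratic-growth bound would require strong concavity of the objective on $\Od(d)^{\otimes n}/\Od(d)$ near the maximizer, which you have not established and which is not available a priori. The paper's workaround is that the fixed-point identity $\widehat{\BG}=\PP_n({\cal L}\widehat{\BG})$ makes the MLE the unique fixed point of a contraction, and the contraction constant supplies the stability estimate through the inductive recursion
\[
d_F(\BS^{\ell+1},\BS^{\ell+1,m}) \;\leq\; \tfrac{1}{2}\,d_F(\BS^{\ell},\BS^{\ell,m}) + O\bigl(\sigma\sqrt{d/n}\,(\sqrt{d}+\sqrt{\log n})\bigr)
\]
of Lemma~\ref{lem:exp1}. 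If you attempt to carry out your stability argument rigorously you will almost certainly reconstruct this contraction; the GPM iteration is not an algorithmic afterthought but precisely the mechanism by which the paper proves the implicit-function-type estimate your scheme needs.
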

Here we have several remarks on Theorem~\ref{thm:sdp}. The maximum likelihood estimator $\widehat{\BG}$ does not have an explicit form and is not equal to the ground truth $\BG$. The best known bound on $\sigma$ for the tightness before this result is $\sigma < c_1n^{1/4}/d^{3/4}$ for some small constant $c_1$ in~\cite{L20a}. The bound~\eqref{eq:sigma1} greatly improves the scaling on $n$ from $n^{1/4}$ to a near-optimal dependence $\sqrt{n}$. 
Note that in random matrix theory, it has been extensively studied when the top eigenvectors of $\BA$ are correlated with the planted signals (low-rank matrix), see e.g.~\cite{BN11,PWBM18}. In fact, once the noise level $\sigma$ reaches $\sigma > \sqrt{n/d}$, spectral methods fail to provide useful information about the planted signal. Moreover, spectral methods (PCA) achieve the optimal detection threshold under certain natural priors for the spike. 
Therefore, the optimal bound on $\sigma$ should scale like $\sqrt{n/d}$ up to a logarithmic factor and our bound differs from this threshold by a factor of $\sqrt{d}$.

Another relevant approach to solve the $\Od(d)$ synchronization uses spectral relaxation. Note that~\cite{L20b} implies that spectral relaxation provides a spectral estimator $\widehat{\BG}_{i,\spec}$ of $\BG_i$ which satisfies $\min_{\BQ\in \Od(d)} \max\| \widehat{\BG}_{i,\spec} - \BG_i\BQ\| \lesssim \sigma \sqrt{ n^{-1}d}$ with a near-optimal bound on $\sigma$, i.e., $\sigma \lesssim\sqrt{n}/(\sqrt{d} + \sqrt{\log n}).$
However, the solution $\widehat{\BG}_{\spec}$ given by spectral relaxation is often not equal to the maximum likelihood estimator (MLE). %In fact, the spectral estimator equals the intialization $\BS^0$ and generalized power method improves the estimation in the later updates.

The second result concerns establishing an efficient algorithm with guaranteed global convergence. Despite that the spectral methods fail to give the MLE directly, it can be used as an initialization step since $\BS^0 = \widehat{\BG}_{\spec}$ is very close to $\BG$ even in block-wise error bound. Using this idea, we have the following theorem which provides a convergence analysis of the generalized power method.

\begin{theorem}\label{thm:main}
Consider orthogonal group synchronization problem~\eqref{eq:model} with additive Gaussian noise. 
Suppose 
\[
\sigma < \frac{c_0 \sqrt{n}}{\sqrt{d} (\sqrt{d} + \sqrt{\log n})} 
\]
for some small constant $c_0$,
{then the spectral initialization gives $\BS^0$ which satisfies $d_F(\BS^0,\BG) \leq \eps\sqrt{nd}$ for $\eps<1/(32\sqrt{d})$ and 
the sequence $\{\BS^t\}_{t=0}^{\infty}$ from the generalized power method converges linearly to a limiting point $\BS^{\infty}$, i.e., 
\[
d_F(\BS^t, \BS^{\infty}) \leq 2^{-t} d_F(\BS^0,\BS^{\infty}), \quad \forall t\geq 0,
\]
with probability at least $1 - O(n^{-2})$. Moreover, $\BS^{\infty}(\BS^{\infty})^{\top}$ is the unique global maximizer of the SDP relaxation~\eqref{def:sdp} and $\BS^{\infty}$ equals the maximum likelihood estimator, i.e., the global maximizer to~\eqref{def:od}.}
\end{theorem}
Theorem~\ref{thm:main} is a generalization of the convergence analysis of the GPM on phase synchronization (U(1) group)~\cite{ZB18}.  { This extension is nontrivial since U(1) group considered in~\cite{ZB18} is commutative while $\Od(d)$ group is non-commutative for $d\geq 3$. The resulting technical difference can be seen later in Lemma~\ref{lem:L} and~\ref{lem:phase} which are used to show that the GPM behaves like a contraction mapping on the iterates.}

Now we discuss a few future directions. 
As discussed before, the dependence on $n$ is near-optimal but the scaling of $d$ remains suboptimal by $\sqrt{d}$.
In fact, the suboptimal dependence of performance bound on the rank $d$ is a general issue in the convergence analysis of nonconvex approaches in many signal processing and machine learning problems~\cite{CCFMY20,LMCC19}. It is an open problem to have an exact recovery guarantee of the MLE via convex/nonconvex approach provided that
\[
\sigma = O\left(\sqrt{\frac{n}{d}}\right)
\]
holds modulo a log-factor.  Resolving this open problem will potentially lead to a substantial improvement of performance guarantees in many examples of low-rank matrix recovery via nonconvex approach. 
For the $\Od(d)$ synchronization, one possible solution is to derive a performance bound in terms of $d_{2}(\BX,\BY): = \min_{\BQ\in\Od(d)}\|\BY - \BX\BQ\|$, which is significantly more challenging since the norm is no longer equipped with an inner product. This will create technical difficulties in proving that the Algorithm 1 is actually a contraction mapping within a basin of attraction, more precisely discussed in Lemma~\ref{lem:lx} and~\ref{lem:con}. 

{ Another future direction is about the optimization landscape associated with the Burer-Monteiro factorization of~\eqref{def:od}. Empirical studies in the author's earlier work~\cite{L20a} indicate that even without spectral initialization, the generalized power method still outputs the global maximizer to the SDP relaxation. Therefore, it would be very interesting to see how the optimization landscape changes with respect to the noise strength $\sigma$ or to justify why random initialization works well enough for this synchronization problem.}
For the algorithmic aspect, another commonly-used approach for the angular (phase) synchronization is the GPM with a certain stepsize~\cite{B16,LYS17}. Instead of running $\BS^{t+1} = \PP_n(\BA\BS^t)$, one wants to recover the group elements by
\[
\BS^{t+1} = \PP_n\left( (\I_{nd} + \alpha \BA/n)\BS^t\right)
\]
for some $\alpha>0$. In particular, as $\alpha\rightarrow\infty$, this algorithm becomes the generalized power method studied in this paper. It would be an interesting research problem to explore whether one can establish the global convergence of this proposed approach. 

Finally, note that our current manuscript only deals with the additive Gaussian noise in~\eqref{eq:model}. It is natural to consider general subgaussian noise. Examples include uniform corruption model~\cite{WS13}, i.e., $\BA_{ij} = \xi_{ij} \BG_i\BG_j^{\top} + (1-\xi_{ij})\BO_{ij}$, $i<j$ where $\xi_{ij}$ is an independent Bernoulli($p$) random variable and $\BO_{ij}$ is a random matrix sampled uniformly from the Haar measure on $\Od(d)$, 
and also general additive block-wise independent subgaussian noise, i.e., $\BA_{ij} = \BG_i\BG_j^{\top} + \sigma\BDelta_{ij}$ where $\{\BDelta_{ij}\}_{i\leq j}$ are centered independent random matrices with subgaussian tail on $\|\BDelta_{ij}\|$. The techniques introduced in this paper could be extended to both scenarios but a tight performance bound requires a more careful analysis of the operator norm of the noise matrix.

\section{Proofs}\label{s:proof}

This section is devoted to the proof of the main theorems. Our proof will follow a similar route in~\cite{ZB18}. However, since $\Od(d)$ is non-commutative, many technical parts need different treatments from those in~\cite{ZB18}. {Also we will use some supporting results in~\cite{L20b}}, especially regarding spectral initialization, without giving the detailed proofs. To make this manuscript more self-contained, the proof idea of each cited result will be discussed.  

{Instead of studying~\eqref{eq:model}, we will focus on a statistically equivalent model.}
Note that orthogonal transform will not change the distribution of Gaussian random matrix. It suffices to consider
\[
\BA = \BZ\BZ^{\top} + \sigma \BW \in\RR^{nd\times nd}
\]
after performing a simple change of variable to~\eqref{eq:model} where $\BZ^{\top} := [\I_d,\cdots,\I_d]\in\RR^{d\times nd}.$ 

Our main goal is to show the tightness of the SDP relaxation~\eqref{def:sdp}, i.e., the global maximizer to the SDP relaxation is of rank $d$, which corresponds to the global maximizer to~\eqref{def:od}. Similar to phase synchronization, the global maximizer to~\eqref{def:od} does not have an explicit form ($\BZ$ is not the global maximizer in general), which makes it harder to characterize the global maximum. To resolve this issue,~\cite{ZB18} developed a smart way of characterizing the global maximizer by treating it as the limiting point of the sequences generated by the generalized power method. We will show that the sequence $\{\BS^{\ell}\}_{\ell=0}^{\infty}$ always stays in the intersection of the following two sets in Section~\ref{ss:conmap} and~\ref{ss:basin}:
{
\begin{align*}
{\cal N}_{\eps} & := \left\{\BS\in \Od(d)^{\otimes n}: d_F(\BS,\BZ)  \leq \eps \sqrt{nd} \right\}, \\
{\cal N}_{\xi,\infty} & : = \left\{\BS\in \Od(d)^{\otimes n}: \left\| \BW_i^{\top}\BS\right\|_F \leq \xi\sqrt{nd}(\sqrt{d} + 4\sqrt{\log n}),~1\leq i\leq n \right\},
\end{align*} }
where { $\xi = 2+3\eps > 0$ is a constant determined in~\eqref{def:xi}}, $\eps$ is another constant which satisfies
\[
\eps \leq \frac{1}{32\sqrt{d}} \Longleftrightarrow \eps^2 d \leq \frac{1}{1024}
\]
and $\BW_i\in\RR^{nd\times d}$ is the $i$th block column of noise matrix $\BW = [\BW_1,\cdots,\BW_n].$ 

Note that both $\mathcal{N}_{\eps}$ and $\mathcal{N}_{\xi,\infty}$ are closed and compact sets in the Euclidean space. Then if we can prove that $\{\BS^{\ell}\}_{\ell=0}^{\infty}$ converges to a limiting point, the limit also stays in $\mathcal{N}_{\eps}\cap\mathcal{N}_{\xi,\infty}$. Moreover, we have the following theorem.

\begin{theorem}\label{thm:optlimit}
Suppose the generalized power iteration converges to a limiting point $\BS^{\infty}$  which satisfies $\BA\BS^{\infty}= \BLambda\BS^{\infty}$ where $\BLambda = \blkdiag(\BLambda_{11},\cdots,\BLambda_{nn})\succeq 0$ is a symmetric block-diagonal matrix with $\BLambda_{ii}\in\RR^{d\times d}$. Moreover, if $\BS^{\infty}$ is located in $\mathcal{N}_{\eps}\cap \mathcal{N}_{\infty}$, then $\BX^{\infty} =\BS^{\infty}(\BS^{\infty})^{\top} $ is the unique global maximizer to~\eqref{def:sdp} if
\begin{equation}\label{eq:sigma}
\sigma < \frac{ 1 - 3\eps^2 d/2}{ \xi + 3}  \cdot \frac{\sqrt{n}}{\sqrt{d}(\sqrt{d} + 4\sqrt{\log n}) }.
\end{equation}
\end{theorem}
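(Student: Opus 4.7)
The plan is to exhibit $(\BX^{\infty},\BLambda)$ as a primal--dual optimal pair for the SDP~\eqref{def:sdp}, so that $\BX^{\infty} = \BS^{\infty}(\BS^{\infty})^{\top}$ is the unique maximizer. Under the hypotheses, primal feasibility of $\BX^{\infty}$ is immediate from $\BS_i^{\infty}\in\Od(d)$; the dual variable $\BLambda$ is block diagonal and PSD by assumption; and complementary slackness $(\BLambda - \BA)\BX^{\infty}=0$ follows directly from $\BA\BS^{\infty} = \BLambda\BS^{\infty}$. Hence everything reduces to the dual feasibility $\BLambda - \BA \succeq 0$, with strict positivity off the column space of $\BS^{\infty}$ for uniqueness. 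Since $(\BLambda - \BA)\BS^{\infty}=0$ already annihilates the quadratic form on $\mathrm{col}(\BS^{\infty})$, I will restrict attention to $\bu \in \RR^{nd}$ with $(\BS^{\infty})^{\top}\bu = 0$ and aim to show
\[
\bu^{\top}(\BLambda - \BA)\bu \;=\; \sum_{i=1}^n \bu_i^{\top}\BLambda_{ii}\bu_i \;-\; \|\BZ^{\top}\bu\|^2 \;-\; \sigma\,\bu^{\top}\BW\bu \;>\; 0.
\]

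The crux of the argument is a sharp lower bound on $\lambda_{\min}(\BLambda_{ii})$. Setting $\BM := \BZ^{\top}\BS^{\infty} = \sum_j \BS_j^{\infty}$, the fixed-point identity and orthogonality of $\BS_i^{\infty}$ give $\lambda_{\min}(\BLambda_{ii}) = \sigma_{\min}(\BM + \sigma\BW_i^{\top}\BS^{\infty})$, so I need a tight bound on $\sigma_{\min}(\BM)$. A naive Cauchy--Schwarz estimate on $\BM - n\BQ^*$ would yield only $\sigma_{\min}(\BM) \geq n(1 - \eps\sqrt{d})$, losing a factor of $\sqrt{d}$. Instead I will use a nuclear-norm trade-off: with $\BQ^* := \PP(\BM)\in\Od(d)$, the identity $\|\BS^{\infty}-\BZ\BQ^*\|_F^2 = 2(nd - \|\BM\|_*)$ combined with $\BS^{\infty}\in\mathcal{N}_{\eps}$ gives $\|\BM\|_* \geq nd(1 - \eps^2/2)$; since each singular value satisfies $\sigma_i(\BM)\leq \|\BZ\|\|\BS^{\infty}\| = n$, summing $d-1$ of these upper bounds against the nuclear norm yields $\sigma_{\min}(\BM) \geq n(1 - \eps^2 d/2)$. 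Combining with $\|\BW_i^{\top}\BS^{\infty}\| \leq \xi\sqrt{nd}(\sqrt{d}+4\sqrt{\log n})$ from $\mathcal{N}_{\xi,\infty}$ produces
\[
\lambda_{\min}(\BLambda_{ii}) \;\geq\; n(1 - \eps^2 d/2) - \sigma\,\xi\sqrt{nd}(\sqrt{d} + 4\sqrt{\log n}).
\]

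For the remaining two terms, writing $\BS^{\infty} = \BZ\BQ^* + \BDelta$ with $\|\BDelta\|_F \leq \eps\sqrt{nd}$, the constraint $(\BS^{\infty})^{\top}\bu = 0$ yields $\BZ^{\top}\bu = -\BQ^*\BDelta^{\top}\bu$ and therefore $\|\BZ^{\top}\bu\|^2 \leq \|\BDelta\|^2\|\bu\|^2 \leq \eps^2 nd\,\|\bu\|^2$; a standard Wigner-type bound $\|\BW\|\leq 3\sqrt{nd}$ (holding with high probability) controls the noise term by $3\sigma\sqrt{nd}\,\|\bu\|^2$. Assembling these three pieces gives
\[
\bu^{\top}(\BLambda - \BA)\bu \;\geq\; \bigl[\,n(1 - 3\eps^2 d/2) \;-\; \sigma(\xi + 3)\sqrt{nd}(\sqrt{d} + 4\sqrt{\log n})\,\bigr]\|\bu\|^2,
\]
which is strictly positive exactly under the condition~\eqref{eq:sigma}. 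Uniqueness and the rank-$d$ conclusion then follow from a standard argument: the strict positivity implies $\ker(\BLambda - \BA) = \mathrm{col}(\BS^{\infty})$ is $d$-dimensional, so any SDP maximizer $\widetilde\BX$ has its range in this kernel by complementary slackness, hence $\widetilde\BX = \BS^{\infty}\BP(\BS^{\infty})^{\top}$ for some $\BP\succeq 0$; the diagonal constraints $\BS_i^{\infty}\BP(\BS_i^{\infty})^{\top} = \I_d$ together with $\BS_i^{\infty}\in\Od(d)$ force $\BP = \I_d$, so $\widetilde\BX = \BX^{\infty}$ and $\mathrm{rank}(\BX^{\infty})=d$.

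The main obstacle is the correct $d$-dependence in the lower bound on $\sigma_{\min}(\BM)$: upgrading the naive $n(1-\eps\sqrt{d})$ to $n(1-\eps^2 d/2)$ is what requires the nuclear-norm trade-off, and it is precisely this sharpening that drives the improved scaling $\sigma \lesssim \sqrt{n}/[\sqrt{d}(\sqrt{d}+\sqrt{\log n})]$ in Theorem~\ref{thm:sdp} over the previous $n^{1/4}/d^{3/4}$ bound. All other ingredients in the KKT verification are then essentially routine.
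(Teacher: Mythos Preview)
Your proof is correct and follows essentially the same route as the paper: lower-bound $\lambda_{\min}(\BLambda_{ii})$ via $\sigma_{\min}(\BZ^{\top}\BS^{\infty})\geq n(1-\eps^2 d/2)$ and the $\mathcal{N}_{\xi,\infty}$ hypothesis, upper-bound $\bu^{\top}\BA\bu$ on $(\BS^{\infty})^{\perp}$ via the decomposition $\BS^{\infty}=\BZ\BQ^*+\BDelta$ and the Wigner bound $\|\BW\|\leq 3\sqrt{nd}$, and combine. The paper packages the nuclear-norm bound on $\sigma_{\min}(\BZ^{\top}\BS^{\infty})$ as a separate lemma (Lemma~\ref{lem:sigmamin}) and the optimality/uniqueness criterion as Theorem~\ref{thm:cvx}, whereas you inline both arguments; the mathematical content is identical.

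One editorial correction: your closing paragraph misattributes the source of the improvement. The sharpening $n(1-\eps\sqrt d)\to n(1-\eps^2 d/2)$ is \emph{not} what drives the jump from $\sigma\lesssim n^{1/4}/d^{3/4}$ to $\sigma\lesssim\sqrt n/[\sqrt d(\sqrt d+\sqrt{\log n})]$; since $\eps\sqrt d\leq 1/32$ throughout, either bound on $\sigma_{\min}(\BM)$ would suffice for this theorem. The actual improvement over~\cite{L20a} comes from the leave-one-out analysis of the iterates (Lemmas~\ref{lem:exp1}--\ref{lem:exp2}), which is what furnishes the tight control $\|\BW_i^{\top}\BS^{\infty}\|_F\leq\xi\sqrt{nd}(\sqrt d+4\sqrt{\log n})$ placing $\BS^{\infty}$ in $\mathcal{N}_{\xi,\infty}$ in the first place. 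Theorem~\ref{thm:optlimit} itself is deterministic and consumes that hypothesis; it does not produce the scaling.
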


Therefore, our task becomes (a): to show that the sequence $\{\BS^{\ell}\}_{\ell=0}^{\infty}$ is located in the $\mathcal{N}_{\eps}\cap\mathcal{N}_{\xi,\infty}$ and converges to a limiting point; (b): to justify why the limiting point $\BS^{\infty}$ is the global maximizer. 

Now we first focus on completing the task (b) by providing a sufficient condition of the global optimality in~\eqref{def:sdp}. Then we will show that if the limiting point is a fixed point of the power iteration and is located in $\mathcal{N}_{\eps}\cap\mathcal{N}_{\xi,\infty}$, then it must satisfy the following global optimality condition. Finally, we will verify the assumptions  in Theorem~\ref{thm:optlimit} hold in Section~\ref{ss:main_proof}.

\begin{theorem}\label{thm:cvx}
The matrix $\BX = \BS\BS^{\top}$ with $\BS\in \Od(d)^{\otimes n}$ is a global optimal solution to~\eqref{def:sdp} if there exists a block-diagonal matrix $\BLambda\in\RR^{nd\times nd}$ such that
\begin{equation}\label{cond:opt}
\BA\BS = \BLambda \BS, \quad \BLambda - \BA\succeq 0.
\end{equation}
Moreover, if $\BLambda - \BA$ is of rank $(n-1)d$, then $\BX$ is the unique global maximizer. 
\end{theorem}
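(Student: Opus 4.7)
The plan is to run a direct primal-dual / Lagrangian certificate argument. First I would verify that $\BX = \BS\BS^{\top}$ is feasible for \eqref{def:sdp}: it is PSD as a Gram matrix, and its $i$th diagonal block is $\BS_i\BS_i^{\top} = \I_d$ because $\BS_i\in\Od(d)$. So the only real content is showing it achieves the optimal value.

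The key computation is a splitting trick using the block-diagonal structure of $\BLambda$. For any feasible $\BY$ (i.e.\ $\BY\succeq 0$ with $\BY_{ii} = \I_d$), I would write
\[
\lag \BA, \BY\rag = \lag \BLambda, \BY\rag - \lag \BLambda - \BA, \BY\rag.
\]
Since $\BLambda$ is block-diagonal, $\lag \BLambda, \BY\rag = \sum_i \lag \BLambda_{ii}, \BY_{ii}\rag = \sum_i \Tr(\BLambda_{ii}) = \Tr(\BLambda)$, independent of $\BY$. Since $\BLambda - \BA\succeq 0$ and $\BY\succeq 0$, the inner product $\lag \BLambda - \BA, \BY\rag \ge 0$ (both factors PSD), giving the universal upper bound $\lag \BA, \BY\rag \le \Tr(\BLambda)$. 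Plugging in $\BY = \BX = \BS\BS^{\top}$ the residual term becomes $\Tr(\BS^{\top}(\BLambda - \BA)\BS)$, and this vanishes by the stationarity hypothesis $\BA\BS = \BLambda\BS$. Hence $\lag \BA, \BX\rag = \Tr(\BLambda)$ matches the upper bound, so $\BX$ is a global maximizer.

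For uniqueness under the rank condition, suppose $\BY^\star$ is any other optimizer. Then equality in the inequality above forces $\lag \BLambda - \BA, \BY^\star\rag = 0$; combined with both matrices PSD this implies $(\BLambda - \BA)\BY^\star = 0$, i.e.\ the column space of $\BY^\star$ lies in $\Ker(\BLambda - \BA)$. By hypothesis this kernel is $d$-dimensional, and it already contains the $d$ linearly independent columns of $\BS$ (note $\BS^{\top}\BS = n\I_d$), so $\Ker(\BLambda - \BA) = \Ran(\BS)$. Thus $\BY^\star = \BS \BM \BS^{\top}$ for some PSD $d\times d$ matrix $\BM$. The constraint $\BY^\star_{ii} = \BS_i \BM \BS_i^{\top} = \I_d$ together with $\BS_i\in\Od(d)$ forces $\BM = \I_d$, hence $\BY^\star = \BX$, proving uniqueness.

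There is no real obstacle here; the statement is a textbook SDP duality certificate, and the non-commutativity of $\Od(d)$ plays no role because the argument only uses PSD-ness and the equality $\BA\BS = \BLambda\BS$. The only small point to be careful about is the uniqueness step, where one must use that $\BS$ has full column rank (guaranteed by $\BS_i\in\Od(d)$) to conclude $\Ran(\BS)$ exactly matches the $d$-dimensional kernel of $\BLambda - \BA$.
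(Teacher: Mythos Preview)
Your argument is correct and is precisely the ``standard routine of duality theory in convex optimization'' that the paper invokes; the paper does not write out its own proof of this theorem but simply cites \cite{L20a} and \cite{RCBL19}, so your certificate-based primal--dual argument is exactly what is intended. The feasibility check, the splitting $\lag \BA,\BY\rag = \Tr(\BLambda) - \lag \BLambda-\BA,\BY\rag$, and the uniqueness step via $\Ker(\BLambda-\BA)=\Ran(\BS)$ are all sound.
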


This optimality condition can be found in several places including~\cite[Proposition 5.1]{L20a} and~\cite[Theorem 7]{RCBL19}.
The derivation of Theorem~\ref{thm:cvx} follows from the standard routine of duality theory in convex optimization. {In fact, the block-diagonal matrix $\BLambda$ corresponds exactly to the dual variable in~\eqref{def:sdp}.} The first equation in~\eqref{cond:opt} seems relatively simple to achieve as we have seen that the fixed point of power iteration naturally satisfies this condition~\eqref{eq:fixed}.  The more challenging part arises from achieving $\BLambda - \BA\succeq 0$ which reduces to finding a tight lower bound for the smallest eigenvalue of $\BLambda.$ We will prove that a tight lower bound of $\lambda_{\min}(\BLambda)$ can be obtained if the limiting point is inside $\mathcal{N}_{\eps}\cap\mathcal{N}_{\xi,\infty}.$ Before proceeding to the proof of Theorem~\ref{thm:optlimit}, we introduce two useful supporting facts.

\begin{lemma}
Suppose $\BW\in\RR^{nd\times nd}$ is a symmetric Gaussian random matrix, then
\begin{equation}\label{eq:gaussnorm}
\|\BW\| \leq 3\sqrt{nd}
\end{equation}
with probability at least $1 - \exp(-nd/2).$
\end{lemma}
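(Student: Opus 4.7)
The plan is to combine Gaussian Lipschitz concentration with a sharp bound on $\mathbb{E}\|\BW\|$. Observe first that since $\BW_{ij}=\BW_{ji}^{\top}$ and the entries of each $\BW_{ij}$ are i.i.d.\ standard normals, the matrix $\BW$ is symmetric with independent upper-triangular scalar entries distributed as $N(0,1)$; in other words, $\BW$ is (essentially) a GOE matrix on $\RR^{nd}$, for which the asymptotic spectral edge is $\approx 2\sqrt{nd}$, so the claimed bound $3\sqrt{nd}$ leaves comfortable room.

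Next, I would invoke the standard estimate $\mathbb{E}\|\BW\|\leq 2\sqrt{nd}$, which follows from the moment method, or from Slepian's comparison inequality applied to the Gaussian process $\bu\mapsto \bu^{\top}\BW\bu$ on the unit sphere. To control fluctuations, I view $\|\BW\|$ as a function of the vector $\bu$ of independent entries (the diagonal entries together with the upper-triangular off-diagonal scalars). Since the spectral norm is $1$-Lipschitz with respect to $\|\cdot\|_F$ and each off-diagonal entry contributes twice to $\BW$, changing one component of $\bu$ by $\delta$ alters $\BW$ by at most $\sqrt{2}\,|\delta|$ in Frobenius norm. Consequently $\|\BW\|$ is a $\sqrt{2}$-Lipschitz function of a standard Gaussian vector, and the Sudakov--Tsirelson concentration inequality yields $\mathbb{P}\bigl(\|\BW\|\geq \mathbb{E}\|\BW\|+t\bigr)\leq \exp(-t^{2}/4)$ for every $t>0$. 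Plugging in $t=\sqrt{nd}$ already delivers the operator-norm bound at the scale required; to recover the precise probability $\exp(-nd/2)$ stated, one can sharpen the tail via the Davidson--Szarek-type inequality for symmetric Gaussian ensembles, whose comparison-based proof tightens the exponent from $t^{2}/4$ to $t^{2}/2$.

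A parallel route worth sketching uses an $\eps$-net: take a $\tfrac{1}{4}$-net $\MN$ of the unit sphere in $\RR^{nd}$ with $|\MN|\leq 9^{nd}$, use $\|\BW\|\leq 2\sup_{\bu\in\MN}|\bu^{\top}\BW\bu|$, note that each quadratic form is Gaussian of variance at most $2$, and apply a union bound. The hard part in this direction will be the tension between the cardinality $9^{nd}$ of the net and the variance constant, which prevents the naive $\eps$-net from reaching the sharp numerical constant $3$; any attempt to tune $\eps$ trades off one source of slackness against the other. This is where the concentration-plus-expectation approach is cleaner, since it decouples the mean and the deviation and matches the scaling dictated by Wigner's semicircle law.
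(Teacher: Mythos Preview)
The paper does not actually prove this lemma: it simply declares the bound a ``quite standard result in random matrix theory'' and cites \cite{BBS17,ZB18,V18}. Your sketch therefore goes well beyond what the paper offers, and your main outline---bound $\E\|\BW\|\le 2\sqrt{nd}$ via Slepian/Gordon comparison, then control the fluctuation by Gaussian Lipschitz concentration---is exactly the standard route behind those references. Your remark that the $\eps$-net approach cannot reach the constant $3$ is also accurate and a useful observation.

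One quantitative caveat. Your Lipschitz calculation correctly yields Lipschitz constant $\sqrt{2}$ and hence tail $\exp(-t^2/4)$, giving probability $\exp(-nd/4)$ at $t=\sqrt{nd}$. Your claim that a Davidson--Szarek argument sharpens the exponent to $t^2/2$ is optimistic for the \emph{symmetric} ensemble: the Davidson--Szarek bound $\exp(-t^2/2)$ is for rectangular Gaussian matrices, where the operator norm is genuinely $1$-Lipschitz in the independent entries; the factor $\sqrt{2}$ you identified from the symmetry $\BW_{ij}=\BW_{ji}^{\top}$ does not disappear under Slepian-type comparison alone. That said, this discrepancy is immaterial for the paper---downstream only the event $\|\BW\|\le 3\sqrt{nd}$ with probability $1-\exp(-c\,nd)$ for some absolute $c>0$ is ever used, and your $\exp(-nd/4)$ is equally sufficient. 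So your argument is complete for the purposes of the paper; only the specific exponent $nd/2$ stated in the lemma would require a separate (and not entirely trivial) justification.
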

This is a quite standard result in random matrix theory, which can be found in many places~\cite{BBS17,ZB18,V18}.

\begin{lemma}\label{lem:sigmamin}
For any $\BS\in\mathcal{N}_{\eps}$, all the singular values of $\BZ^{\top}\BS$ satisfy
\[
\left( 1 - \frac{\eps^2d}{2}\right)n \leq \sigma_{i}(\BZ^{\top}\BS)\leq n, \quad 1\leq i\leq d.
\]
\end{lemma}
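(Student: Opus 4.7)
The plan is to exploit two structural facts. First, both $\BS$ and $\BZ$ behave like scaled partial isometries: since every block $\BS_i\in \Od(d)$ satisfies $\BS_i^{\top}\BS_i = \I_d$, summing over $i$ gives $\BS^{\top}\BS = n\I_d$, and identically $\BZ^{\top}\BZ = n\I_d$. Second, the minimizer $\BQ^*\in\Od(d)$ in the definition of $\mathcal{N}_{\eps}$ is explicit: expanding $\|\BS - \BZ\BQ\|_F^2 = 2nd - 2\lag \BZ^{\top}\BS,\BQ\rag$ shows that the minimizer maximizes $\lag \BZ^{\top}\BS,\BQ\rag$, so $\BQ^* = \PP(\BZ^{\top}\BS)$. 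Consequently $\BQ^{*\top}\BZ^{\top}\BS$ is symmetric positive semidefinite, and its eigenvalues equal the singular values $\sigma_i(\BZ^{\top}\BS)$.

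The upper bound is immediate: $\sigma_1(\BZ^{\top}\BS)\leq \|\BZ\|\|\BS\| = n$, using the two norm identities above.

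For the lower bound, write $\BS = \BZ\BQ^* + \BE$ with $\|\BE\|_F \leq \eps\sqrt{nd}$. Compute
\[
\BQ^{*\top}\BZ^{\top}\BS = n\BQ^{*\top}\BQ^* + \BQ^{*\top}\BZ^{\top}\BE = n\I_d + \BQ^{*\top}\BZ^{\top}\BE,
\]
which by the previous paragraph is symmetric. Independently, expanding $\BS^{\top}\BS = n\I_d$ yields
\[
\BQ^{*\top}\BZ^{\top}\BE + \BE^{\top}\BZ\BQ^* = -\BE^{\top}\BE.
\]
Combining the symmetry of $\BQ^{*\top}\BZ^{\top}\BE$ with this identity pins it down exactly, not merely up to a bound: $\BQ^{*\top}\BZ^{\top}\BE = -\tfrac12 \BE^{\top}\BE$. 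Therefore
\[
\BQ^{*\top}\BZ^{\top}\BS = n\I_d - \tfrac12 \BE^{\top}\BE,
\]
whose eigenvalues are $n - \tfrac12 \sigma_i(\BE)^2$. Using $\sigma_{\max}(\BE)^2 \leq \|\BE\|_F^2 \leq \eps^2 nd$ gives $\sigma_{\min}(\BZ^{\top}\BS) \geq n(1 - \eps^2 d/2)$.

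There is no substantive obstacle; the only subtlety worth flagging is that one must use \emph{both} the optimality of $\BQ^*$ (giving the symmetry of $\BQ^{*\top}\BZ^{\top}\BE$) and the orthogonality constraint $\BS^{\top}\BS = n\I_d$. A naive Weyl-type bound $\sigma_{\min}(\BZ^{\top}\BS) \geq n - \|\BZ^{\top}\BE\| \geq n(1-\eps\sqrt{d})$ would be too weak to match the stated $n(1 - \eps^2 d/2)$; the improvement comes precisely from the cancellation forcing $\BQ^{*\top}\BZ^{\top}\BE$ to be quadratic rather than linear in $\BE$.
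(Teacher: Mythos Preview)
Your proof is correct, but it takes a somewhat different route from the paper's. The paper's argument is shorter and more direct: it simply expands
\[
d_F^2(\BS,\BZ) = \|\BS - \BZ\BQ\|_F^2 = 2nd - 2\|\BZ^{\top}\BS\|_* \leq \eps^2 nd,
\]
where $\BQ = \PP(\BZ^{\top}\BS)$, using $\lag \BZ^{\top}\BS,\BQ\rag = \|\BZ^{\top}\BS\|_*$. Since $0\leq \sigma_i(\BZ^{\top}\BS)\leq n$, the deficits $n-\sigma_i$ are all nonnegative and sum to at most $\eps^2 nd/2$, so each one is at most $\eps^2 nd/2$.

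Your approach instead derives the \emph{exact} matrix identity $\BQ^{*\top}\BZ^{\top}\BS = n\I_d - \tfrac12 \BE^{\top}\BE$ by combining the symmetry of $\BQ^{*\top}\BZ^{\top}\BS$ (from the polar decomposition) with the constraint $\BS^{\top}\BS=n\I_d$, and then bounds $\sigma_{\max}(\BE)^2$ by $\|\BE\|_F^2$. This is more structural and yields slightly more information: you recover each singular value of $\BZ^{\top}\BS$ exactly in terms of the singular values of $\BE$, rather than merely bounding their sum. The two arguments are in fact equivalent at the level of identities (your formula summed over $i$ reproduces the paper's nuclear-norm expansion), but the paper's version reaches the stated bound with less machinery.
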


\begin{proof}
Note that
\begin{align*}
d_F^2 (\BS,\BZ) & = \|\BS - \BZ\BQ\|_F^2 \\
& = \|\BS\|_F^2 + \|\BZ\|_F^2 - 2\lag \BZ^{\top}\BS, \BQ\rag \\
&  = 2(nd - \|\BZ^{\top}\BS\|_*) \leq \eps^2 nd
\end{align*}
where $\BQ = \PP(\BZ^{\top}\BS)$. Since $\|\BZ^{\top}\BS\| \leq n$ implies $0\leq \sigma_i(\BZ^{\top}\BS) \leq n$, $1\leq i\leq d$ , we have
\[
\frac{\eps^2nd}{2} \geq \sum_{i=1}^d (n - \sigma_i(\BZ^{\top}\BS)) \geq n - \sigma_{\min}(\BZ^{\top}\BS)\Longrightarrow \sigma_{\min}\left(\BZ^{\top}\BS\right) \geq \left(1 - \frac{\eps^2 d}{2}\right)n.
\]
\end{proof}

This bound is actually quite conservative due to the extra $d$ factor in the lower bound. It is also the reason why we obtain a sub-optimal scaling on $\sigma$ in the main theorem. { In fact, numerical experiments indicate that this additional $d$ factor is not tight. Therefore, we leave this sub-optimality issue to the future work.}
Now we are ready to prove Theorem~\ref{thm:optlimit}.

\begin{proof}[\bf Proof of Theorem~\ref{thm:optlimit}]
Suppose the fixed point $\BS^{\infty}\in\Od(d)^{\otimes n}$ satisfies $\BA\BS^{\infty} = \BLambda\BS^{\infty}$ for some positive semidefinite block-diagonal matrix $\BLambda = \blkdiag(\BLambda_{11},\cdots,\BLambda_{nn})\in\RR^{nd\times nd}$. Then we have
\[
\BLambda_{ii} = [\BA\BS^{\infty}]_i (\BS_i^{\infty})^{\top}  \succeq 0, \quad 1\leq i\leq n,
\]
{ where $[\BA\BS^{\infty}]_i = \BLambda_{ii}\BS_i^{\infty},1\leq i\leq n$. Remember our goal is to prove that $\BLambda -\BA\succeq 0$ and its $(d+1)$th smallest eigenvalue is strictly positive.}

We first obtain a lower bound of the smallest eigenvalue of $\BLambda_{ii}$ as follows:
\begin{align*}
\lambda_{\min}(\BLambda_{ii}) & = \sigma_{\min}(\BLambda_{ii})  = \sigma_{\min}\left( \sum_{j=1}^n \BA_{ij}\BS_j^{\infty}\right)\\ 
& = \sigma_{\min}\left( \BZ^{\top}\BS^{\infty} + \sigma\sum_{j=1}^n \BW_{ij}\BS_j^{\infty}\right) \\
\text{(Weyl's theorem)} \quad & \geq \sigma_{\min}\left( \BZ^{\top}\BS^{\infty}\right) - \sigma\left\| \BW_i^{\top}\BS^{\infty}\right\|_F \\
\text{(Lemma~\ref{lem:sigmamin})}\quad & \geq \left( 1 - \frac{\eps^2 d}{2}\right)n - \sigma\xi\sqrt{nd}(\sqrt{d} + 4\sqrt{\log n})
\end{align*}
where $\BLambda_{ii} \succeq 0$ is symmetric for all $1\leq i\leq n$ and $\BS^{\infty}\in \mathcal{N}_{\eps}$. Note that $(\BLambda -\BA)\BS^{\infty} = 0$ means at least  $d$ eigenvalues are 0. To show that $\BLambda - \BA\succeq 0$ with rank $(n-1)d$, {it suffices to ensure the $(d+1)$-th smallest eigenvalue of $\BLambda -\BA$ is strictly positive, i.e., $\bu^{\top}(\BLambda - \BA)\bu > 0$ for any nonzero unit vector $\bu \in\RR^{nd}$ such that $\bu^{\top}\BS^{\infty} = 0$.}  It holds that
\begin{align*}
\bu^{\top}\BA\bu & = \bu^{\top}\BZ\BZ^{\top}\bu + \sigma\bu^{\top}\BW\bu \\
& =  \|\bu^{\top}(\BS^{\infty}  - \BZ\BQ)\|^2 + \sigma\bu^{\top}\BW\bu \\
& \leq \|\BS^{\infty}  - \BZ\BQ\|^2 + \sigma \|\BW\| \\
(\text{Use}~\eqref{eq:gaussnorm})\quad& \leq \eps^2 nd + 3\sigma \sqrt{nd}
\end{align*}
where $\|\BW\| \leq 3\sqrt{nd}$, $\|\BS^{\infty} - \BZ\BQ\|\leq \|\BS^{\infty} - \BZ\BQ\|_F\leq \eps\sqrt{nd}$ and $\BQ = \PP(\BZ^{\top}\BS^{\infty})$. Thus if~\eqref{eq:sigma} holds, we have
\begin{align*}
\bu^{\top}(\BLambda - \BA)\bu & \geq \min_{1\leq i\leq n}\lambda_{\min}(\BLambda_{ii}) - \max_{\bu:\bu^{\top}\BS^{\infty} = 0,\|\bu\| = 1} \bu^{\top}\BA\bu \\
& \geq  \left( 1 - \frac{\eps^2 d}{2}\right)n - \sigma\xi\sqrt{nd}(\sqrt{d} + 4\sqrt{\log n}) - \left(\eps^2 nd + 3\sigma \sqrt{nd}\right) \\
& \geq \left( 1 - \frac{3\eps^2 d}{2}\right)n - \sigma (\xi + 3)\sqrt{nd} \left(\sqrt{d} + 4\sqrt{\log n} \right) > 0.
\end{align*}
Then we have $\BLambda - \BA \succeq 0$ with the $(d+1)$th smallest eigenvalue strictly positive and the bottom $d$ eigenvalues equal to zero.
\end{proof}

{\bf \underline{\underline{Roadmap}}:} Now we provide an overview of what to do in the next few sections. 
To prove the main results Theorem~\ref{thm:sdp} and~\ref{thm:main}, Theorem~\ref{thm:optlimit} indicates that we need to show the iterates $\{\BS^{\ell}\}_{\ell=0}^{\infty}$ stay in $\mathcal{N}_{\eps}\cap\mathcal{N}_{\xi,\infty}$ for all $\ell\in\mathbb{Z}_{\geq 0}$ and moreover, it will converge to a limiting point which satisfies $\BA\BS^{\infty} = \BLambda\BS^{\infty}$ for some positive semidefinite block-diagonal matrix $\BLambda\in\RR^{nd\times nd}.$ We will justify why $\{\BS^{\ell}\}_{\ell=0}^{\infty}$ stay in $\mathcal{N}_{\eps}\cap\mathcal{N}_{\xi,\infty}$ in Section~\ref{ss:conmap} and~\ref{ss:basin}. The existence and convergence of limiting point, and its global optimality will be given in Section~\ref{ss:main_proof}. In Section~\ref{ss:init}, we will prove why the spectral initialization provides a good starting point within $\mathcal{N}_{\eps}\cap\mathcal{N}_{\xi,\infty}$. All the following results hold with high probability under the assumptions in Theorem~\ref{thm:sdp} and~\ref{thm:main}.

\subsection{Contraction mapping on $\mathcal{N}_{\eps}\cap\mathcal{N}_{\xi,\infty}$}\label{ss:conmap}
The tricky part is that we do not know the explicit form of the MLE which is not necessarily equal to $\BZ$. In order to show the convergence of iterates to the MLE, we start with proving the power iteration essentially gives a contraction mapping on $\mathcal{N}_{\eps}\cap \mathcal{N}_{\xi,\infty}$. 
Define
\begin{equation}\label{def:L}
{\cal L} := n^{-1}\BA, \quad {\cal L}\BS := \frac{1}{n}\BA\BS = \frac{1}{n} (\BZ\BZ^{\top} + \sigma\BW)\BS \in\RR^{nd\times d}.
\end{equation}
Then the power iteration is equivalent to:
\[
\BS^{t+1} = \PP_n ({\cal L} \BS^t).
\]

Now, we will prove that $\PP_n\circ {\cal L}$ is a contraction mapping, by investigating ${\cal L}$ and $\PP_n$ individually. 
{ The main result in this subsection is Lemma~\ref{lem:con}.} 
Ideally, if we manage to prove that $\PP_n\circ {\cal L}$ maps $\mathcal{N}_{\eps}\cap \mathcal{N}_{\xi,\infty}$ to itself, then the proof is done which follows from the contraction mapping theorem. However, it remains unclear how to prove/disprove this statement. Instead, we will show that the iterates $\{\BS^{\ell}\}_{\ell=0}^{\infty}$ from the projected power method stay in the basin of attraction. We will address this issue in the next subsection.

\begin{lemma}\label{lem:L}
Suppose $\BX$ and $\BY\in \Od(d)^{\otimes n}$ with $d_F(\BX,\BZ) \leq \eps\sqrt{nd}$ and $d_F(\BY,\BZ)\leq \eps\sqrt{nd}$, it holds 
\[
d_F({\cal L}\BX, {\cal L}\BY)  \leq\rho'\cdot d_F(\BX,\BY),
\]
where
\[
\rho' = 2\eps\sqrt{d} + 3\sigma\sqrt{\frac{d}{n}}.
\]
Under the assumption
\[
\eps\sqrt{d} \leq \frac{1}{32}, \quad\sigma \leq\frac{1}{72}\sqrt{\frac{n}{d}},
\]
then
\[
\rho' \leq \frac{1}{9}.
\]
\end{lemma}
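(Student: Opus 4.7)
The plan is to bound $d_F(\mathcal{L}\BX, \mathcal{L}\BY)$ by choosing $\BQ^* = \PP(\BY^{\top}\BX)$ in the definition of $d_F$, so that for $\BD := \BX - \BY\BQ^*$ we have $\|\BD\|_F = d_F(\BX, \BY)$ and, since $\mathcal{L}$ commutes with right multiplication by $\BQ^*$,
\[
d_F(\mathcal{L}\BX, \mathcal{L}\BY) \leq \|\mathcal{L}\BX - (\mathcal{L}\BY)\BQ^*\|_F = \|\mathcal{L}\BD\|_F.
\]
Splitting $\mathcal{L} = n^{-1}\BZ\BZ^{\top} + n^{-1}\sigma\BW$, the noise part is handled by~\eqref{eq:gaussnorm}: $n^{-1}\sigma\|\BW\BD\|_F \leq 3\sigma\sqrt{d/n}\|\BD\|_F$, which is exactly the $3\sigma\sqrt{d/n}$ contribution to $\rho'$.

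The signal part is the crux. Since $\BZ^{\top}\BZ = n\BI_d$, the matrix $n^{-1}\BZ\BZ^{\top}$ is an orthogonal projector and $n^{-1}\|\BZ\BZ^{\top}\BD\|_F = n^{-1/2}\|\BZ^{\top}\BD\|_F$, so the goal reduces to proving $\|\BZ^{\top}\BD\|_F \leq 2\eps\sqrt{nd}\|\BD\|_F$. The trivial bound $\|\BZ^{\top}\BD\|_F \leq \sqrt{n}\|\BD\|_F$ is too weak, so I will exploit the first-order optimality of $\BQ^*$ together with the block-orthogonality of $\BX_i,\BY_i$. First, since $\BQ^* = \PP(\BY^{\top}\BX)$, the matrix $\BQ^{*\top}\BY^{\top}\BX$ is symmetric, and because $\BY^{\top}\BY = n\BI_d$ this makes $\BQ^{*\top}\BY^{\top}\BD = \BQ^{*\top}\BY^{\top}\BX - n\BI_d$ symmetric as well. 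Second, expanding $\BX_i^{\top}\BX_i = \BI_d$ via $\BX_i = \BY_i\BQ^* + \BD_i$ and summing over $i$ yields $\BQ^{*\top}\BY^{\top}\BD + (\BQ^{*\top}\BY^{\top}\BD)^{\top} = -\BD^{\top}\BD$. Combining the two gives the key identity
\[
\BY^{\top}\BD = -\tfrac{1}{2}\BQ^*\BD^{\top}\BD, \qquad \|\BY^{\top}\BD\|_F \leq \tfrac{1}{2}\|\BD\|_F^2.
\]

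To transfer from $\BY^{\top}\BD$ to $\BZ^{\top}\BD$, set $\BQ_2 = \PP(\BZ^{\top}\BY)$ and $\BE_Y = \BY - \BZ\BQ_2$, so that $\|\BE_Y\|_F = d_F(\BY, \BZ) \leq \eps\sqrt{nd}$. Writing $\BZ^{\top} = \BQ_2(\BY^{\top} - \BE_Y^{\top})$ gives
\[
\|\BZ^{\top}\BD\|_F \leq \|\BY^{\top}\BD\|_F + \|\BE_Y\|\|\BD\|_F \leq \tfrac{1}{2}\|\BD\|_F^2 + \eps\sqrt{nd}\|\BD\|_F.
\]
The triangle inequality $\|\BD\|_F \leq d_F(\BX, \BZ) + d_F(\BZ, \BY) \leq 2\eps\sqrt{nd}$ makes the first summand bounded by $\eps\sqrt{nd}\|\BD\|_F$ as well, hence $\|\BZ^{\top}\BD\|_F \leq 2\eps\sqrt{nd}\|\BD\|_F$ and $n^{-1}\|\BZ\BZ^{\top}\BD\|_F \leq 2\eps\sqrt{d}\|\BD\|_F$. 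Combining signal and noise parts yields $\|\mathcal{L}\BD\|_F \leq \rho'\|\BD\|_F$, and the numerical bound $\rho' \leq 1/9$ follows from plugging in $\eps\sqrt{d}\leq 1/32$ and $\sigma \leq \sqrt{n/d}/72$. The main obstacle is the quadratic-in-$\BD$ identity $\|\BY^{\top}\BD\|_F \leq \tfrac{1}{2}\|\BD\|_F^2$: this improvement over the trivial bound $\sqrt{n}\|\BD\|_F$ is what lets a factor $\eps\sqrt{nd}$ appear in front of $\|\BD\|_F$ and makes contraction possible. Its derivation requires simultaneously using the orthogonality of each block and the first-order optimality of $\BQ^*$ on the non-commutative group $\Od(d)$, which is the step where the non-commutativity forces a different treatment than in the $U(1)$ case of~\cite{ZB18}.
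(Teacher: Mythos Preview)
Your proof is correct and shares the same architecture as the paper's: the same alignment $\BQ=\PP(\BY^{\top}\BX)$, the same signal--noise split of $\mathcal L\BD$, the same noise estimate via $\|\BW\|\le 3\sqrt{nd}$, and the same reduction $\|\BZ^{\top}\BD\|_F\le\|\BY^{\top}\BD\|_F+\|\BE_Y\|\,\|\BD\|_F$ with $\|\BE_Y\|_F\le\eps\sqrt{nd}$. The one substantive difference is how you control $\|\BY^{\top}\BD\|_F$ (equivalently $\|\BY^{\top}\BX-n\BQ\|_F$). The paper computes $\|\BY^{\top}\BX-n\BQ\|_F^2=\sum_i(n-\sigma_i(\BY^{\top}\BX))^2$ and bounds this by $(n-\sigma_{\min}(\BY^{\top}\BX))\,(nd-\|\BY^{\top}\BX\|_*)$, then invokes $n-\sigma_{\min}(\BY^{\top}\BX)\le 2\eps^2 nd$ and $\|\BD\|_F^2=2(nd-\|\BY^{\top}\BX\|_*)$. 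You instead combine the symmetry of $\BQ^{*\top}\BY^{\top}\BX$ (polar decomposition) with the block identities $\BX_i^{\top}\BX_i=\BI_d$ to obtain the exact relation $\BY^{\top}\BD=-\tfrac12\BQ^{*}\BD^{\top}\BD$, giving $\|\BY^{\top}\BD\|_F\le\tfrac12\|\BD\|_F^2$. After plugging in $\|\BD\|_F\le 2\eps\sqrt{nd}$ both routes yield the identical bound $\|\BY^{\top}\BD\|_F\le\eps\sqrt{nd}\,\|\BD\|_F$. Your identity is arguably cleaner and makes the quadratic smallness in $\BD$ explicit, while the paper's singular-value argument ties in more directly with Lemma~\ref{lem:sigmamin}; the end result and the constant $\rho'=2\eps\sqrt{d}+3\sigma\sqrt{d/n}$ are the same.
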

We give the proof of this lemma at the end of this subsection. {Note that Lemma~\ref{lem:L} generalizes Lemma 12 in~\cite{ZB18} to the non-commutative scenario. The major difficulty in proving Lemma~\ref{lem:L} comes from estimating the error $\|\BZ^{\top}(\BX-\BY\BQ)\|$ for $\BQ =\PP(\BY^{\top}\BX)$, which is very different from the proof of Lemma 12 in~\cite{ZB18}.} 
Next, we turn our focus to $\PP_n(\cdot)$.  
\begin{lemma}\label{lem:phase}
For two invertible matrices $\BX\in\RR^{d\times d}$ and $\BY\in\RR^{d\times d}$, we have
\begin{align*}
\|\PP(\BX)  - \PP(\BY)\|_F & \leq \frac{2\|\BX - \BY\|_F}{\sigma_{\min}(\BX) + \sigma_{\min}(\BY)}.
\end{align*}
If only $\sigma_{\min}(\BX)$ (or $\sigma_{\min}(\BY)$) is known, then we can use
\[
\|\PP(\BX)  - \PP(\BY)\|_F \leq 2\sigma^{-1}_{\min}(\BX)\|\BX - \BY\|_F.
\]
\end{lemma}
{Lemma~\ref{lem:phase} is a generalization of Lemma 13 in~\cite{ZB18} from $d\leq 2$ to $d\geq 3$, which says the matrix sign function (generalized phase) in~\eqref{def:P} is a Lipschitz continuous function in the region where the input matrices have their smallest singular values bounded away from 0.
We defer the proof of Lemma~\ref{lem:phase} to the end of this section. The main idea of the proof follows from applying the famous Davis-Kahan theorem to the augmented matrices of $\BX$ and $\BY$.} 

Now it seems combining Lemma~\ref{lem:L} with~\ref{lem:phase} justifies the contraction of $\PP_n\circ {\cal L}$. The only missing piece is to ensure that each block $[{\cal L}\BX]_i$ has a lower bound on its smallest singular value so that we can apply Lemma~\ref{lem:phase}.

\begin{lemma}\label{lem:lx}
Suppose $ \BX\in{\cal N}_{\eps} \cap {\cal N}_{\xi,\infty}$, then 
\[
\sigma_{\min}\left( [{\cal L}\BX]_i\right) \geq 1 - \eps^2 d
\]
provided that
\[
\sigma \leq \frac{\eps^2 d}{2\xi }\cdot \frac{ \sqrt{n}}{\sqrt{d} (\sqrt{d} + \sqrt{\log n})}.
\]
\end{lemma}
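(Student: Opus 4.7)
\textbf{Proof plan for Lemma~\ref{lem:lx}.} The strategy is simply to decompose $[\mathcal{L}\BX]_i$ into a ``signal'' part coming from $\BZ\BZ^{\top}$ and a ``noise'' part coming from $\sigma\BW$, then apply Weyl's inequality together with the two structural bounds that come for free from the hypothesis $\BX\in\mathcal{N}_{\eps}\cap\mathcal{N}_{\xi,\infty}$.

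First I would write out the $i$-th block of $\mathcal{L}\BX$ explicitly. Since $\BZ^{\top}=[\I_d,\dots,\I_d]$, every block row of $\BZ\BZ^{\top}$ is $[\I_d,\dots,\I_d]$, so $[\BZ\BZ^{\top}\BX]_i=\sum_{j=1}^n \BX_j=\BZ^{\top}\BX$ for every $i$. Similarly, by the symmetry $\BW_{ij}=\BW_{ji}^{\top}$, $[\BW\BX]_i=\sum_j \BW_{ij}\BX_j = \BW_i^{\top}\BX$ where $\BW_i$ is the $i$-th block column of $\BW$. Hence
\[
[\mathcal{L}\BX]_i \;=\; \frac{1}{n}\,\BZ^{\top}\BX \;+\; \frac{\sigma}{n}\,\BW_i^{\top}\BX.
\]

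Next, I would apply Weyl's inequality for singular values, which gives
\[
\sigma_{\min}\bigl([\mathcal{L}\BX]_i\bigr) \;\geq\; \frac{1}{n}\,\sigma_{\min}(\BZ^{\top}\BX) \;-\; \frac{\sigma}{n}\,\bigl\|\BW_i^{\top}\BX\bigr\|.
\]
For the first term I would invoke Lemma~\ref{lem:sigmamin}: since $\BX\in\mathcal{N}_{\eps}$, $\sigma_{\min}(\BZ^{\top}\BX)\geq(1-\eps^2 d/2)\,n$. For the second term I would use the definition of $\mathcal{N}_{\xi,\infty}$ together with $\|\cdot\|\leq\|\cdot\|_F$, giving $\|\BW_i^{\top}\BX\|\leq \xi\sqrt{nd}(\sqrt{d}+4\sqrt{\log n})$.

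Combining the two, the right-hand side is at least
\[
\Bigl(1-\tfrac{\eps^2 d}{2}\Bigr) \;-\; \sigma\xi\sqrt{\tfrac{d}{n}}\bigl(\sqrt{d}+4\sqrt{\log n}\bigr),
\]
and the prescribed upper bound on $\sigma$ is precisely what is needed to make the subtracted term at most $\eps^2 d/2$, yielding the desired $\sigma_{\min}([\mathcal{L}\BX]_i)\geq 1-\eps^2 d$. There is no real obstacle here: the whole argument is a one-line Weyl bound, and the only nontrivial ingredients—the lower bound on $\sigma_{\min}(\BZ^{\top}\BX)$ and the block-wise noise bound—are supplied directly by the definitions of the two sets and by Lemma~\ref{lem:sigmamin}. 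The slightly loose $\sqrt{\log n}$ (as opposed to $4\sqrt{\log n}$) in the statement is absorbed into the constant $\xi$.
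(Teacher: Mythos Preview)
Your proposal is correct and matches the paper's own proof essentially line for line: the paper writes $[\mathcal{L}\BX]_i = \tfrac{1}{n}\BZ^{\top}\BX + \tfrac{\sigma}{n}\BW_i^{\top}\BX$, applies the Weyl-type bound $\sigma_{\min}([\mathcal{L}\BX]_i)\geq \tfrac{1}{n}\sigma_{\min}(\BZ^{\top}\BX)-\tfrac{\sigma}{n}\|\BW_i^{\top}\BX\|$, and then invokes Lemma~\ref{lem:sigmamin} and the $\mathcal{N}_{\xi,\infty}$ bound exactly as you describe. Your observation about the $\sqrt{\log n}$ versus $4\sqrt{\log n}$ discrepancy is also spot on; the paper silently uses $4\sqrt{\log n}$ in the computation.
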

One can easily realize that the dependence of $\sigma$ on $d$ is sub-optimal by a factor of $\sqrt{d}.$ This is where the bottleneck is.

\begin{lemma}[Contraction mapping]\label{lem:con}
Suppose $\BX$ and $\BY$ are in $\mathcal{N}_{\eps}\cap\mathcal{N}_{\xi,\infty}$, then we have
\begin{align*}
d_F(\PP_n({\cal L}\BX), \PP_n({\cal L}\BY)) & \leq \frac{2}{1-\eps^2d} \cdot d_F( {\cal L}\BX, {\cal L}\BY) \\
& \leq \frac{2}{1-\eps^2d} \left(2\eps\sqrt{d} + 3\sigma\sqrt{\frac{d}{n}} \right)\cdot d_F(\BX,\BY).
\end{align*}
We define
\[
\rho =  \frac{2}{1-\eps^2d} \left(2\eps\sqrt{d} + 3\sigma\sqrt{\frac{d}{n}} \right).
\]
Under
\[
\eps\sqrt{d} \leq \frac{1}{32}, \quad\sigma \leq \min\left\{ \frac{1}{72}\sqrt{\frac{n}{d}},~ \frac{\eps^2d }{2\xi }\cdot  \frac{\sqrt{n}}{\sqrt{d}(\sqrt{d} + 4\sqrt{\log n})} \right\},
\]
then $\rho < 1/2.$

\end{lemma}

\begin{proof}[\bf Proof of Lemma~\ref{lem:lx} and~\ref{lem:con}]
Over $\BX\in \mathcal{N} = {\cal N}_{\eps} \cap {\cal N}_{\xi,\infty}$, we have ${\cal L}\BX$ and its $i$th block as
\[
{\cal L}\BX = \frac{1}{n} (\BZ \BZ^{\top} + \sigma\BW)\BX, \quad
[{\cal L}\BX ]_i = \frac{1}{n}\BZ^{\top}\BX + \frac{\sigma}{n}\BW_i^{\top}\BX.
\]
Note that if $d_F(\BX,\BZ)\leq \eps\sqrt{nd}$, $
\sigma_{\min}(\BZ^{\top}\BX) \geq \left(1 - \eps^2d/2\right)n$ holds.
As a result,
\begin{align*}
\sigma_{\min}([{\cal L}\BX]_i) & \geq \frac{1}{n}\sigma_{\min}(\BZ^{\top}\BX) - \frac{\sigma}{n} \left\| \BW_i^{\top}\BX\right\| \\
(\text{Lemma}~\ref{lem:sigmamin},~\BX\in\mathcal{N}_{\xi,\infty})\quad & {\geq } \left( 1 - \frac{\eps^2d}{2}\right)  - \frac{\sigma}{n}\cdot \xi\sqrt{nd}(\sqrt{d} + 4\sqrt{\log n})  \\
&  \geq 1 - \eps^2d
\end{align*}
when 
\[
\sigma \leq \frac{\eps^2 \sqrt{nd}}{2\xi (\sqrt{d} + 4\sqrt{\log n})}.
\]
Similarly, we have $\sigma_{\min}([{\cal L}\BY]_i) \geq 1-\eps^2 d.$

Now we are ready to obtain the contraction property of $\PP_n\circ {\cal L}$ on $\mathcal{N}_{\eps}\cap\mathcal{N}_{\xi,\infty}$. We pick $\BQ$ as the orthogonal matrix $\BQ$ which minimizes 
\[
d_F({\cal L}\BX, {\cal L}\BY) = \min_{\BQ\in \Od(d)} \|{\cal L}\BX - {\cal L}\BY\BQ \|_F.
\]
Then
\begin{align*}
d_F(\PP_n({\cal L}\BX), \PP_n({\cal L}\BY)) &\leq  \|\PP_n({\cal L}\BX) - \PP_n({\cal L}\BY\BQ)\|_F\\
 & =  \sqrt{ \sum_{i=1}^n \| \PP([{\cal L}\BX]_i) - \PP([{\cal L}\BY\BQ]_i) \|_F^2} \\
\text{(Lemma~\ref{lem:phase})} \quad& \leq \frac{2}{1-\eps^2 d} \sqrt{ \sum_{i=1}^n \| [{\cal L}\BX]_i - [{\cal L}\BY\BQ]_i) \|_F^2 } \\
& = \frac{2}{1-\eps^2 d}  \| {\cal L}\BX -{\cal L} \BY\BQ \|_F \\
& = \frac{2}{1-\eps^2 d} d_F({\cal L}\BX,{\cal L}\BY)\\
\text{(Lemma~\ref{lem:L})} \quad &  \leq  \frac{2}{1-\eps^2 d}\left(2\eps\sqrt{d} + 3\sigma\sqrt{\frac{d}{n}} \right) d_F(\BX,\BY).
\end{align*}

\end{proof}

{Now we present the proof of Lemma~\ref{lem:L} and~\ref{lem:phase} which are used to justify Lemma~\ref{lem:con}.}

\begin{proof}[\bf Proof of Lemma~\ref{lem:L}]
Under $d_F(\BX,\BZ) \leq \eps\sqrt{nd}$ and $d_F(\BY,\BZ)\leq \eps\sqrt{nd}$, it holds that
\[
\sigma_{\min}(\BX^{\top}\BZ) \geq \left(1 - \frac{\eps^2d}{2} \right)n, \quad \sigma_{\min}(\BY^{\top}\BZ) \geq \left(1 - \frac{\eps^2d}{2} \right)n.
\]
Also by triangle inequality and Lemma~\ref{lem:sigmamin}, we have
\[
d_F(\BX,\BY)\leq 2\eps\sqrt{nd}, \quad \sigma_{\min}(\BX^{\top}\BY) \geq \left(1 - 2\eps^2 d\right)n.
\]

Now we set $\BQ = \PP(\BY^{\top}\BX)$ which minimizes $\|\BX - \BY\BR\|_F$ over $\BR\in\Od(d)$.
\begin{align*}
d_F({\cal L}\BX, {\cal L}\BY) & \leq \frac{1}{n}\| \BA(\BX - \BY\BQ)  \|_F \\
& \leq \frac{1}{n} \left( \left\|\BZ\BZ^{\top}(\BX - \BY\BQ)\right\|_F + \sigma\|\BW (\BX - \BY\BQ)\|_F \right) \\
& \leq \frac{1}{\sqrt{n}} \|\BZ^{\top} (\BX - \BY\BQ)\|_F + 3\sigma\sqrt{\frac{d}{n}} \cdot d_F(\BX,\BY)
\end{align*}
where $\BA = \BZ\BZ^{\top} + \sigma\BW$, $\|\BZ\| = \sqrt{n}$, and $\|\BW\|\leq 3\sqrt{nd}.$
%Note that for $\sigma \leq \eta\sqrt{\frac{n}{d}}$ for some $\eta > 0$, the second term above is small. 

The more tricky part is the estimation of $\| \BZ^{\top}(\BX - \BY\BQ) \|_F.$
Let $\BQ = \PP(\BY^{\top}\BX)$, $\BQ_{\BX} = \PP(\BZ^{\top}\BX)$, and $\BQ_{\BY} = \PP(\BZ^{\top}\BY)$. Then
\begin{align*}
\|\BZ^{\top} (\BX - \BY\BQ) \|_F & \leq \|( \BY - \BZ\BQ_{\BY})^{\top} (\BX - \BY\BQ ) \|_F  + \| \BY^{\top} (\BX - \BY\BQ) \|_F  \\
& \leq \eps\sqrt{nd} \|\BX - \BY\BQ \|_F + \| \BY^{\top}\BX - n\BQ \|_F
\end{align*}
where $\|\BY - \BZ\BQ_{\BY}\| \leq \|\BY - \BZ\BQ_{\BY}\|_F\leq \eps\sqrt{nd}.$
For the second term,  {
\begin{align*}
 \|\BY^{\top}\BX - n\BQ\|_F^2 & = \|\BY^{\top}\BX\|_F^2 - 2n\|\BY^{\top}\BX\|_* + n^2d \\
& =\sum_{i=1}^d (n - \sigma_i(\BY^{\top}\BX))^2 \\
& \leq (n - \sigma_{\min}(\BY^{\top}\BX))\cdot (nd - \|\BY^{\top}\BX\|_*) \\
& \leq \eps^2 nd \|\BX - \BY\BQ\|_F^2
\end{align*}}
where $0\leq n - \sigma_{\min}(\BY^{\top}\BX) \leq 2\eps^2 nd$ follows from Lemma~\ref{lem:sigmamin} and $\|\BX - \BY\BQ\|_F^2 = 2(nd - \|\BY^{\top}\BX\|_*)$.
As a result,
\[
\|\BZ^{\top} (\BX - \BY\BQ) \|_F  \leq 2\eps \sqrt{nd} \|\BX - \BY\BQ \|_F = 2\eps\sqrt{nd}\cdot d_F(\BX,\BY).
\]
To sum up, we have 
\begin{align*}
d_F({\cal L}\BX, {\cal L}\BY) & \leq \frac{1}{\sqrt{n}}\cdot 2\eps\sqrt{nd}\cdot d_F(\BX,\BY) + 3\sigma\sqrt{\frac{d}{n}}\cdot d_F(\BX,\BY)\\
& \leq \left(2\eps\sqrt{d} + 3\sigma\sqrt{\frac{d}{n}} \right)\cdot d_F(\BX,\BY).
\end{align*}
\end{proof}

Now we present the proof of Lemma~\ref{lem:phase} which relies on the famous Davis-Kahan theorem. 
\begin{theorem}[Davis-Kahan theorem~\cite{DK70}]\label{thm:DK}
Suppose $\BPsi$ and $\BPsi_E$ are the top $d$ normalized eigenvectors of $\BX$ and $\BX_{\BE} = \BX + \BE$ respectively, then
\[
\min_{\BQ\in\Od(d)}\| \BPsi_{\BE} - \BPsi\BQ \| \leq \sqrt{2} \| (\I - \BPsi\BPsi^{\top})\BPsi_E \| \leq \frac{\sqrt{2}\|\BE\BPsi_E\|}{\delta}
 %\| (\I - \BPsi\BPsi^{\top})\BPsi_E \| \leq \frac{\|\BE\BPsi_E\|}{\delta}
\]
where $\BPsi^{\top}\BPsi = \BPsi_E^{\top}\BPsi_E = \I_d$ and
$\delta =  \lambda_{d}(\BX) - \lambda_{d+1}(\BX_{\BE})>0$ is the spectral gap. The bound also holds when replacing $\|\cdot\|$ by $\|\cdot\|_F$.
\end{theorem}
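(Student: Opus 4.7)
The plan is to establish the two inequalities separately, combining a sine-theta calculation for the first bound with a Sylvester equation argument for the second.

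First I would handle the inequality $\min_{\BQ \in \Od(d)} \|\BPsi_E - \BPsi\BQ\| \leq \sqrt{2}\|(\I - \BPsi\BPsi^{\top})\BPsi_E\|$ by choosing the optimal $\BQ^* = \PP(\BPsi^{\top}\BPsi_E) = \BU\BV^{\top}$, where $\BPsi^{\top}\BPsi_E = \BU\BSigma\BV^{\top}$ is the SVD. The diagonal entries of $\BSigma$ are the cosines of the principal angles $\theta_1 \leq \cdots \leq \theta_d$ between the column spaces of $\BPsi$ and $\BPsi_E$. A direct expansion gives $(\BPsi_E - \BPsi\BQ^*)^{\top}(\BPsi_E - \BPsi\BQ^*) = 2\BV(\I - \BSigma)\BV^{\top}$, so $\|\BPsi_E - \BPsi\BQ^*\|^2 = 2(1 - \cos\theta_d)$. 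The half-angle identity $1 - \cos\theta = 2\sin^2(\theta/2)$ combined with $\sin(\theta/2) \leq \sin\theta$ for $\theta \in [0,\pi/2]$ yields $2(1 - \cos\theta_d) \leq 2\sin^2\theta_d$. Since the singular values of $\BPsi_{\perp}^{\top}\BPsi_E$ coincide with $\sin\theta_i$, we have $\sin\theta_d = \|\BPsi_{\perp}^{\top}\BPsi_E\| = \|(\I - \BPsi\BPsi^{\top})\BPsi_E\|$, completing the first inequality. The assumption $\theta_d \leq \pi/2$ is harmless in the perturbation regime of interest where $\|\BE\| \ll \delta$.

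For the second inequality, I would deploy a Sylvester equation argument. Write the spectral decomposition $\BX = \BPsi\BLambda_1\BPsi^{\top} + \BPsi_{\perp}\widetilde{\BLambda}\BPsi_{\perp}^{\top}$, where $\BLambda_1 = \diag(\lambda_1(\BX), \ldots, \lambda_d(\BX))$ and the diagonal entries of $\widetilde{\BLambda}$ lie in $\{\lambda_{d+1}(\BX), \ldots\}$. Let $\BLambda_2$ collect the top-$d$ eigenvalues of $\BX_E$. Starting from $\BX_E\BPsi_E = \BPsi_E\BLambda_2$ I would write $\BX\BPsi_E - \BPsi_E\BLambda_2 = -\BE\BPsi_E$, then left-multiply by $\BPsi_{\perp}^{\top}$ and use $\BPsi_{\perp}^{\top}\BX = \widetilde{\BLambda}\BPsi_{\perp}^{\top}$ to obtain the Sylvester equation
\[
\widetilde{\BLambda}(\BPsi_{\perp}^{\top}\BPsi_E) - (\BPsi_{\perp}^{\top}\BPsi_E)\BLambda_2 = -\BPsi_{\perp}^{\top}\BE\BPsi_E.
\]
A standard Sylvester operator-norm estimate with spectral gap at least $\delta$ then gives $\|\BPsi_{\perp}^{\top}\BPsi_E\| \leq \|\BPsi_{\perp}^{\top}\BE\BPsi_E\|/\delta \leq \|\BE\BPsi_E\|/\delta$, and the identity $(\I - \BPsi\BPsi^{\top})\BPsi_E = \BPsi_{\perp}\BPsi_{\perp}^{\top}\BPsi_E$ preserves the operator (and Frobenius) norm, yielding the claim. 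The symmetric variant with $\|\BE\BPsi\|$ in place of $\|\BE\BPsi_E\|$, paired with the symmetric gap expression, can be obtained by the dual manipulation starting from $\BX\BPsi = \BPsi\BLambda_1$.

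The main obstacle is giving a clean justification for the operator-norm Sylvester bound $\|X\| \leq \|C\|/\delta$ when $AX - XB = C$ with $\lambda_{\min}(A) - \lambda_{\max}(B) \geq \delta > 0$; this is not as immediate as the Frobenius-norm case. I would handle it by diagonalizing the symmetric matrices $A$ and $B$, passing to their eigenbases, and noting that $X$ admits the integral representation $X = \int_0^\infty e^{-sA} C\, e^{sB}\, \diff s$, which converges absolutely with $\|X\| \leq \|C\|\int_0^\infty e^{-s\delta}\diff s = \|C\|/\delta$. The Frobenius-norm version follows even more directly from the diagonal-entry formula $(\BU^{\top}X\BV)_{ij} = (\BU^{\top}C\BV)_{ij}/(a_i - b_j)$, so both norms of the inequality are handled uniformly.
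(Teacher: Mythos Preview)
The paper does not supply a proof of this theorem; it is quoted as the classical Davis--Kahan result and cited to~\cite{DK70}. So there is no ``paper's own proof'' to compare against, and your proposal stands on its own.

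Your argument is the standard route and is essentially correct, but there is one mismatch you should not gloss over. In your Sylvester equation
\[
\widetilde{\BLambda}(\BPsi_{\perp}^{\top}\BPsi_E) - (\BPsi_{\perp}^{\top}\BPsi_E)\BLambda_2 = -\BPsi_{\perp}^{\top}\BE\BPsi_E,
\]
the spectrum of $\widetilde{\BLambda}$ is bounded above by $\lambda_{d+1}(\BX)$ and the spectrum of $\BLambda_2$ is bounded below by $\lambda_d(\BX_{\BE})$. Hence the separation your argument actually delivers is $\lambda_d(\BX_{\BE}) - \lambda_{d+1}(\BX)$, \emph{not} the $\delta = \lambda_d(\BX) - \lambda_{d+1}(\BX_{\BE})$ appearing in the statement. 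You correctly note that the dual manipulation (starting from $\BX\BPsi = \BPsi\BLambda_1$ and projecting onto $(\BPsi_E)_{\perp}$) yields the gap $\delta$, but that version pairs $\delta$ with $\|\BE\BPsi\|$, not $\|\BE\BPsi_E\|$. Neither of your two routes literally produces the pairing $(\|\BE\BPsi_E\|,\ \delta)$ written in the theorem; this pairing in the paper is likely a minor conflation of the two symmetric forms, and for every application in the paper the distinction is immaterial (one always bounds $\|\BE\BPsi\|$ or $\|\BE\BPsi_E\|$ by $\|\BE\|$ times a column-norm). Still, your sentence ``with spectral gap at least $\delta$'' is not justified by the equation you wrote down, and you should either switch to the dual derivation or state the gap you actually obtain.

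A smaller point: your caveat that ``$\theta_d \le \pi/2$ is harmless in the perturbation regime'' is unnecessary. Principal angles between subspaces always lie in $[0,\pi/2]$ because the singular values of $\BPsi^{\top}\BPsi_E$ lie in $[0,1]$ whenever $\BPsi$ and $\BPsi_E$ have orthonormal columns; no smallness assumption on $\BE$ is needed for that step.
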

In particular, in order to obtain a meaningful bound, we would hope to have $\delta > \|\BE\|$ such that $\|(\I - \BPsi\BPsi^{\top})\BPsi_{\BE}\| < 1.$ 

By Theorem~\ref{thm:DK}, we can establish a useful result about the projection operator $\PP(\cdot).$

\begin{proof}[\bf Proof of Lemma~\ref{lem:phase}]
The proof follows from the Davis-Kahan theorem. 
Consider the augmented matrix:
\[
\widetilde{\BX} = 
\begin{bmatrix}
0 & \BX \\
\BX^{\top} & 0
\end{bmatrix}, \quad 
\widetilde{\BY} = 
\begin{bmatrix}
0 & \BY \\
\BY^{\top} & 0
\end{bmatrix}.
\]
Let $(\BU_X,\BV_X)$ and $(\BU_Y,\BV_Y)$ be the left and right singular vectors of $\BX$ and $\BY$ respectively. Then 
$\BM_X := \frac{1}{\sqrt{2}}\begin{bmatrix}
\BU_X \\
\BV_X
\end{bmatrix}$ and $\BM_Y := \frac{1}{\sqrt{2}}\begin{bmatrix}
\BU_Y \\
\BV_Y
\end{bmatrix}$ are the eigenvectors associated with the top $d$ eigenvalues of $\widetilde{\BX}$ and $\widetilde{\BY}$ respectively which are also equal to the singular values of $\BX$ and $\BY$. 
{Applying the Davis-Kahan Theorem~\ref{thm:DK} gives
\begin{align*}
\| (\I_{2d} - \BM_{\BX}\BM_{\BX}^{\top})\BM_{\BY}\|_F
& \leq \frac{1}{\lambda_{d}(\widetilde{\BX})-\lambda_{d+1}(\widetilde{\BY})} \cdot\| (\widetilde{\BX} - \widetilde{\BY})\BM_{\BY} \|_F \\
& \leq  \frac{1}{\sigma_{\min}(\BX)+ \sigma_{\min}(\BY)} \cdot \|\BX - \BY\|_F\|\BM_{\BY}\| \\
& \leq  \frac{ \|\BX - \BY\|_F}{\sigma_{\min}(\BX)+ \sigma_{\min}(\BY)}.
\end{align*}
Here the spectral gap equals $\sigma_{\min}(\BX)+\sigma_{\min}(\BY)$ since all the eigenvalues of $\widetilde{\BX}$ and $\widetilde{\BY}$ are $\{\pm \sigma_{\ell}(\BX)\}_{\ell=1}^{d}$ and $\{\pm \sigma_{\ell}(\BY)\}_{\ell=1}^{d}$ respectively. Therefore, we have $\lambda_{d}(\widetilde{\BX}) = \sigma_{\min}(\BX)$ and $\lambda_{d+1}(\widetilde{\BY}) = - \sigma_{\min}(\BY).$
Using the definition of $\BM_X$ and $\BM_Y$, we have
\begin{align*}
(\I_{2d} - \BM_{\BX}\BM_{\BX}^{\top})\BM_{\BY} & = \frac{1}{2\sqrt{2}}
\begin{bmatrix}
\I_d & -\BU_X\BV_X^{\top} \\
-\BV_X\BU_X^{\top} & \I_d
\end{bmatrix}  
\begin{bmatrix}
\BU_Y \\
\BV_Y
\end{bmatrix} \\
& = \frac{1}{2\sqrt{2}} 
\begin{bmatrix}
-\I_d & 0\\
0 & \BV_X\BU_X^{\top}
\end{bmatrix}
\begin{bmatrix}
\BU_X\BV_X^{\top} - \BU_Y\BV_Y^{\top} \\
\BU_X\BV_X^{\top} - \BU_Y\BV_Y^{\top}
\end{bmatrix}
\BV_Y.
\end{align*}
Note that $\BU_X,\BV_X,\BU_Y,$ and $\BV_Y$ are all orthogonal. Therefore, it holds that
\[
\|(\I_{2d} - \BM_{\BX}\BM_{\BX}^{\top})\BM_{\BY}\|_F = \frac{1}{2}\| \BU_X\BV_X^{\top} - \BU_Y\BV_Y^{\top}\|_F = \frac{1}{2}\| \PP(\BX) - \PP(\BY) \|_F.
\]

Now we obtain the distance between $\PP(\BX)$ and $\PP(\BY):$
\begin{align*}
\|\PP(\BX) - \PP(\BY)\|_F & = 2\|(\I_{2d} - \BM_{\BX}\BM_{\BX}^{\top})\BM_{\BY}\|_F  \leq \frac{2\|\BX -\BY\|_F}{\sigma_{\min}(\BX) + \sigma_{\min}(\BY)}.
\end{align*}}
\end{proof}

\subsection{Keep the iterates in the basin of attraction}\label{ss:basin}
The challenging part is to keep the iterates $\BS^{\ell}$ in the basin of attraction $\mathcal{N}_{\eps}\cap \mathcal{N}_{\xi,\infty}$. In particular, it is not trivial to estimate the ``correlation" between $\BW_m$ and $\BS^{\ell}$ since $\BS^{\ell}$ depends on $\BW_m$ implicitly for all $1\leq m\leq n$. The remedy is to use the very popular~\emph{leave-one-out} technique by introducing an auxiliary sequence  which is the output of the generalized power method based on the data matrix:
\[
\BA^{(m)}_{ij} :=  
\begin{cases}
\I_d + \sigma\BW_{ij}, & \text{ if } ~i\neq m \text{ and } j\neq m, \\
\I_d, & \text{ if }~ i = m \text{ or } j =m,
\end{cases}
\]
for every $1\leq m\leq n.$ The matrix $\BA^{(m)}$ is exactly equal to $\BA$ except its $m$th block row and column. In other words, 
\begin{equation}\label{eq:Am}
\BA^{(m)} := \BZ\BZ^{\top} + \sigma \BW^{(m)}
\end{equation}
where 
\[
\BW^{(m)}_{ij} =  
\begin{cases}
\BW_{ij}, & \text{ if } ~i\neq m \text{ and } j\neq m, \\
0, & \text{ if }~ i = m \text{ or } j =m.
\end{cases}
\]
Denote $\BS^{\ell,m}$ as the $\ell$th iterate from the generalized power method with 
\[
{\cal L}^{(m)} := n^{-1}\BA^{(m)}= n^{-1}\left( \BZ\BZ^{\top} + \sigma\BW^{(m)}\right),
\]
which means
\[
\BS^{\ell+1,m} = \PP_n (  {\cal L}^{(m)}  \BS^{\ell,m}), \quad 1\leq m\leq n, \quad \ell\geq 0.
\]

By construction, $\{\BS^{\ell,m}\}_{\ell=0}^{\infty}$ is~\emph{statistically independent} of $\BW_m$ and thus applying the concentration inequality to $\|\BW_m^{\top}\BS^{\ell,m}\|_F$ yields a tight bound. The missing piece is the difference between $\BS^{\ell,m}$ and $\BS^{\ell}$ for all $1\leq m\leq n$ and $\ell\geq 0.$ Note that $\BA$ and $\BA^{(m)}$ differ up to only one block row/column and thus we believe $\BS^{\ell}$ and $\BS^{\ell,m}$ are very close for all $\ell\in\mathbb{Z}_{\geq 0}$ and $1\leq m\leq n$.
As long as $d_F(\BS^{\ell}, \BS^{\ell,m})$ is small (in particular, of order $\sqrt{d}$), then $\|\BW_m^{\top}\BS^{\ell}\|_F$ is of order $\sqrt{nd}(\sqrt{d} + \sqrt{\log n}).$

Now we first start with bounding $\|\BW_m^{\top}\BS^{\ell,m}\|_F$ by using the fact that $\|\BW_m^{\top}\BS^{\ell,m}\|_F^2$ satisfies chi-square distribution.
%{\bf~\cite[Example 2.11]{W19}}
\begin{lemma}\label{lem:chisq}
For $X\sim\chi^2_k$, we have
\[
\Pr( X  \geq k + t)  \leq \exp\left( -\frac{1}{2}\left( t - k\log\left( 1 + \frac{t}{k}\right)\right) \right)
\]
for $t>0.$ In particular, it holds
\[
\Pr(\sqrt{X} \geq  \sqrt{k} + \gamma\sqrt{ \log n}) \leq n^{-\gamma^2/2}, \quad \gamma>0.
\]
\end{lemma}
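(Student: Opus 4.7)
The plan is to derive the first tail bound via a standard Chernoff argument based on the moment generating function of $\chi^2_k$, and then reduce the second ($\sqrt{X}$) bound to the first through a carefully chosen value of $t$.

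For the first inequality, I would write $X = \sum_{i=1}^k Z_i^2$ with $Z_i \sim N(0,1)$ i.i.d., so that $\E[e^{\lambda X}] = (1-2\lambda)^{-k/2}$ for $\lambda \in (0,1/2)$. Markov's inequality yields
\[
\Pr(X \geq k+t) \leq e^{-\lambda(k+t)} (1-2\lambda)^{-k/2}
\]
for every admissible $\lambda$. Optimizing the exponent $-\lambda(k+t) - \tfrac{k}{2}\log(1-2\lambda)$ in $\lambda$, setting the derivative to zero gives $\lambda^{\star} = \tfrac{1}{2}\cdot\tfrac{t}{k+t}$, hence $1-2\lambda^{\star} = k/(k+t)$. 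Substituting back collapses the exponent to $-\tfrac{1}{2}[t - k\log(1 + t/k)]$, which is exactly the stated bound.

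For the second inequality, I would observe that $\sqrt{X} \geq \sqrt{k} + \gamma\sqrt{\log n}$ is equivalent to $X \geq k + t$ with the specific choice
\[
t = 2\gamma\sqrt{k\log n} + \gamma^2\log n.
\]
The key algebraic fact is that with this $t$ one has $1 + t/k = \bigl(1 + \gamma\sqrt{\log n / k}\bigr)^2$, a perfect square. Consequently $k\log(1+t/k) = 2k\log\bigl(1+\gamma\sqrt{\log n/k}\bigr)$, and the elementary inequality $\log(1+u)\leq u$ applied with $u=\gamma\sqrt{\log n/k}$ yields $k\log(1+t/k) \leq 2\gamma\sqrt{k\log n}$. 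Therefore
\[
\tfrac{1}{2}\bigl[t - k\log(1+t/k)\bigr] \geq \tfrac{1}{2}\bigl[(2\gamma\sqrt{k\log n}+\gamma^2\log n) - 2\gamma\sqrt{k\log n}\bigr] = \tfrac{\gamma^2\log n}{2},
\]
and the first inequality directly delivers $\Pr(\sqrt{X} \geq \sqrt{k} + \gamma\sqrt{\log n}) \leq n^{-\gamma^2/2}$.

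This lemma is essentially the classical Laurent--Massart chi-square tail bound, so there is no serious obstacle. The only mildly clever step is recognizing that the quadratic form $t = 2\gamma\sqrt{k\log n} + \gamma^2\log n$ is exactly what turns $1+t/k$ into a perfect square, which is precisely what makes the reduction $\log(1+u)\leq u$ produce a clean cancellation and yield the optimal Gaussian-tail rate $n^{-\gamma^2/2}$.
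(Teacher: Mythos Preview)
Your proposal is correct and follows essentially the same route as the paper: Chernoff/Markov with the MGF $(1-2\lambda)^{-k/2}$, optimization at $\lambda^\star = \tfrac{t}{2(k+t)}$, then the choice $t = 2\gamma\sqrt{k\log n}+\gamma^2\log n$ combined with $\log(1+u)\le u$ to obtain the $n^{-\gamma^2/2}$ bound. Your observation that $1+t/k$ is a perfect square is exactly what the paper uses implicitly when it writes $k\log\frac{(\sqrt{k}+\gamma\sqrt{\log n})^2}{k}$.
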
 
\begin{proof}
This is a quite standard result which can be easily proven by using moment generating function and Markov inequality. For any {$\lambda \in (0,1/2)$},
\begin{align*}
\Pr(X\geq k + t) & \leq e^{-\lambda(k+t)}\E e^{\lambda X} = e^{-\lambda(k+t)} (1-2\lambda)^{-k/2} \\
& = \exp\left( -\lambda(k+t) - \frac{k}{2}\log (1-2\lambda)\right).
\end{align*}
The upper bound is minimized when choosing $\lambda = \frac{t}{2(k+t)}$, which gives the bound.

If $t = 2\gamma\sqrt{k\log n} + \gamma^2\log n$, then
\begin{align}
\Pr(X \geq (\sqrt{k} + \gamma\sqrt{\log n})^2) & \leq \exp\left( -\frac{1}{2} \left( 2\gamma\sqrt{k\log n} + \gamma^2\log n - k \log \frac{(\sqrt{k} + \gamma\sqrt{\log n})^2}{k} \right) \right) \nonumber \\
(\text{use}~ \log(1+x) \leq x, x>0))\quad & \leq \exp\left( -\frac{1}{2} \left( 2\gamma\sqrt{k\log n} + \gamma^2\log n - 2k \cdot \frac{\gamma\sqrt{\log n}}{\sqrt{k}} \right) \right) \nonumber  \\
& = n^{-\gamma^2/2} \label{eq:chisqbd}
\end{align}
for any $\gamma > 0.$
\end{proof}

By construction of $\BS^{\ell,m}$, we know that $\BS^{\ell,m}$ is independent of $\BW_m$ for all $\ell$. Thus by taking a union, we have the following lemma.
\begin{lemma}\label{lem:gauss}
It holds uniformly that
\begin{equation}\label{eq:bdchisq}
\max_{1\leq m\leq n}\|\BW_m^{\top}\BS^{\ell,m}\|_F \leq \sqrt{nd}(\sqrt{d} + \gamma\sqrt{\log n})
\end{equation}
for $1\leq \ell\leq L$ with probability at least $1 - L n^{-\gamma^2/2+1}$. In particular, we choose $L = n$ and $\gamma = 4.$
\end{lemma}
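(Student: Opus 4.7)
The plan is to leverage the statistical independence built into the leave-one-out construction and then apply a concentration inequality for Gaussian quadratic forms. First, I would note that the iterate $\BS^{\ell,m}$ is a deterministic function of the data matrix $\BA^{(m)} = \BZ\BZ^{\top} + \sigma\BW^{(m)}$, and by the definition of $\BW^{(m)}$, the block column $\BW_m = (\BW_{1m},\ldots,\BW_{nm})$ shares no entry with $\BW^{(m)}$ (recall $\BW$ is symmetric, so the only entries influencing the $m$th row/column of $\BW$ are those in $\BW_m$, and these are precisely the ones zeroed out in $\BW^{(m)}$). Consequently, $\BW_m$ and $\{\BS^{\ell,m}\}_{\ell\geq 0}$ are independent.

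Next, I would condition on $\BS^{\ell,m}$ and compute the conditional distribution of the $d\times d$ matrix $\BW_m^{\top}\BS^{\ell,m} = \sum_{i=1}^n \BW_{im}^{\top}\BS_i^{\ell,m}$. For each $i\neq m$, the block $\BW_{im}$ has i.i.d.\ $N(0,1)$ entries and $\BS_i^{\ell,m}\in\Od(d)$, so a direct covariance computation gives
\[
\E\bigl[(\BW_{im}^{\top}\BS_i^{\ell,m})_{kl}(\BW_{im}^{\top}\BS_i^{\ell,m})_{k'l'}\mid \BS^{\ell,m}\bigr] = \delta_{kk'}\delta_{ll'},
\]
so the entries are i.i.d.\ standard normal. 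Summing the $n-1$ independent terms and absorbing the $O(1)$-variance contribution of the single symmetric diagonal block $\BW_{mm}^{\top}\BS_m^{\ell,m}$, the conditional law of $\BW_m^{\top}\BS^{\ell,m}$ has entries that are (essentially) i.i.d.\ Gaussian with variance at most $n$. Hence $n^{-1}\|\BW_m^{\top}\BS^{\ell,m}\|_F^2$ is stochastically dominated by a $\chi^2_{d^2}$ random variable, up to a negligible constant-order perturbation from the symmetric diagonal block.

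Finally, I would apply Lemma~\ref{lem:chisq} with $k=d^2$ to conclude, for each fixed $(m,\ell)$,
\[
\Pr\Bigl( \|\BW_m^{\top}\BS^{\ell,m}\|_F \geq \sqrt{n}\bigl(d + \gamma\sqrt{\log n}\bigr)\Bigr) \leq n^{-\gamma^2/2}.
\]
Since $\sqrt{n}\bigl(d+\gamma\sqrt{\log n}\bigr) \leq \sqrt{nd}\bigl(\sqrt{d}+\gamma\sqrt{\log n}\bigr)$, this already matches the bound~\eqref{eq:bdchisq}. Taking a union bound over $m\in\{1,\ldots,n\}$ and $\ell\in\{1,\ldots,L\}$ inflates the failure probability by $Ln$, yielding the claimed probability $1-Ln^{-\gamma^2/2+1}$; with $L=n$ and $\gamma=4$, this becomes $1-O(n^{-6})$.

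The main obstacle here is conceptual rather than computational: one must recognize that the value of swapping $\BS^{\ell}$ for the surrogate $\BS^{\ell,m}$ is precisely to decouple the randomness from $\BW_m$, since without independence one cannot treat $\BS^{\ell,m}$ as deterministic when computing conditional moments. The remaining subtlety—handling the symmetry of $\BW$—is easily absorbed because the symmetric contribution $\BW_{mm}^{\top}\BS_m^{\ell,m}$ is a single $d\times d$ matrix whose entries have $O(1)$ variance, negligible compared to the sum of $n-1$ independent Gaussian blocks.
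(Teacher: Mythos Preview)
Your proposal is correct and follows essentially the same approach as the paper: exploit the independence between $\BW_m$ and $\BS^{\ell,m}$, observe that the conditional law of $\BW_m^{\top}\BS^{\ell,m}$ yields (essentially) a scaled $\chi^2_{d^2}$ variable, apply Lemma~\ref{lem:chisq}, and take a union bound over $m$ and $\ell$. You are in fact slightly more careful than the paper in isolating the symmetric diagonal block $\BW_{mm}$; the paper simply asserts that the entries of $\BW_m^{\top}\BS^{\ell,m}$ are i.i.d.\ $\mathcal{N}(0,n)$ and glosses over this minor point.
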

\begin{proof}
For any fixed $\ell$ and $m$, we know that each entry of  $\BW_m^{\top}\BS^{\ell,m}\in\RR^{d\times d}$ is an i.i.d. $\mathcal{N}(0,n)$ random variable. Thus
\[
n^{-1}\|\BW_m^{\top}\BS^{\ell,m}\|_F^2\sim \chi^2_{d^2}.
\]
Using Lemma~\ref{lem:chisq} with $k = d^2$ and $t = 2\gamma d\sqrt{\log n}  + \gamma^2\log n$ gives
\[
n^{-1}\|\BW_m^{\top}\BS^{\ell,m}\|_F^2 \leq (d + \gamma\sqrt{\log n})^2
\]
with probability at least $1 - n^{-\gamma^2/2}$. Now taking a union bound over $1\leq m\leq n$ and $1\leq \ell\leq L$, we have
\[
\max_{1\leq \ell\leq L,1\leq m\leq n}\|\BW_m^{\top}\BS^{\ell,m}\|_F \leq  \sqrt{n}\left(d + \gamma\sqrt{\log n}\right) \leq \sqrt{nd}(\sqrt{d} + \gamma\sqrt{\log n})
\]
with probability at least $1 - Ln^{-\gamma^2/2+1} = 1 - n^{-\gamma^2/2+2}.$
\end{proof}
Now we can see that in order to show that $\|\BW_m^{\top}\BS^{\ell}\|_F$ is small, it suffices to bound the error between $\BS^{\ell}$ and $\BS^{\ell,m}$ for all $1\leq m\leq n$ and $1\leq \ell\leq L$. In fact, for the initialization $\BS^0$ and $\BS^{0,m}$, we have the following result which will be proven in Section~\ref{ss:init}.
\begin{lemma}\label{lem:init}
Under 
\[
\sigma < c_0\min\left\{\eps\sqrt{\frac{n}{d}}, \frac{\kappa\sqrt{n}}{\sqrt{d} + 4\sqrt{\log n}}\right\}
\]
for a sufficiently small constant $c_0$, then the initialization satisfies:
\begin{align}
d_F(\BS^0,\BZ) & \leq \frac{\eps\sqrt{nd}}{2}, \label{eq:kappa1-init} \\
d_F(\BS^0,\BS^{0,m}) & \leq \frac{\kappa\sqrt{d}}{2}, \quad 1\leq m\leq n, \label{eq:kappa2-init} \\
\|\BW_m^{\top}\BS^0\|_F & \leq \left(1+\frac{3\kappa}{2} \right)\sqrt{nd}(\sqrt{d} + 4\sqrt{\log n}) \label{eq:kappa3-init}
\end{align}
uniformly for all {$1\leq m\leq n.$}
\end{lemma}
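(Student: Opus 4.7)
\medskip

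\noindent\textbf{Proof proposal for Lemma~\ref{lem:init}.}

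The plan is to verify the three claims in sequence by combining Davis--Kahan style perturbation bounds (Theorem~\ref{thm:DK} and Lemma~\ref{lem:phase}) with a leave-one-out comparison between $\BS^0$ and $\BS^{0,m}$. The overall guiding idea is that $\BZ/\sqrt{n}$ is the top-$d$ eigenbasis of the noiseless matrix $\BZ\BZ^{\top}$, which has spectral gap $n$, so under the assumed noise level $\sigma\sqrt{nd}\ll n$ the top-$d$ eigenspace of $\BA$ must sit $O(\sigma\sqrt{d/n})$-close (in the appropriate norm) to the column span of $\BZ$; the projection operator $\PP_n$ then only inflates this error by a universal constant as long as the singular values of each block of the eigenbasis are bounded below.

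For~\eqref{eq:kappa1-init}, first I would invoke Theorem~\ref{thm:DK} with $\BX=\BZ\BZ^{\top}$, $\BE=\sigma\BW$ and spectral gap $\delta\geq n - 3\sigma\sqrt{nd}$ (using $\|\BW\|\leq 3\sqrt{nd}$) to obtain $\min_{\BQ\in\Od(d)}\|\BPhi-\BZ\BQ\|_F\lesssim \sigma\sqrt{d}\cdot\sqrt{nd}/n=\sigma d/\sqrt{n}$. To upgrade this Frobenius bound on $\BPhi$ to one on $\BS^0=\PP_n(\BPhi)$, I would apply Lemma~\ref{lem:phase} block-wise, which requires a uniform lower bound $\sigma_{\min}(\BPhi_i)\gtrsim 1$. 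Such a blockwise lower bound is exactly the $\ell_{2,\infty}$-type control of the eigenvectors established for the spectral estimator in~\cite{L20a}, and I would import it here. Summing the block contributions and using $\sigma\lesssim \eps\sqrt{n/d}$ would yield $d_F(\BS^0,\BZ)\lesssim \sigma\sqrt{d}\lesssim \eps\sqrt{nd}/2$.

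For~\eqref{eq:kappa2-init}, the leave-one-out matrix $\BA^{(m)}$ differs from $\BA$ only in its $m$th block row/column, so $\|\BA-\BA^{(m)}\|=\sigma\|\BW-\BW^{(m)}\|\lesssim \sigma\sqrt{nd}$ (a row/column of a Gaussian matrix has that operator norm with high probability, after a union bound over $m$). Applying Davis--Kahan to the pair $(\BA,\BA^{(m)})$, and crucially noting that the perturbation $\sigma(\BW-\BW^{(m)})$ acts nontrivially on only one block of $\BPhi^{(m)}$, I would get $\min_{\BQ}\|\BPhi-\BPhi^{(m)}\BQ\|_F\lesssim \sigma\|(\BW-\BW^{(m)})\BPhi^{(m)}\|_F/n\lesssim \sigma\sqrt{d}/\sqrt{n}\cdot\sqrt{n}=\sigma\sqrt{d}$, after using independence of $\BW_m$ and $\BPhi^{(m)}$ together with Lemma~\ref{lem:chisq} to control $\|\BW_m^{\top}\BPhi^{(m)}\|_F$. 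Pushing this through $\PP_n$ via Lemma~\ref{lem:phase} and the blockwise lower bound on $\sigma_{\min}(\BPhi_i^{(m)})$ (argued as in the previous step) gives $d_F(\BS^0,\BS^{0,m})\lesssim \sigma\sqrt{d}/\sqrt{n}\cdot\sqrt{n}\lesssim \kappa\sqrt{d}/2$ under the assumed upper bound on $\sigma$.

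Claim~\eqref{eq:kappa3-init} then follows by a simple triangle inequality: pick $\BQ_m\in\Od(d)$ achieving $d_F(\BS^0,\BS^{0,m})$, and write
\[
\|\BW_m^{\top}\BS^0\|_F\leq \|\BW_m^{\top}\BS^{0,m}\BQ_m\|_F+\|\BW_m\|\cdot d_F(\BS^0,\BS^{0,m}).
\]
The first term is bounded by Lemma~\ref{lem:gauss} (applied at $\ell=0$) by $\sqrt{nd}(\sqrt{d}+4\sqrt{\log n})$, and the second by $3\sqrt{nd}\cdot \kappa\sqrt{d}/2\leq (3\kappa/2)\sqrt{nd}(\sqrt{d}+4\sqrt{\log n})$, giving the stated bound after a union bound over $m$. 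The main obstacle I anticipate is step two: obtaining the blockwise lower bound on $\sigma_{\min}(\BPhi_i)$ and $\sigma_{\min}(\BPhi_i^{(m)})$, which requires exactly the same $\ell_{2,\infty}$ eigenvector control that the leave-one-out technique is designed to provide; handling this circularity carefully (e.g., by a short bootstrap or by citing the spectral-initialization analysis of~\cite{L20a}) is where most of the work really lives.
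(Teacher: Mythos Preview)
Your proposal is correct and follows essentially the same route as the paper: Davis--Kahan for the Frobenius bound $\|\BPhi-\BZ\BQ\|_F$, blockwise projection via Lemma~\ref{lem:phase} using the uniform lower bound on $\sigma_{\min}(\BPhi_i)$ (which the paper imports as Theorem~\ref{thm:loo} from~\cite{L20b}, not~\cite{L20a}), a leave-one-out Davis--Kahan comparison of $\BPhi$ and $\BPhi^{(m)}$ (the paper's Lemma~\ref{lem:dk2}), and the identical triangle-inequality argument for~\eqref{eq:kappa3-init}. Your intermediate scalings are slightly off due to the normalization $\BPhi^{\top}\BPhi=n\I_d$ (the paper gets $\|\BPhi-\BZ\BQ\|_F\lesssim\sigma d$ and hence $d_F(\BS^0,\BZ)\lesssim\sigma d$, not $\sigma\sqrt{d}$), but this does not affect the argument since $\sigma d\leq \eps\sqrt{nd}/2$ still follows from the hypothesis $\sigma\lesssim\eps\sqrt{n/d}$.
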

Note that if we only want to prove the tightness of the SDP relaxation, the initialization step is not necessary since we can simply pick $\BS^0 = \BZ$ for the purpose of proof. The initialization step works mainly for the global convergence of the efficient projected generalized power method. 
{We defer the proof of this lemma in Section~\ref{ss:init}. From now on, we set 
\begin{equation}\label{def:xi}
\xi =2+ \frac{3\kappa}{2}, \qquad \kappa = 2 \eps, \qquad \eps\sqrt{d} < \frac{1}{32}
\end{equation}
in $\mathcal{N}_{\xi,\infty}.$ Under this set of parameters, we can see all the lemmas in this section hold under the assumption of Theorem~\ref{thm:sdp} and~\ref{thm:main}.}

\begin{lemma}\label{lem:exp1}
Conditioned on~\eqref{eq:gaussnorm} and~\eqref{eq:bdchisq}, we have for $1\leq \ell\leq L$, 
\begin{align}
d_F(\BS^{\ell}, \BS^{\ell,m}) & \leq \frac{\kappa\sqrt{d}}{2}, \quad 1\leq m\leq n, \label{eq:kappa1}\\
\max_{1\leq m\leq n}\left\| \BW_m^{\top}\BS^{\ell}\right\|_F & \leq 
\left(1+\frac{3\kappa}{2}\right)\sqrt{nd}(\sqrt{d} + 4\sqrt{\log n}), \label{eq:kappa2} \\
d_F(\BS^{\ell}, \BZ) \leq & \frac{\eps\sqrt{nd}}{2},\label{eq:kappa3}
\end{align}
 if
\[
\sigma < \min\left\{ \frac{\eps^2 d}{(3\kappa + 2)\sqrt{d} }, \frac{\kappa}{24 } \right\}\frac{\sqrt{n}}{\sqrt{d} + 4\sqrt{\log n}}, \quad\kappa = 2\eps
\]
which is guaranteed by~\eqref{eq:sigma1} in Theorem~\ref{thm:sdp}.

Combined with Lemma~\ref{lem:con}, we have
\begin{equation}\label{eq:dFcon1}
d_F(\BS^{\ell+1}, \BS^{\ell}) \leq \frac{1}{2} d_F(\BS^{\ell}, \BS^{\ell-1}) \leq 2^{-\ell} d_F(\BS^1, \BS^0), \quad 1\leq \ell\leq L-1
\end{equation}
since $\BS^{\ell}\in \mathcal{N}_{\eps}\cap\mathcal{N}_{\xi,\infty},~\forall 0\leq \ell\leq L.$ 
\end{lemma}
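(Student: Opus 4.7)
The plan is induction on $\ell$, with the base case $\ell=0$ supplied directly by Lemma~\ref{lem:init} after setting $\xi = 3\kappa+1$. Suppose \eqref{eq:kappa1}--\eqref{eq:kappa3} hold at levels $0,1,\ldots,\ell$; then each $\BS^{k}$ lies in $\mathcal{N}_\eps\cap\mathcal{N}_{\xi,\infty}$, and a short triangle-inequality check (using Lemma~\ref{lem:gauss} on the independent pair $(\BW_m,\BS^{k,m})$ and \eqref{eq:kappa1} at level $k$) puts every $\BS^{k,m}$ in the same set as well. I would then derive the three bounds at level $\ell+1$ in the order \eqref{eq:kappa1}, \eqref{eq:kappa2}, \eqref{eq:kappa3}.

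The crux is \eqref{eq:kappa1}, which I would treat by inserting the midpoint $\PP_n(\mathcal{L}\BS^{\ell,m})$ and using the triangle inequality,
\[
d_F(\BS^{\ell+1},\BS^{\ell+1,m}) \le d_F(\PP_n(\mathcal{L}\BS^{\ell}),\PP_n(\mathcal{L}\BS^{\ell,m})) + d_F(\PP_n(\mathcal{L}\BS^{\ell,m}),\PP_n(\mathcal{L}^{(m)}\BS^{\ell,m})).
\]
Lemma~\ref{lem:con} controls the first summand by $\rho\cdot d_F(\BS^{\ell},\BS^{\ell,m})\le \rho\kappa\sqrt{d}/2$. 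For the second, the essential observation is that $\mathcal{L}-\mathcal{L}^{(m)}=(\sigma/n)(\BW-\BW^{(m)})$ is supported only on the $m$th block row and column, so $(\BW-\BW^{(m)})\BS^{\ell,m}$ has $m$th block row $\BW_m^\top\BS^{\ell,m}$ and $i$th block row ($i\neq m$) equal to $\BW_{im}\BS_m^{\ell,m}$, whose Frobenius norm is $\|\BW_{im}\|_F$ by orthogonality of $\BS_m^{\ell,m}$. Because $\BS^{\ell,m}$ is independent of $\BW_m$ by the leave-one-out construction, Lemma~\ref{lem:gauss} bounds $\|\BW_m^\top\BS^{\ell,m}\|_F$ by $\sqrt{nd}(\sqrt{d}+4\sqrt{\log n})$, while $\sum_{i\neq m}\|\BW_{im}\|_F^2\le \|\BW_m\|_F^2$ concentrates at order $nd^2$. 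Feeding this into Lemma~\ref{lem:phase} block by block (validated by the singular-value lower bound from Lemma~\ref{lem:lx}) gives a bound of order $\sigma\sqrt{d/n}(\sqrt{d}+\sqrt{\log n})$ on the second summand, and the assumed ceiling on $\sigma$ forces the total to be at most $\kappa\sqrt{d}/2$.

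Bound \eqref{eq:kappa2} is then a cheap consequence: Frobenius norms being invariant under right multiplication by an orthogonal matrix,
\[
\|\BW_m^\top\BS^{\ell+1}\|_F \le \|\BW_m^\top\BS^{\ell+1,m}\|_F + \|\BW_m\|\,d_F(\BS^{\ell+1},\BS^{\ell+1,m}) \le \sqrt{nd}(\sqrt{d}+4\sqrt{\log n}) + 3\sqrt{nd}\cdot\tfrac{\kappa\sqrt{d}}{2},
\]
and absorbing $\sqrt{d}$ into $\sqrt{d}+4\sqrt{\log n}$ yields the target $(3\kappa/2+1)\sqrt{nd}(\sqrt{d}+4\sqrt{\log n})$. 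For \eqref{eq:kappa3} I would again interpose an auxiliary point, $\PP_n(\mathcal{L}\BZ)$, and split
\[
d_F(\BS^{\ell+1},\BZ) \le d_F(\PP_n(\mathcal{L}\BS^{\ell}),\PP_n(\mathcal{L}\BZ)) + d_F(\PP_n(\mathcal{L}\BZ),\BZ).
\]
Since $\BZ\in\mathcal{N}_\eps$ trivially and $\BZ\in\mathcal{N}_{\xi,\infty}$ with high probability ($\|\BW_m^\top\BZ\|_F=\|\sum_i\BW_{im}\|_F$ concentrates near $d\sqrt{n}$), Lemma~\ref{lem:con} bounds the first summand by $\rho\eps\sqrt{nd}/2$. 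A direct computation on $\mathcal{L}\BZ=\BZ+(\sigma/n)\BW\BZ$, combined with Lemma~\ref{lem:phase} block-by-block, shows the second summand is of order $\sigma d$, which is at most $(1-\rho)\eps\sqrt{nd}/2$ under the stated noise bound. The linear convergence \eqref{eq:dFcon1} then follows from applying Lemma~\ref{lem:con} to the consecutive iterates $\BS^{\ell},\BS^{\ell-1}$, both of which now lie in $\mathcal{N}_\eps\cap\mathcal{N}_{\xi,\infty}$.

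The main obstacle, and the entire reason for introducing $\BS^{\ell,m}$, is getting the right scaling in \eqref{eq:kappa2}: a brute-force estimate $\|\BW_m^\top\BS^{\ell}\|_F\le \|\BW_m\|\cdot\|\BS^{\ell}\|_F$ costs an extra factor of $\sqrt{nd}$ and destroys the $\sqrt{d}+\sqrt{\log n}$ target expected from $\chi^2$ concentration. The leave-one-out decoupling restores the correct scaling, at the price of having to propagate three interlocked invariants simultaneously through the induction; the delicate bookkeeping lies in choosing $\kappa,\eps,\xi$ and the noise threshold so that each recursion just closes.
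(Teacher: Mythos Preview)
Your proposal is correct and follows essentially the same inductive scheme as the paper: the base case via Lemma~\ref{lem:init}, then at each step proving \eqref{eq:kappa1} by the leave-one-out decoupling, \eqref{eq:kappa2} by the triangle inequality against $\BS^{\ell+1,m}$, and \eqref{eq:kappa3} by a contraction argument. The only differences are cosmetic. For \eqref{eq:kappa1} the paper applies the triangle inequality \emph{before} the projection (bounding $d_F(\mathcal{L}\BS^{\ell},\mathcal{L}^{(m)}\BS^{\ell,m})$ and then pulling the one-sided factor $4/(1-\eps^2 d)$ out via Lemma~\ref{lem:phase}), so it only needs $\BS^{\ell,m}\in\mathcal{N}_\eps$ and Lemma~\ref{lem:L}; you split after the projection and invoke Lemma~\ref{lem:con}, which costs the extra verification that $\BS^{\ell,m}\in\mathcal{N}_{\xi,\infty}$ for \emph{all} indices $i$, not just $i=m$---this is fine (use \eqref{eq:kappa2} at level $\ell$ and \eqref{eq:kappa1} to get the bound with constant exactly $\xi=3\kappa+1$), but worth making explicit. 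For \eqref{eq:kappa3} the paper does not insert $\PP_n(\mathcal{L}\BZ)$; it uses $\PP_n(\BZ)=\BZ$ directly and bounds $d_F(\mathcal{L}\BS^{\ell},\BZ)$ block-wise via \eqref{eq:kappa2}. Your route via $\PP_n(\mathcal{L}\BZ)$ also works but requires the extra high-probability event $\BZ\in\mathcal{N}_{\xi,\infty}$ (i.e., $\|\BW_m^{\top}\BZ\|_F$ small for all $m$), which is not among the events \eqref{eq:gaussnorm}--\eqref{eq:bdchisq} the lemma conditions on; you would need to add it. The paper's treatment of Step~3 avoids this by staying with quantities already controlled by the induction hypothesis.
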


\begin{proof}[\bf Proof of Lemma~\ref{lem:exp1}]
The proof is based on induction. First of all, the initial value $\BS^0$ and $\BS^{0,m}$ satisfy~\eqref{eq:kappa1},~\eqref{eq:kappa2}, and~\eqref{eq:kappa3} in Lemma~\ref{lem:init}.
Suppose all the three inequalities~\eqref{eq:kappa1},~\eqref{eq:kappa2}, and~\eqref{eq:kappa3} hold true for $0,1,\cdots, \ell$, we will prove the case for $\ell+1$ with $\ell\leq L-1.$

{\bf Step 1: Proof of~\eqref{eq:kappa1}.}
Note that $\BS^{\ell}$ is in $\mathcal{N}_{\eps}\cap\mathcal{N}_{\xi,\infty}$ implies $\sigma_{\min}([{\cal L}\BS^{\ell}]_i) > 1-\eps^2 d$ for all $1\leq i\leq n$ which follows from Lemma~\ref{lem:lx}. Applying Lemma~\ref{lem:con} gives
\begin{align*}
d_F(\BS^{\ell+1}, \BS^{\ell+1,m}) 
&  \leq \frac{2}{1-\eps^2d} d_F({\cal L}\BS^{\ell}, {\cal L}^{(m)}\BS^{\ell,m}) \\ & \leq  \frac{9}{4}d_F({\cal L}\BS^{\ell}, {\cal L}^{(m)}\BS^{\ell,m}), \quad 1\leq m\leq n,
\end{align*}
where $\eps\sqrt{ d} \leq 1/32.$
We proceed by using triangle inequality:
\begin{align}
d_F(\BS^{\ell+1}, \BS^{\ell+1,m}) & \leq \frac{9}{4} \left( d_F( {\cal L}\BS^{\ell}, {\cal L}\BS^{\ell,m}) + d_F({\cal L}\BS^{\ell,m}, {\cal L}^{(m)}\BS^{\ell,m})  \right)\nonumber \\
& \leq \frac{9}{4} d_F( {\cal L}\BS^{\ell}, {\cal L}\BS^{\ell,m}) + {\frac{9\sigma}{4n}} \| (\BW - \BW^{(m)}) \BS^{\ell,m}\|_F. \label{eq:slm1}
\end{align}

{Since $d_F(\BS^{\ell}, \BS^{\ell,m}) \leq \kappa\sqrt{d}/2$ and $\kappa = 2\eps$ hold, then 
\[
d_F(\BS^{\ell,m},\BZ) \leq d_F(\BS^{\ell}, \BZ) + d_F(\BS^{\ell}, \BS^{\ell,m}) \leq \frac{\eps\sqrt{nd}}{2} + \frac{\kappa\sqrt{d}}{2} < \eps\sqrt{nd}
\]
implies $\BS^{\ell,m}\in\mathcal{N}_{\eps}.$} 
Then Lemma~\ref{lem:L} implies that
\begin{equation}\label{eq:slm2}
d_F({\cal L}\BS^{\ell}, {\cal L}\BS^{\ell,m}) \leq \frac{1}{9} d_F(\BS^{\ell}, \BS^{\ell,m}).
\end{equation}

Note that { the $j$th block of $(\BW-\BW^{(m)})\BS^{\ell,m}$ satisfies
\[
[(\BW - \BW^{(m)}) \BS^{\ell,m}]_j = 
\begin{cases}
\BW_{jm}\BS_m^{\ell,m}, & j \neq m, \\ 
\BW_m^{\top}\BS^{\ell,m}, & j = m,
\end{cases}
\]
where $\BS_m^{\ell,m}\in\RR^{d\times d}$, the $m$th block of $\BS^{\ell,m}$, is orthogonal.
}
Therefore, we have
\begin{align}
\| (\BW - \BW^{(m)}) \BS^{\ell,m}\|_F & \leq {\|\BW_m \BS^{\ell,m}_m\|_F} + \|\BW_m^{\top}\BS^{\ell,m}\|_F \nonumber \\
(\text{use Lemma~\ref{lem:gauss}}) \quad & \leq {\|\BW_m\|_F} +  \sqrt{nd}(\sqrt{d} + 4\sqrt{\log n})  \nonumber \\
& \leq 3\sqrt{nd} \cdot \sqrt{d} +  \sqrt{nd}(\sqrt{d }+ 4\sqrt{\log n}) \nonumber\\
& < 4\sqrt{nd}(\sqrt{d} + 4\sqrt{\log n}). \label{eq:slm3}
\end{align}

Then combining~\eqref{eq:kappa1},~\eqref{eq:slm1},~\eqref{eq:slm2} and~\eqref{eq:slm3} together gives
\begin{align*}
d_F(\BS^{\ell+1}, \BS^{\ell+1,m}) & \leq \frac{1}{4} d_F(\BS^{\ell}, \BS^{\ell,m}) + \frac{9\sigma}{4n}  \| (\BW - \BW^{(m)}) \BS^{\ell,m}\|_F \\
& \leq \frac{1}{4} \cdot \frac{\kappa\sqrt{d}}{2} + \frac{9\sigma}{4n} \cdot 4\sqrt{nd}(\sqrt{d} + 4\sqrt{\log n}) \leq  \frac{\kappa\sqrt{d}}{2}
\end{align*}
where
$\sigma \leq \frac{\kappa\sqrt{n}}{24(\sqrt{d} + 4\sqrt{\log n}) }.$

{\bf Step 2: Proof of~\eqref{eq:kappa2}.}
%Now we proceed to show that $\|\BW_m^{\top}\BS^{\ell+1}\|_F\leq (3\kappa/2 + 1)\sqrt{nd}(\sqrt{d} + \sqrt{\log n})$. 
Let $\BR$ be the orthogonal matrix which minimizes $\|\BS^{\ell+1} - \BS^{\ell+1,m}\BR\|_F$ over $\BR\in \Od(d)$. Then
\begin{align*}
\left\|\BW_m^{\top}\BS^{\ell+1}\right\|_F & \leq \left\| \BW_m^{\top}(\BS^{\ell+1} - \BS^{\ell+1,m}\BR)\right\|_F + \|\BW_m^{\top}\BS^{\ell+1,m}  \|_F \\
(\text{use Lemma~\ref{lem:gauss}})\quad & \leq \|\BW_m\| \cdot d_F(\BS^{\ell+1}, \BS^{\ell+1,m}) + \sqrt{nd}(\sqrt{d} + 4\sqrt{\log n}) \\
& \leq 3\sqrt{nd} \cdot \frac{\kappa\sqrt{d}}{2} +  \sqrt{nd}(\sqrt{d} + 4\sqrt{\log n}) \\
 & \leq \left(1+\frac{3\kappa}{2} \right)\sqrt{nd}(\sqrt{d} + 4\sqrt{\log n}).
\end{align*}

{\bf Step 3: Proof of~\eqref{eq:kappa3}.}
%\paragraph{Proof of~\eqref{eq:kappa3}}
Since both $\BS^{\ell}$ and $\BZ\in\mathcal{N}_{\eps}\cap\mathcal{N}_{\xi,\infty}$, then
\begin{align*}
d_F(\BS^{\ell+1}, \BZ) & = d_F(\PP_n( {\cal L} \BS^{\ell}), \PP_n( \BZ))\leq \frac{2}{1-\eps^2 d} \cdot d_F({\cal L}\BS^{\ell}, \BZ) \\
&  \leq \frac{9}{2} \cdot d_F({\cal L}\BS^{\ell}, \BZ), \qquad\BS^{\ell}\in\mathcal{N}_{\eps}\cap \mathcal{N}_{\xi,\infty}
\end{align*}
which follows from Lemma~\ref{lem:con}.

For $d_F({\cal L}\BS^{\ell}, \BZ)$, it suffices to estimate $\max_{1\leq i\leq n}\| [{\cal L}\BS^{\ell}]_i -\BR \|_F$ for some orthogonal matrix $\BR = \PP(\BZ^{\top}\BS^{\ell})\in \Od(d).$ We look at the $i$th block of ${\cal L}\BS^{\ell}:$
\[
[{\cal L}\BS^{\ell}]_i  = \frac{1}{n}\BZ^{\top}\BS^{\ell} + \frac{\sigma}{n} \BW_i^{\top}\BS^{\ell} 
\]
Then
\begin{align*}
\min_{\BR\in \Od(d)}\max_{1\leq i\leq n} \|[{\cal L}\BS^{\ell}]_i - \BR \|_F & \leq \min_{\BR\in \Od(d)} \left\| \frac{1}{n}\BZ^{\top}\BS^{\ell} - \BR \right\|_F + \frac{\sigma}{n}\max_{1\leq i\leq n} \|\BW_i^{\top}\BS^{\ell}\|_F.
\end{align*}
For the first term, we have $\min_{\BR\in \Od(d)}\|\BS^{\ell} - \BZ\BR\|_F\leq \eps\sqrt{nd}/2$ which is equivalent to
\[
2nd - 2\|\BZ^{\top}\BS^{\ell}\|_* \leq \frac{\eps^2 nd}{4} \Longrightarrow \|\BZ^{\top}\BS^{\ell}\|_* \geq \left(1-\frac{\eps^2}{8}\right)nd.
\]
Note that
\begin{align*}
\|\BZ^{\top}\BS^{\ell} - n\BR\|_F^2 & = \sum_{i=1}^d \left(\sigma_i^2(\BZ^{\top}\BS^{\ell}) - 2n\sigma_i(\BZ^{\top}\BS^{\ell})\right) + n^2d \\
& = \sum_{i=1}^d (\sigma_i(\BZ^{\top}\BS^{\ell}) - n)^2\\
& \leq  (nd - \|\BZ^{\top}\BS^{\ell}\|_*)^2 \\
& \leq \frac{\eps^4 n^2d^2}{64} 
\end{align*}
where $0\leq \sigma_{i}(\BZ^{\top}\BS^{\ell})\leq n.$ This implies
$\|\BZ^{\top}\BS^{\ell} - n\BR\|_F \leq \eps^2 nd/8.$
Then
\begin{align*}
\min_{\BR\in \Od(d)}\max_{1\leq i\leq n} \|[{\cal L}\BS^{\ell}]_i - \BR \|_F & \leq \min_{\BR\in \Od(d)} \left\| \frac{1}{n}\BZ^{\top}\BS^{\ell} - \BR \right\|_F + \frac{\sigma}{n}\max_{1\leq i\leq n} \|\BW_i^{\top}\BS^{\ell}\|_F \\
(\text{use}~\eqref{eq:kappa2})\quad & \leq \frac{\eps^2d}{8}  + \frac{\sigma}{n}\cdot \left(1+\frac{3\kappa}{2}\right) \sqrt{nd}(\sqrt{d} + 4\sqrt{\log n}) \\
& \leq \frac{\eps^2 d}{8} + \frac{\eps^2 d}{2}  < \frac{\eps\sqrt{d}}{9}
\end{align*}
where $\eps$ is a constant satisfying $\eps\sqrt{d} \leq 1/32$ and
\[
\sigma \leq \frac{ \eps^2  d}{3\kappa+2} \cdot \frac{ \sqrt{n}}{\sqrt{d}(\sqrt{d} + 4\sqrt{\log n})}.
\]

Therefore, it holds
\[
d_F(\BS^{\ell+1},\BZ)\leq \frac{9}{2}\cdot d_F({\cal L}\BS^{\ell}, \BZ) \leq \frac{9}{2} \cdot \frac{\eps\sqrt{nd}}{9}\leq \frac{\eps\sqrt{nd}}{2}.
\]
Now we have proven $\BS^{\ell+1}\in\mathcal{N}_{\eps}\cap\mathcal{N}_{\xi,\infty}$. By induction, we have $\BS^{\ell}\in\mathcal{N}_{\eps}\cap\mathcal{N}_{\xi,\infty} $ for all $1\leq \ell\leq L$.
Now we can immediately invoke Lemma~\ref{lem:con} and get
\begin{align}
d_F(\BS^{\ell+1}, \BS^{\ell}) & = d_F(\PP_n({\cal L}\BS^{\ell}), \PP_n({\cal L}\BS^{\ell-1})) \leq \frac{1}{2} d_F(\BS^{\ell}, \BS^{\ell-1}). \nonumber
\end{align}
\end{proof}

\begin{lemma}\label{lem:exp2}
Conditioned on the Lemma~\ref{lem:exp1}, we have
\begin{align}\label{eq:df_exp}
d_F(\BS^{L+t}, \BS^{L+t-1}) & \leq 2^{-t} d_F(\BS^{L},\BS^{L-1}),\quad \forall t\in\mathbb{Z}_{\geq 0}
\end{align}
and $\{\BS^{L+t}\}_{t\geq 0}$ belong to $\mathcal{N}_{\eps} \cap\mathcal{N}_{\xi,\infty}$, i.e., 
\begin{align}
\max_{1\leq m\leq n}\left\|\BW_m^{\top}\BS^{\ell}\right\|_F & \leq \left(2 +\frac{3\kappa}{2}\right) \sqrt{nd}(\sqrt{d} + 4\sqrt{\log n}), \label{eq:kappa2_v2} \\
d_F(\BS^{\ell}, \BZ) & \leq  \eps\sqrt{nd},\label{eq:kappa3_v2}
\end{align}
for all $\ell\in\mathbb{Z}_{\geq 0}.$
Given~\eqref{eq:dFcon1} in Lemma~\ref{lem:exp1}, we have
\[
d_F(\BS^{\ell}, \BS^{\ell-1})  \leq 2^{-\ell+1} { d_F(\BS^1, \BS^0)}, \quad \forall \ell\in\mathbb{Z}_{\geq 0}.
\]
Here we set $L = n.$
\end{lemma}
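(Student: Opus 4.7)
The plan is to promote the bounds of Lemma~\ref{lem:exp1}, which were only available for $1\leq \ell\leq L=n$ because the concentration in~\eqref{eq:bdchisq} was taken via a finite union bound, to the full range $\ell\in\mathbb{Z}^+$. The key observation is that Lemma~\ref{lem:con} is a purely \emph{deterministic} contraction statement: once a pair $(\BS^{\ell-1},\BS^{\ell})$ lies in $\mathcal{N}_{\eps}\cap\mathcal{N}_{\xi,\infty}$, further iterations contract with factor $\rho<1/2$ without consuming any additional randomness. I would therefore argue by induction on $t\geq 1$ with the hypothesis that $\BS^{L+s}\in\mathcal{N}_{\eps}\cap\mathcal{N}_{\xi,\infty}$ and $d_F(\BS^{L+s},\BS^{L+s-1})\leq 2^{-s}d_F(\BS^L,\BS^{L-1})$ for every $s\leq t-1$; the base case is exactly Lemma~\ref{lem:exp1}.

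In the inductive step, applying Lemma~\ref{lem:con} to $\BS^{L+t-1},\BS^{L+t-2}\in\mathcal{N}_{\eps}\cap\mathcal{N}_{\xi,\infty}$ yields $d_F(\BS^{L+t},\BS^{L+t-1})\leq \tfrac12 d_F(\BS^{L+t-1},\BS^{L+t-2})$, which iterates to~\eqref{eq:df_exp}. Summing the resulting geometric series and invoking~\eqref{eq:dFcon1}, together with $d_F(\BS^1,\BS^0)\leq d_F(\BS^1,\BZ)+d_F(\BS^0,\BZ)\leq \eps\sqrt{nd}$ from Lemma~\ref{lem:init} and~\eqref{eq:kappa3}, produces the accumulated displacement
\[
d_F(\BS^{L+t},\BS^L)\;\leq\; d_F(\BS^L,\BS^{L-1})\;\leq\; 2^{-(L-1)}d_F(\BS^1,\BS^0)\;\leq\; 2^{-(L-1)}\eps\sqrt{nd}.
\]
For $\mathcal{N}_{\eps}$, the triangle inequality then gives $d_F(\BS^{L+t},\BZ)\leq \tfrac{\eps}{2}\sqrt{nd}+2^{-(L-1)}\eps\sqrt{nd}\leq \eps\sqrt{nd}$, which is~\eqref{eq:kappa3_v2}. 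For the $\ell_{\infty}$-type bound, picking $\BR\in\Od(d)$ that realizes $d_F(\BS^{L+t},\BS^L)$ and splitting gives
\[
\|\BW_m^{\top}\BS^{L+t}\|_F \;\leq\; \|\BW_m^{\top}\BS^L\|_F + \|\BW_m\|\cdot d_F(\BS^{L+t},\BS^L) \;\leq\; \left(\tfrac{3\kappa}{2}+1\right)\sqrt{nd}(\sqrt{d}+4\sqrt{\log n}) + 3\eps n d \cdot 2^{-(L-2)},
\]
using~\eqref{eq:kappa2} on the first piece and $\|\BW_m\|\leq 3\sqrt{nd}$ from~\eqref{eq:gaussnorm} on the second. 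For $L=n$ the second term is exponentially small, hence dominated by $3\eps\sqrt{nd}(\sqrt{d}+4\sqrt{\log n})$; this gives~\eqref{eq:kappa2_v2}, and since $3\eps+\tfrac{3\kappa}{2}+1\leq 3\kappa+1=\xi$ by $\kappa>2\eps$, the iterate also stays in $\mathcal{N}_{\xi,\infty}$, closing the induction.

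The main obstacle, and really the only nontrivial step, is this last one: for $\ell>L$ no leave-one-out independence between $\BW_m$ and $\BS^{\ell,m}$ can be exploited, so the $\mathcal{N}_{\xi,\infty}$ bound at large $\ell$ has to be \emph{inherited} from its counterpart at $\ell=L$ plus the perturbation $\|\BW_m\|\cdot d_F(\BS^{L+t},\BS^L)$. The previously established exponential contraction is exactly what drives this perturbation down to $O(2^{-n})$, so the choice $L=n$ in Lemma~\ref{lem:gauss} is essentially forced: a smaller window would leave a non-negligible Gaussian correction when passing from $\BS^L$ to $\BS^{L+t}$, while a larger window would exhaust the union bound on chi-squared tails. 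The closing corollary $d_F(\BS^{\ell},\BS^{\ell-1})\leq 2^{-\ell}d_F(\BS^0,\BZ)$ then follows by concatenating~\eqref{eq:dFcon1} for $\ell\leq L$ with the newly established~\eqref{eq:df_exp} for $\ell>L$, after absorbing the $O(1)$ factor that relates $d_F(\BS^1,\BS^0)$ to $d_F(\BS^0,\BZ)$ through the spectral initialization.
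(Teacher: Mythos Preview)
Your proposal is correct and follows essentially the same route as the paper: an induction that first uses the deterministic contraction Lemma~\ref{lem:con} on the pair $(\BS^{L+t-1},\BS^{L+t-2})$ to get~\eqref{eq:df_exp}, then bounds $d_F(\BS^{L+t},\BS^L)$ by a geometric sum, and finally transfers the $\mathcal{N}_{\xi,\infty}$ membership from $\BS^L$ to $\BS^{L+t}$ via the splitting $\|\BW_m^{\top}\BS^{L+t}\|_F\leq \|\BW_m^{\top}\BS^L\|_F+\|\BW_m\|\,d_F(\BS^{L+t},\BS^L)$. The only differences are cosmetic bookkeeping: the paper bounds $d_F(\BS^{L+k},\BS^L)$ by the constant $\eps$ (using $d_F(\BS^1,\BS^0)\leq 2\sqrt{nd}$ and $2^{-L+1}\sqrt{nd}<\eps$), whereas you bound it by the exponentially small $2^{-(L-1)}\eps\sqrt{nd}$; both feed into the same final inequality~\eqref{eq:kappa2_v2}.
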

\begin{proof}

We will prove~\eqref{eq:df_exp}
for all $t\in\mathbb{Z}_{\geq 0}$ by induction. This is true if $t=0$. Now we assume that this also holds for all $0\leq t \leq k$ and prove the case for $t = k+1.$ 
In fact, it suffices to show that $\BS^{L+k}\in\mathcal{N}_{\eps}\cap\mathcal{N}_{\xi,\infty}$. If so, we have 
\[
d_F(\BS^{L+k+1}, \BS^{L+k}) = d_F(\PP_n({\cal L}\BS^{L+k}), \PP_n({\cal L}\BS^{L+k-1})) \leq\frac{1}{2} d_F(\BS^{L+k}, \BS^{L+k-1}),
\]
following from Lemma~\ref{lem:con}.

Now we will start proving $\BS^{L+k}\in\mathcal{N}_{\eps}\cap\mathcal{N}_{\xi,\infty}$. Combining Lemma~\ref{lem:exp1} with the assumption indicates that
\[
\BS^L\in\mathcal{N}_{\eps}\cap\mathcal{N}_{\xi,\infty}, \quad d_F(\BS^{\ell}, \BS^{\ell-1}) \leq 2^{-\ell+1}d_F(\BS^1, \BS^0), \quad 1\leq \ell\leq L+k.
\]
By triangle inequality and the assumption that~\eqref{eq:df_exp} holds for $0\leq t\leq k$, then{
\begin{align*}
d_F(\BS^{L+k}, \BS^L) & \leq \sum_{t=1}^k d_F(\BS^{L+t}, \BS^{L+t-1})  \leq \sum_{t=1}^k 2^{-t}d_F(\BS^L, \BS^{L-1}) \\
(\text{Use}~\eqref{eq:dFcon1})\quad & \leq d_F(\BS^L, \BS^{L-1})  \leq 2^{-L+1}d_F(\BS^1, \BS^0) \\
&  \leq 2^{-L+1}\cdot \eps\sqrt{nd} \\
(\text{Use}~\eps\sqrt{d}\leq 1/32)\quad &  \leq 2^{-n+1}\sqrt{n}\cdot \frac{1}{32} < \frac{1}{3}
\end{align*}
where $d_F(\BS^1, \BS^0)\leq d_F(\BS^1,\BZ) + d_F(\BS^0,\BZ)\leq \eps\sqrt{nd}$ follows from~\eqref{eq:kappa3} and  $L$ is $n$.}
As a result, we have{
\begin{align*}
d_F(\BS^{L+k}, \BZ) & \leq d_F(\BS^{L+k}, \BS^L) + d_F(\BS^L, \BZ) \leq 2^{-n+1}\cdot\eps\sqrt{nd}+ \frac{\eps\sqrt{nd}}{2} \leq \eps\sqrt{nd}
\end{align*}
which implies $\BS^{L+k}\in\mathcal{N}_{\eps}$}. Denote $\BR$ as the orthogonal matrix which minimizes $\|\BS^{L+k} - \BS^L\BR\|_F$.
For $\|\BW_m^{\top}\BS^{L+k}\|_F$, we have
\begin{align*}
\|  \BW_m^{\top}\BS^{L+k} \|_F & \leq \|\BW_m^{\top}(\BS^{L+k} - \BS^{L}\BR)\|_F + \|\BW_m^{\top}\BS^L\|_F\\
(\text{use}~\eqref{eq:kappa2})\quad & \leq \|\BW_m\| \cdot d_F(\BS^{L+k}, \BS^L) + \left(1+\frac{3\kappa}{2} \right)\sqrt{nd} (\sqrt{d} + 4\sqrt{\log n}) \\
(d_F(\BS^{L+k}, \BS^L)\leq 1/3)\quad & \leq {3\sqrt{nd} \cdot \frac{1}{3}}+ \left(1+\frac{3\kappa}{2} \right) \sqrt{nd} (\sqrt{d} + 4\sqrt{\log n}) \\
& \leq \left(2+\frac{3\kappa}{2}\right)\sqrt{nd}(\sqrt{d} + 4\sqrt{\log n}) \\
& \leq \xi \sqrt{nd}(\sqrt{d} + 4\sqrt{\log n})
\end{align*}
where $\|\BW_m\|\leq 3\sqrt{nd}$, the bound on $\|\BW_m^{\top}\BS^L\|_F$ follows from~\eqref{eq:kappa2}, and $\xi = 3\kappa/2+2$ with $\kappa=2\eps$.

Now we have shown that $\BS^{L+k}$ and $\BS^{L+k-1}$ are in the contraction region $\mathcal{N}_{\eps}\cap\mathcal{N}_{\xi,\infty}$. Applying Lemma~\ref{lem:con} finishes the proof:
\begin{align*}
d_F(\BS^{L+k+1}, \BS^{L+k}) & = d_F (\PP_n({\cal L} \BS^{L+k}), \PP_n ({\cal L} \BS^{L+k-1})) \leq \frac{1}{2}d_F(\BS^{L+k}, \BS^{L+k-1}).
\end{align*}
\end{proof}

\subsection{Proof of Theorem~\ref{thm:sdp} and~\ref{thm:main}: Convergence to global maximizer}
\label{ss:main_proof}
\begin{proof}[\bf Proof of Theorem~\ref{thm:sdp} and~\ref{thm:main}]
The proofs of Theorem~\ref{thm:sdp} and~\ref{thm:main} are exactly the same except the initialization part. For the tightness of SDP, it suffices to let $\BS^0 = \BZ$ since we only need to show the global optimum of the SDP is exactly the MLE. For the generalized power method, we show that spectral initialization provides a sufficiently good starting point which is addressed in Section~\ref{ss:init}.

To prove Theorem~\ref{thm:sdp} and~\ref{thm:main}, it suffices to show that the iterates have a limiting point $\BS^{\infty}$ which satisfy $\BA\BS^{\infty} = \BLambda\BS^{\infty}$ and is also located in $\mathcal{N}_{\eps}\cap\mathcal{N}_{\xi,\infty}$, according to Theorem~\ref{thm:optlimit}.

{\bf Step 1:} The existence and convergence of the limiting point in $\mathcal{N}_{\eps}\cap\mathcal{N}_{\xi,\infty}$. 

Conditioned on Lemma~\ref{lem:exp2}, it holds that $d_F(\BS^{\ell}, \BS^{\ell-1}) \leq 2^{-\ell+1}d_F(\BS^1,\BS^0)$ for all $\ell\in\mathbb{Z}_{\geq 0}.$ 
Then $\{\BS^{\ell}\}_{\ell=0}^{\infty}$ is a Cauchy sequence since 
\begin{align}
d_F(\BS^{\ell+k}, \BS^{\ell}) & \leq \sum_{t=\ell}^{\ell+k-1} d_F(\BS^{t+1}, \BS^{t})\nonumber \\
& \leq \sum_{t=\ell}^{\ell+k-1} 2^{-t}d_F(\BS^1,\BS^0) \leq 2^{-\ell+1} d_F(\BS^1,\BS^0), \quad \forall k\in\mathbb{Z}_{\geq 0} \label{eq:cauchy}
\end{align}
is arbitrarily small for any sufficiently large $\ell.$
Note that $\mathcal{N}_{\eps}\cap\mathcal{N}_{\xi,\infty}$ is a closed and compact set in the Euclidean space. 
The distance function $d_F(\cdot,\cdot)$ is equivalent to the Frobenius norm on the quotient set $\Od(d)^{\otimes n}/\Od(d)$ where $\Od(d)$ is also a compact set. 
 Thus since $\BS^{\ell}$ is a Cauchy sequence, we know there must exist a limiting point $\BS^{\infty}.$ From~\eqref{eq:cauchy}, we also have the linear convergence of $\BS^{\ell}$ to the limiting point:
 \begin{equation}\label{eq:linear}
 d_F(\BS^{\infty}, \BS^{\ell}) \leq 2^{-\ell+1}d_F(\BS^1, \BS_0).
 \end{equation}
Now we want to prove that $\BS^{\infty}(\BS^{\infty})^{\top}$ is the unique global maximizer to~\eqref{def:sdp} by using Theorem~\ref{thm:optlimit}.

Note that~\eqref{eq:kappa2_v2} and~\eqref{eq:kappa3_v2} hold for all $\ell$. Therefore, sending $\ell$ to $\infty$ gives 
\begin{equation}\label{eq:Sinfty}
\max_{1\leq m\leq n}\|\BW_m^{\top}\BS^{\infty}\|_F \leq \left( 2+\frac{3\kappa}{2}\right)\sqrt{nd}(\sqrt{d} + \sqrt{\log n}),\quad
d_F(\BS^{\infty}, \BZ) \leq \eps\sqrt{nd}
\end{equation}
since $\mathcal{N}_{\eps}$ and $\mathcal{N}_{\xi,\infty}$ are compact sets. This implies $\BS^{\infty}\in\mathcal{N}_{\eps}\cap\mathcal{N}_{\xi,\infty}.$

{\bf Step 2:} The limiting point $\BS^{\infty}$ satisfies $\BA\BS^{\infty} = \BLambda\BS^{\infty}$ for some block-diagonal positive semidefinite matrix $\BLambda.$

Note that
\[
d_F({\cal P}_n({\cal L}\BS^{\infty}), \BS^{\ell}) \leq d_F({\cal P}_n({\cal L}\BS^{\infty}), {\cal P}_n({\cal L}\BS^{\ell})) + d_F({\cal P}_n({\cal L}\BS^{\ell}), \BS^{\ell}).
\]
By letting $\ell$ go to $\infty$ and using Lemma~\ref{lem:con}, it holds
\[
\lim_{\ell\rightarrow\infty}d_F({\cal P}_n({\cal L}\BS^{\infty}), {\cal P}_n({\cal L}\BS^{\ell})) \leq \frac{1}{2}\lim_{\ell\rightarrow\infty} d_F(\BS^{\infty}, \BS^{\ell}) = 0.
\]
For $d_F({\cal P}_n({\cal L}\BS^{\ell}), \BS^{\ell})$, we have
\[
d_F({\cal P}_n({\cal L}\BS^{\ell}), \BS^{\ell}) = d_F(\BS^{\ell+1}, \BS^{\ell}) \leq 2^{-\ell}d_F(\BS^1, \BS^0) \rightarrow 0, \quad \ell\rightarrow\infty,
\]
which follows from Lemma~\ref{lem:exp1}.
Therefore, 
$d_F({\cal P}_n({\cal L}\BS^{\infty}), \BS^{\infty}) = 0$ 
and we have ${\cal P}_n({\cal L}\BS^{\infty}) = \BS^{\infty}\BO$ for some orthogonal matrix $\BO.$

Note that $d_F(\BS^{\infty}, \BZ) \leq \eps\sqrt{nd}$ and 
$\left\|\BW_i^{\top}\BS^{\infty}\right\|_F \leq  \left(2+3\kappa/2\right)\sqrt{nd}(\sqrt{d} + \sqrt{\log n})$ imply $\sigma_{\min}([{\cal L}\BS^{\infty}]_i) > 0$.
Therefore, the limiting point $d_F({\cal P}_n({\cal L}\BS^{\infty}), \BS^{\infty}) = 0$ satisfies: 
\[
\PP([ {\cal L}\BS^{\infty} ]_i ) = ([ {\cal L}\BS^{\infty} ]_i [ {\cal L}\BS^{\infty} ]_i^{\top})^{-\frac{1}{2}} [ {\cal L}\BS^{\infty} ]_i= \BS_i^{\infty} \BO, \quad \forall 1\leq i\leq n
\]
where $\BO$ is an orthogonal matrix. It means
\begin{equation}\label{eq:1st}
 [ {\cal L}\BS^{\infty} ]_i = ([ {\cal L}\BS^{\infty} ]_i [ {\cal L}\BS^{\infty} ]_i^{\top})^{\frac{1}{2}}\BS^{\infty}_i \BO.
\end{equation}
Next, we will show that $\BO = \I_d.$

From~\eqref{eq:1st}, we know that
\[
\sum_{i=1}^n (\BS^{\infty}_i)^{\top}[{\cal L}\BS^{\infty}]_i = \sum_{i=1}^n(\BS^{\infty}_i)^{\top} ([ {\cal L}\BS^{\infty} ]_i [ {\cal L}\BS^{\infty} ]_i^{\top})^{\frac{1}{2}}\BS^{\infty}_i \BO
\]
For the term on the left side,
\begin{align*}
\sum_{i=1}^n (\BS^{\infty}_i)^{\top}[{\cal L}\BS^{\infty}]_i & =\frac{1}{n} \sum_{i,j} (\BS^{\infty}_i)^{\top}(\I + \sigma\BW_{ij})\BS^{\infty}_j \\
& = \frac{1}{n}(\BS^{\infty})^{\top}\BZ\BZ^{\top}\BS^{\infty} + \frac{\sigma}{n} (\BS^{\infty})^{\top}\BW\BS^{\infty} \\
& \succeq (n^{-1}\sigma_{\min}^2(\BZ^{\top}\BS^{\infty}) - \sigma \|\BW\|)\I_d \\
& \geq \left(n \left(1 - \frac{\eps^2 d}{2}\right)^2 - 3\sigma \sqrt{nd} \right)\I_d \succ 0
\end{align*}
where
\[
\sigma_{\min}(\BZ^{\top}\BS^{\infty}) \geq \left(1-\frac{\eps^2d}{2}\right)n, \quad \sigma < \frac{1}{3}\left(1 - \frac{\eps^2 d}{2}\right)^2 \sqrt{\frac{n}{d}}.
\]

Now we let
\[
\BX = \sum_{i=1}^n (\BS^{\infty}_i)^{\top}[{\cal L}\BS^{\infty}]_i  \succ 0, \quad  \BY= \sum_{i=1}^n(\BS^{\infty}_i)^{\top} ([ {\cal L}\BS^{\infty} ]_i [ {\cal L}\BS^{\infty} ]_i^{\top})^{\frac{1}{2}}\BS^{\infty}_i \succeq 0.
\]
For $\BX\succ 0$ and $\BY\succeq 0$ which satisfy 
$\BX =\BY\BO$, we will show $\BO$ must be $\I_d.$
This follows from
\[
\BX^{-\frac{1}{2}} \BY \BX^{-\frac{1}{2}} = \BX^{\frac{1}{2}} \BO^{\top} \BX^{-\frac{1}{2}}
\]
The left hand side is still positive semidefinite and all its eigenvalues are real and nonnegative. The right hand side has eigenvalues which are in the form of $\{\exp(\mi \theta_i)\}_{i=1}^n$ since $\BO$ is orthogonal. Thus $\theta_i$ must be zero and its means $\BO = \I_d.$

Therefore, following from~\eqref{eq:1st}, the limiting point satisfies
\[
\BA\BS^{\infty} = \BLambda \BS^{\infty}
\]
where $\BLambda_{ii} = n([ {\cal L}\BS^{\infty} ]_i [ {\cal L}\BS^{\infty} ]_i^{\top})^{\frac{1}{2}} \succ 0.$ Moreover, we have shown that $\BS^{\infty}\in \mathcal{N}_{\eps}\cap\mathcal{N}_{\xi,\infty}$. Applying Theorem~\ref{thm:optlimit} immediately gives the global optimality of $\BS^{\infty}(\BS^{\infty})^{\top}$ in the SDP relaxation~\eqref{def:sdp}. 

{\bf Step 3: } A block-wise error bound between $\BS^{\infty}$ and $\BZ$.

Now we will derive a block-wise error bound of $\BS^{\infty}$,  
\[
\min_{\BO\in\Od(d)}\max_{1\leq i\leq n}\| \BS^{\infty}_i - \BO\|_F.
\]
This is bounded by $\max_{1\leq i\leq n}\| \BS^{\infty}_i - \BQ\|_F$ where $\BQ = \PP(\BZ^{\top}\BS^{\infty}).$
Note that $\BS^{\infty}$ satisfies~\eqref{eq:Sinfty} and Lemma~\ref{lem:sigmamin} implies $\sigma_{\min} \left( \BZ^{\top} \BS^{\infty}\right) \geq \left( 1-\eps^2d/2\right)n$. For each $1\leq i\leq n$, we have $\BS_i^{\infty} = \PP( [{\cal L}\BS^{\infty}]_i )$ and
\begin{align*}
\| \BS_i^{\infty} - \BQ \|_F & = \|  \PP( [{\cal L}\BS^{\infty}]_i ) - \PP(n^{-1}\BZ^{\top}\BS^{\infty})\|_F \\
& \leq \frac{2}{1-\eps^2 d} \| [{\cal L}\BS^{\infty}]_i - n^{-1}\BZ^{\top}\BS^{\infty}\|_F \\
& = \frac{2}{1-\eps^2 d} \left\| \frac{\sigma}{n}\BW_i^{\top}\BS^{\infty} \right\|_F \\
(\text{use~\eqref{eq:Sinfty}})\quad & \leq  \frac{2}{1-\eps^2 d} \cdot  \frac{\sigma}{n} \cdot \left(2 + \frac{3\kappa}{2}\right)\sqrt{nd}(\sqrt{d} + 4\sqrt{\log n}) \\
& \lesssim \sigma\sqrt{n^{-1}d}(\sqrt{d} + 4\sqrt{\log n})
\end{align*}
which follows from Lemma~\ref{lem:phase} and~\eqref{eq:Sinfty}.
\end{proof}

\subsection{Proof of Lemma~\ref{lem:init}: Initialization}\label{ss:init}

Recall that the initialization in Algorithm 1 involves two steps: (a) compute the top $d$ eigenvectors $\BPhi$ of $\BA = \BZ\BZ^{\top} + \sigma
\BW$; (b) round each $d\times d$ block $\BPhi_i$ of $\BPhi$ into an orthogonal matrix and use $\BS_i^0 = \PP(\BPhi_i)$ as the initialization. Note that Lemma~\ref{lem:init} requires the initial value $\BS^{0}$ to be very close to the ground true $\BZ$ and also not highly ``aligned" with the noise $\BW_m$. It is a seemingly  simple task to justify that:
if the noise is small, then applying Davis-Kahan theorem would provide a tight bound of $\min_{\BQ\in\Od(d)}\|\BPhi - \BZ\BQ\|$. Then we believe that $\PP_n(\BPhi)$ is also close to $\BZ$. 
However, due to the rounding procedure in the initialization, we actually need a more careful analysis of the eigenvectors $\BPhi$. 
%More precisely, the rounding procedure equals
%\[
%\BS_i^0 = \PP(\BPhi_i), \quad 1\leq i\leq n.  %= (\BPhi_i\BPhi_i^{\top})^{-\frac{1}{2}}\BPhi_i
%\]
%which involves a block-wise inversion. ({\color{red} Need to specify why $\BPhi_i$ is invertible.})

In order to obtain $d_F(\BS^0,\BZ) \leq\eps\sqrt{nd}$, we actually need to establish a block-wise bound between $\BPhi$ and $\BZ$, i.e., 
\[
\min_{\BQ\in\Od(d)}\max_{1\leq i\leq n} \|\BPhi_i - \BQ\|.
\]
As a result, what we need is much more than classical eigenspace perturbation bound given by Davis-Kahan theorem~\cite{DK70} which provides an error bound in $\|\cdot\|$ or $\|\cdot\|_F$. 

The block-wise perturbation bound on the eigenvectors is a natural generalization of recent entrywise error bound on eigenvectors arising from spectral clustering, $\mathbb{Z}_2$-synchronization, and matrix completion. The author's early work~\cite{L20b} uses the popular leave-one-out technique to obtain such a block-wise error bound in several applications, including the scenario discussed in this paper. One core result for the initialization reads as follows:
\begin{theorem}[Theorem 5.1 in~\cite{L20b}]\label{thm:loo}
Suppose $(\BPhi,\BLambda)$ are the top $d$ eigenvectors and eigenvalues of $\BA$ satisfying $\BA\BPhi = \BPhi\BLambda$, $\BPhi^{\top}\BPhi = n\I_d$ and $\BQ= \PP(\BZ^{\top}\BPhi)$. Let
\begin{equation}\label{def:etaod}
\eta := \sigma n^{-\frac{1}{2}}(\sqrt{d} + \sqrt{\log n}) , \quad \sigma< C_0^{-1} \sqrt{n}(\sqrt{d} + \sqrt{\log n})^{-1}
\end{equation}
where $C_0 > 0$ is an absolute large constant.
Then with probability at least $1 - O(n^{-1}d^{-1})$, it holds 
\begin{equation}\label{eq:key-init}
\|\BPhi_i - [\BA\BZ]_i\BQ\BLambda^{-1}\|  \lesssim \eta \max_{1\leq i\leq n}\|\BPhi_i\|,
\end{equation}
uniformly for all $1\leq i\leq n$ where $[\BA\BZ]_i$ is the $i$th block of $\BA\BZ$. Moreover, we have
\begin{equation}\label{eq:phi_sv}
| \sigma_j(\BPhi_i) - 1| \lesssim \eta,\qquad \forall 1\leq i\leq n,~1\leq j\leq d.
\end{equation}
\end{theorem}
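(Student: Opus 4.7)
The eigenrelation $\BA\BPhi=\BPhi\BLambda$, read blockwise, gives $\BPhi_i=[\BA\BPhi]_i\BLambda^{-1}$, so
\[
\BPhi_i - [\BA\BZ]_i\BQ\BLambda^{-1} \;=\; [\BA(\BPhi-\BZ\BQ)]_i\BLambda^{-1}.
\]
Since $\BA=\BZ\BZ^{\top}+\sigma\BW$ and $\BZ^\top=[\I_d,\ldots,\I_d]$ (so the $i$th block row of $\BZ\BZ^\top$ equals $\BZ^\top$), this splits into a signal piece $\BZ^\top(\BPhi-\BZ\BQ)\BLambda^{-1}$ and a noise piece $\sigma\BW_i^\top(\BPhi-\BZ\BQ)\BLambda^{-1}$. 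Weyl's inequality, applied to $\BA$ and $\BZ\BZ^\top$ with perturbation $\|\sigma\BW\|\lesssim\sigma\sqrt{nd}\ll n$, gives $\|\BLambda-n\I_d\|\lesssim\sigma\sqrt{nd}$ and hence $\|\BLambda^{-1}\|\lesssim 1/n$.

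\textbf{Signal piece.} Global Davis--Kahan applied to $\BA$ and $\BZ\BZ^\top$ yields $\|\BPhi-\BZ\BQ\|_F\lesssim\sigma\sqrt{nd}/n$, and combined with $\|\BZ\|=\sqrt n$ one gets $\|\BZ^\top(\BPhi-\BZ\BQ)\|\lesssim\sigma\sqrt d$. Multiplying by $\|\BLambda^{-1}\|\lesssim 1/n$ contributes $O(\sigma\sqrt{d}/n)$, which is comfortably absorbed into $\eta$.

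\textbf{Noise piece via leave-one-out.} The obstacle is that $\BW_i$ and $\BPhi$ are statistically coupled. I introduce the surrogate $\BA^{(i)}=\BZ\BZ^\top+\sigma\BW^{(i)}$ obtained by zeroing out the $i$th block row and column of $\BW$ (as in~\eqref{eq:Am}), let $\BPhi^{(i)}$ be its top-$d$ eigenvectors with the same normalization, and set $\BQ^{(i)}:=\PP((\BPhi^{(i)})^\top\BPhi)\in\Od(d)$. I then decompose
\[
\BW_i^\top(\BPhi-\BZ\BQ) \;=\; \BW_i^\top(\BPhi-\BPhi^{(i)}\BQ^{(i)}) \;+\; \BW_i^\top\BPhi^{(i)}\BQ^{(i)} \;-\; \BW_i^\top\BZ\BQ.
\]
The first summand is controlled by Davis--Kahan applied to $\BA$ and $\BA^{(i)}$: the perturbation $\sigma(\BW-\BW^{(i)})$ has operator norm $O(\sigma\sqrt{nd})$ while the gap is $\Theta(n)$, so $\|\BPhi-\BPhi^{(i)}\BQ^{(i)}\|$ is small; combined with $\|\BW_i\|\lesssim\sqrt{nd}$ this summand contributes at most $O(\sigma\,nd)$, which after division by $n$ lies within $\eta$ (up to a $\max_j\|\BPhi_j\|$ factor that I will close below). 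For the second summand, $\BPhi^{(i)}$ is independent of $\BW_i$, so conditional on $\BPhi^{(i)}$ each entry of $\BW_i^\top\BPhi^{(i)}\in\RR^{d\times d}$ is $\mathcal{N}(0,n)$; Lemma~\ref{lem:chisq} gives $\|\BW_i^\top\BPhi^{(i)}\|_F\lesssim\sqrt{nd}(\sqrt d+\sqrt{\log n})$, exactly the $\eta$-scaling. The third summand $\BW_i^\top\BZ$ is a fixed Gaussian matrix admitting the same tail bound. A union bound over $i$ with $\gamma$ chosen sufficiently large in Lemma~\ref{lem:chisq} yields the failure probability $O(n^{-1}d^{-1})$.

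\textbf{Self-bounding and the singular-value statement.} Assembling the pieces gives
\[
\max_{1\le i\le n}\|\BPhi_i - [\BA\BZ]_i\BQ\BLambda^{-1}\| \;\lesssim\; \eta\,\max_{1\le j\le n}\|\BPhi_j\|,
\]
where the right-hand factor is inherited from the first leave-one-out summand. To close the loop, observe that $[\BA\BZ]_i\BQ\BLambda^{-1} = (n\I_d+\sigma\BW_i^\top\BZ)\BQ\BLambda^{-1}$ has operator norm $1+O(\eta)$ by Weyl (using the Gaussian tail on $\BW_i^\top\BZ$ and the bound on $\BLambda^{-1}$), so the triangle inequality gives $\max_j\|\BPhi_j\|\le 1+O(\eta)$; feeding this back removes the $\max$ factor from the right-hand side and establishes~\eqref{eq:key-init}. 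For~\eqref{eq:phi_sv}, note that $[\BA\BZ]_i\BQ\BLambda^{-1}$ itself has singular values within $O(\eta)$ of those of $\BQ$, which are all $1$; Weyl applied to~\eqref{eq:key-init} transfers this to $\sigma_j(\BPhi_i)$. The main obstacle is precisely the coupling between $\BW_i$ and $\BPhi$: a naive $\|\BW_i\|\cdot\|\BPhi-\BZ\BQ\|$ bound loses a factor of $\sqrt d$, and the leave-one-out decoupling combined with the Gaussian concentration on $\BW_i^\top\BPhi^{(i)}$ is what delivers the optimal $(\sqrt d+\sqrt{\log n})$ scaling in $\eta$.
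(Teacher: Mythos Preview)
Your overall architecture matches the paper's: the identity $\BPhi_i-[\BA\BZ]_i\BQ\BLambda^{-1}=[\BA(\BPhi-\BZ\BQ)]_i\BLambda^{-1}$, the signal/noise split, the leave-one-out decoupling, and the self-bounding argument for $\max_j\|\BPhi_j\|$ are all exactly what the paper does. The second and third summands and the derivation of~\eqref{eq:phi_sv} are fine.

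The gap is in your first leave-one-out summand. You bound $\|\BPhi-\BPhi^{(i)}\BQ^{(i)}\|$ via the crude Davis--Kahan estimate using only the operator norm of the perturbation, $\|\sigma(\BW-\BW^{(i)})\|\lesssim\sigma\sqrt{nd}$. With the normalization $\BPhi^\top\BPhi=n\I_d$ this yields $\|\BPhi-\BPhi^{(i)}\BQ^{(i)}\|\lesssim\sigma\sqrt{d}$, hence $\sigma\|\BW_i\|\cdot\|\BPhi-\BPhi^{(i)}\BQ^{(i)}\|\cdot\|\BLambda^{-1}\|\lesssim\sigma^2 d/\sqrt{n}$. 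For this to be $\lesssim\eta=\sigma(\sqrt d+\sqrt{\log n})/\sqrt n$ you would need $\sigma\lesssim 1/\sqrt d$, which is far smaller than the theorem's allowed range $\sigma\lesssim\sqrt n/(\sqrt d+\sqrt{\log n})$. (Your stated ``$O(\sigma nd)$'' appears to drop the gap division and is even looser.) So the argument as written does not close.

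The paper fixes this by using the sharper Davis--Kahan form from Theorem~\ref{thm:DK}, namely $\|\BPhi-\BPhi^{(i)}\BQ^{(i)}\|\lesssim\|\BE\,\BPhi^{(i)}\|/\delta$ rather than $\|\BE\|/\delta$; this is Lemma~\ref{lem:dk2}. The point is that $(\BW-\BW^{(i)})\BPhi^{(i)}$ has only two nontrivial pieces: the blocks $\BW_{ji}\BPhi^{(i)}_i$ for $j\neq i$, controlled by $\|\BW_i\|\cdot\max_j\|\BPhi^{(i)}_j\|$, and the $i$th block $\BW_i^\top\BPhi^{(i)}$, which by independence concentrates at $\sqrt n(\sqrt d+\sqrt{\log n})$. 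This yields $\|\BPhi-\BPhi^{(i)}\BQ^{(i)}\|\lesssim\eta\max_j\|\BPhi_j\|$, and now the first summand contributes $\sigma\sqrt{d/n}\cdot\eta\max_j\|\BPhi_j\|\lesssim\eta\max_j\|\BPhi_j\|$ under the stated assumption on $\sigma$. This is also where the $\max_j\|\BPhi_j\|$ factor you allude to actually originates; it does not come out of the crude bound. A minor side remark: your Davis--Kahan estimate for the signal piece, $\|\BPhi-\BZ\BQ\|_F\lesssim\sigma\sqrt{nd}/n$, is off by $\sqrt n$ under the normalization $\BPhi^\top\BPhi=n\I_d$, though this does not affect that (easy) term.
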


Interested readers may refer to~\cite{L20b} for more details and we do not repeat the proof again here. For the completeness of presentation, we will discuss the main idea of proving~\eqref{eq:key-init} and how to derive~\eqref{eq:phi_sv} from~\eqref{eq:key-init}. Note that our goal is to estimate each block of $\BPhi$. The idea is to apply one step power iteration and use the outcome to approximate $\BPhi$:
\[
\BPhi - \BA \BZ \BQ \BLambda^{-1} = \BA ( \BPhi - \BZ \BQ )\BLambda^{-1} 
\]
where ${\BA\BPhi = \BPhi\BLambda}.$ Therefore, the error bound on the $i$th block is
\[
\|\BPhi_i - [\BA \BZ]_i \BQ \BLambda^{-1}\| \leq \|[\BA(\BPhi - \BZ\BQ)]_i\|\|\BLambda^{-1}\|
\]
Here $\lambda_{\min}(\BLambda) \geq n - \sigma\|\BW\| > n/2$ follows from Weyl's theorem and thus $\|\BLambda^{-1}\|\leq 2/n.$
It suffices to bound $\|[\BA(\BPhi - \BZ\BQ)]_i\|$. However, the $i$th block column of $\BA$ and $\BPhi$ are~\emph{statistically dependent} and thus we cannot apply concentration inequality directly to obtain a tight bound. To overcome this issue, we apply the leave-one-out technique again: approximate $\BPhi$ with $\BPhi^{(i)}$ which consists of the top $d$ eigenvectors of $\BA^{(i)}$ defined in~\eqref{eq:Am}: 
\begin{align*}
\|\BA_i^{\top}(\BPhi - \BZ\BQ)\| & = \|(\BZ+\sigma\BW_i)^{\top}(\BPhi - \BZ\BQ) \| \\
& \leq \|\BZ^{\top}(\BPhi - \BZ\BQ) \| + \sigma\|\BW_i^{\top}(\BPhi -\BPhi^{(i)} ) \| + \sigma\|\BW_i^{\top}(\BPhi^{(i)}- \BZ\BQ) \| \\
& \leq \sqrt{n}\|\BPhi - \BZ\BQ\| + \sigma \|\BW_i\|\|\BPhi - \BPhi^{(i)}\| + \sigma \|\BW_i^{\top}(\BPhi^{(i)} - \BZ\BQ)\|.
\end{align*}
The first term is bounded by using standard Davis-Kahan argument; for the second term, we first use Davis-Kahan theorem again to bound $\|\BPhi - \BPhi^{(i)}\|$ which is quite small and then control the term by $\|\BW_i\|\|\BPhi - \BPhi^{(i)}\|$; for the third term, due to the independence between $\BW_i$ and $\BPhi^{(i)} -\BZ\BQ$, one can also get a tight bound by using concentration inequality of a Gaussian random matrix~\cite{V18}. This finishes the proof of~\eqref{eq:key-init}. 

With~\eqref{eq:key-init}, we can derive~\eqref{eq:phi_sv}. First of all, $\max_{1\leq i\leq n}\|\BPhi_i\| = O(1)$ holds since~\eqref{eq:key-init} gives
\[
\max_{1\leq i\leq n}\|\BPhi_i\| \lesssim \frac{1}{1-C'\eta}\| [\BA\BZ]_i\BQ\BLambda^{-1} \| \leq C_0 \|[\BA\BZ]_i\|\|\BLambda^{-1}\| \leq 4C_0.
\]
Here $C_0$ and $C'$ are two constants, and the bound on $\|[\BA\BZ]_i\|$ follows from
\[
\|[\BA\BZ]_i\| =\left\|\sum_{j=1}^n (\I_d + \sigma\BW_{ij})\right\| \leq n + \sigma \|\BW_i^{\top}\BZ\| \leq 2n
\]
if $\sigma < c_0\sqrt{n}/(\sqrt{d} +\sqrt{\log n})$ where $\|\BW_i^{\top}\BZ\|=O( \sqrt{n}(\sqrt{d} + \sqrt{\log n}))$ holds with high probability for all $1\leq i\leq n$ implied by~\cite[Theorem 4.4.5]{V18}.  Now we have
\begin{align*}
\|\BPhi_i - \BQ\| & \leq \|\BPhi_i - [\BA\BZ]_i\BQ\BLambda^{-1}\| + \|[\BA\BZ]_i\BQ\BLambda^{-1} - \BQ\| \\
\text{(use~\eqref{eq:key-init})}~~  & \lesssim \eta + \| (n\I_d + \BW_i^{\top}\BZ)\BQ\BLambda^{-1} - \BQ \|\\
& \lesssim \eta + \sigma\| \BW_i^{\top}\BZ\BQ\BLambda^{-1} \| + \| \BQ (n\BLambda^{-1} - \I_n) \| \\
& \lesssim \eta + \sigma \|\BW_i^{\top}\BZ\| \|\BLambda^{-1}\| + \|\BLambda - n\I_d\| \|\BLambda^{-1}\| \\
& \lesssim \eta + \sigma \sqrt{n}(\sqrt{d} + \sqrt{\log n}) \cdot \frac{2}{n} + \sigma\|\BW\| \cdot \frac{2}{n}  \lesssim 6\eta
\end{align*}
where $\|\BLambda - n\I_d\| \leq \sigma\|\BW\|$.
These arguments lead to~\eqref{eq:phi_sv} in Lemma~\ref{thm:loo}.

The next lemma is used to estimate the distance $d_F(\BS^{0}, \BS^{0,m})$ in~\eqref{eq:kappa2-init}. In fact, it suffices to estimate the distance between $\BPhi$ and $\BPhi^{(m)}$, i.e., the top $d$ eigenvectors of $\BA$ and $\BA^{(m)}$ respectively. Note that $\BA$ and $\BA^{(m)}$ differ only by one block column and row, and thus their eigenvectors are also very close, following from Davis-Kahan theorem. 
\begin{lemma}[Lemma 5.9 in~\cite{L20b}]\label{lem:dk2}
Let $\BPhi$ and $\BPhi^{(m)}$ be the top $d$ eigenvectors of $\BA$ and $\BA^{(m)}$ with $\BPhi^{\top}\BPhi  = (\BPhi^{(m)})^{\top}\BPhi^{(m)}  = n\I_d$ respectively. Then 
\begin{align*}
\|\BPhi - \BPhi^{(m)}\BQ_m \| \leq \sqrt{2}\| ( \I - n^{-1} \BPhi \BPhi^{\top} )\BPhi^{(m)}\| \lesssim\sigma  n^{-\frac{1}{2}}(\sqrt{d} + \sqrt{\log n}) \cdot \max_{1\leq i\leq n} \|\BPhi_i\|.
\end{align*}
where $\BQ_m = \PP((\BPhi^{(m)})^{\top}\BPhi)$.
In other words, 
\[
n - \sigma_{\ell}\left( \BPhi^{\top} \BPhi^{(m)} \right) \lesssim  \sigma (\sqrt{d} + \sqrt{\log n})\max_{1\leq i\leq n} \|\BPhi_i\|, \quad 1\leq \ell\leq d
\]
and
\[
\| (\BPhi^{\top} \BPhi^{(m)} (\BPhi^{(m)})^{\top}\BPhi)^{1/2} - n\I_d \| \lesssim \sigma(\sqrt{d} + \sqrt{\log n}) \max_{1\leq i\leq n} \|\BPhi_i\|.
\]

\end{lemma}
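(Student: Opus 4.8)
The plan is to read everything off the sin-$\Theta$ form of the Davis--Kahan theorem (Theorem~\ref{thm:DK}) applied to the pair $\BA$ and $\BA^{(m)} = \BA + \BE$, where the perturbation $\BE := \BA^{(m)} - \BA = -\sigma(\BW - \BW^{(m)})$ is, by construction, supported only on the $m$th block row and block column. The first inequality of the lemma is the familiar ``$\sqrt2$ trick'': with $\BPsi := n^{-1/2}\BPhi$, $\BPsi^{(m)} := n^{-1/2}\BPhi^{(m)}$ the orthonormalized bases and $\theta_1,\dots,\theta_d$ the principal angles between $\Ran(\BPhi)$ and $\Ran(\BPhi^{(m)})$, the singular values of $\BPsi - \BPsi^{(m)}\BQ_m$ are $2\sin(\theta_\ell/2)$ and those of $(\I - \BPsi\BPsi^{\top})\BPsi^{(m)}$ are $\sin\theta_\ell$; since $2\sin(\theta/2) \le \sqrt2\sin\theta$ on $[0,\pi/2]$ and the two quantities in the lemma equal $\sqrt n$ times $\|\BPsi - \BPsi^{(m)}\BQ_m\|$ and $\|(\I-\BPsi\BPsi^{\top})\BPsi^{(m)}\|$ respectively, the inequality follows.

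For the quantitative part, the spectral gap is of order $n$: Weyl's inequality gives $\lambda_d(\BA) \ge n - \sigma\|\BW\|$ and $\lambda_{d+1}(\BA^{(m)}) \le \lambda_{d+1}(\BZ\BZ^{\top}) + \sigma\|\BW^{(m)}\| = \sigma\|\BW^{(m)}\|$, and both $\|\BW\|$ and $\|\BW^{(m)}\|$ are at most $3\sqrt{nd}\ll n$ under the standing assumption on $\sigma$, so $\delta := \lambda_d(\BA) - \lambda_{d+1}(\BA^{(m)}) \gtrsim n$. Davis--Kahan then yields $\|(\I-\BPsi\BPsi^{\top})\BPsi^{(m)}\| \le \delta^{-1}\|\BE\BPsi^{(m)}\|$, hence
\[
\big\|(\I - n^{-1}\BPhi\BPhi^{\top})\BPhi^{(m)}\big\| \;\le\; \delta^{-1}\|\BE\BPhi^{(m)}\| \;\lesssim\; n^{-1}\|\BE\BPhi^{(m)}\|,
\]
and it remains to bound $\|\BE\BPhi^{(m)}\|$. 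Since $\BE$ only occupies the $m$th block row and column, one may write $\BE = -\sigma(\be_m\BW_m^{\top} + \BW_m\be_m^{\top} - \be_m\BW_{mm}\be_m^{\top})$, where $\be_m\in\RR^{nd\times d}$ carries $\I_d$ in its $m$th block and zeros elsewhere and $\BW_m$ is the $m$th block column of $\BW$; applying this to $\BPhi^{(m)}$, using $\|\be_m\|=1$, $\be_m^{\top}\BPhi^{(m)} = \BPhi^{(m)}_m$, and $\|\BW_m\|, \|\BW_{mm}\| \le \|\BW\| \le 3\sqrt{nd}$, gives
\[
\|\BE\BPhi^{(m)}\| \;\le\; \sigma\Big(\|\BW_m^{\top}\BPhi^{(m)}\| + 6\sqrt{nd}\,\|\BPhi^{(m)}_m\|\Big).
\]

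The leave-one-out construction is made precisely so that $\BPhi^{(m)}$ is a measurable function of $\BW^{(m)}$ and hence statistically independent of $\BW_m$. Conditioning on $\BPhi^{(m)}$, the matrix $\BW_m^{\top}\BPhi^{(m)}\in\RR^{d\times d}$ is Gaussian (up to the lower-order contribution of the symmetric block $\BW_{mm}$), each column having covariance $(\BPhi^{(m)})^{\top}\BPhi^{(m)} = n\I_d$, so a standard Gaussian-matrix tail bound, e.g.\ \cite[Thm.~4.4.5]{V18}, gives $\|\BW_m^{\top}\BPhi^{(m)}\| \lesssim \sqrt n(\sqrt d + \sqrt{\log n})$ with high probability, and a union bound over $1\le m\le n$ costs only a logarithmic factor. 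Together with $\|\BPhi^{(m)}_m\| \le \max_i\|\BPhi^{(m)}_i\| \lesssim 1$ --- which follows by rerunning the argument of Theorem~\ref{thm:loo} with $\BA^{(m)} = \BZ\BZ^{\top} + \sigma\BW^{(m)}$ (of the same structure) in place of $\BA$, or by a short bootstrap once $\|\BPhi - \BPhi^{(m)}\BQ_m\|$ is known to be small --- one gets $\|\BE\BPhi^{(m)}\| \lesssim \sigma\sqrt n(\sqrt d + \sqrt{\log n})\max_i\|\BPhi_i\|$ (using $\max_i\|\BPhi_i\|\gtrsim1$), and dividing by $n$ yields the displayed sin-$\Theta$ bound.

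For the two corollaries I would not pass through $1-\cos\theta_\ell$ but argue directly: $\BQ_m\in\Od(d)$ implies $\BPhi^{\top}\BPhi^{(m)}$ and $\BPhi^{\top}\BPhi^{(m)}\BQ_m$ share their singular values, and, using $\BPhi^{\top}\BPhi = n\I_d$, $\|\BPhi\| = \sqrt n$, and the bound just proved,
\[
\|\BPhi^{\top}\BPhi^{(m)}\BQ_m - n\I_d\| = \|\BPhi^{\top}(\BPhi^{(m)}\BQ_m - \BPhi)\| \le \sqrt n\,\|\BPhi^{(m)}\BQ_m - \BPhi\| \lesssim \sigma(\sqrt d + \sqrt{\log n})\max_i\|\BPhi_i\|;
\]
Weyl for singular values then forces $\max_{1\le\ell\le d}|n - \sigma_\ell(\BPhi^{\top}\BPhi^{(m)})| \lesssim \sigma(\sqrt d + \sqrt{\log n})\max_i\|\BPhi_i\|$, which is the second estimate, and since $(\BPhi^{\top}\BPhi^{(m)}(\BPhi^{(m)})^{\top}\BPhi)^{1/2}$ is exactly the symmetric positive semidefinite matrix whose eigenvalues are the $\sigma_\ell(\BPhi^{\top}\BPhi^{(m)})$, the same bound gives the third estimate. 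I expect the main obstacle to be the concentration estimate for $\|\BW_m^{\top}\BPhi^{(m)}\|$: everything hinges on cleanly decoupling $\BPhi^{(m)}$ from $\BW_m$ and absorbing the mild dependence through the shared symmetric block $\BW_{mm}$, and on certifying $\max_i\|\BPhi^{(m)}_i\| = O(1)$ without circularity.
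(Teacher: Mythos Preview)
Your proposal is correct and follows essentially the same route as the paper: apply Davis--Kahan to the pair $\BA$, $\BA^{(m)}$, use the order-$n$ spectral gap, and bound $\|\BE\BPhi^{(m)}\|$ by exploiting that $\BE$ lives on the $m$th block row/column together with the independence of $\BW_m$ and $\BPhi^{(m)}$. The paper's own proof here is a one-line pointer to Theorem~\ref{thm:DK} and to Lemma~5.9 of~\cite{L20b}, so your write-up is in fact considerably more detailed than what appears in the paper; the only cosmetic point is that the lemma states the bound in terms of $\max_i\|\BPhi_i\|$ while your argument naturally produces $\max_i\|\BPhi^{(m)}_i\|$, and you are right that under~\eqref{def:etaod} both are $\Theta(1)$ by (the $\BA^{(m)}$-version of) Theorem~\ref{thm:loo}, so the conversion is harmless.
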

\begin{proof}
The proof follows from setting $\BX = \BA$ and $\BX_{\BE} = \BA^{(m)}$ and then applying Davis-Kahan perturbation bound in Theorem~\ref{thm:DK}. 
\end{proof}

Now we will show that the initialization step produces $\BS^0$ which satisfies~\eqref{eq:kappa1-init},~\eqref{eq:kappa2-init}, and~\eqref{eq:kappa3-init}.

\begin{proof}[\bf Proof of Lemma~\ref{lem:init}]
By the Davis-Kahan theorem with $\BX = \BZ\BZ^{\top}$ and $\BX_{\BE} = \BX + \sigma\BW$, it holds that
\begin{align*}
\|\BPhi - \BZ\BQ\|_F & \leq \sqrt{2}\left\| \left( \I_{nd} - \frac{1}{n} \BPhi  \BPhi ^{\top} \right)\BZ\right\|_F \\
& \leq \sqrt{2}\cdot \frac{1}{n - \sigma\|\BW\|}\cdot \| (\BA - \BZ\BZ^{\top})\BZ \|_F \\
&  \lesssim \sigma n^{-1} \cdot \| \BW\BZ \|_F \\
& \lesssim \sigma n^{-1}\cdot 3\sqrt{nd}\cdot \sqrt{nd} \lesssim \sigma d
\end{align*}
where $\|\BZ\|_F = \sqrt{nd}$ and $\BQ = \PP(\BZ^{\top}\BPhi)$.
Note that all the blocks of $\BPhi$ have smallest singular values greater than 1/2,  following from Lemma~\ref{thm:loo}. Thus
\begin{align*}
d_F(\BS^0,\BZ) & \leq \| \PP_n(\BPhi) - \PP_n(\BZ\BQ) \|_F \\
& \leq \sqrt{\sum_{i=1}^n \|\PP(\BPhi_i) - \PP(\BQ)\|_F^2} \\
& \leq 2\sqrt{ \sum_{i=1}^n \|\BPhi_i - \BQ\|_F^2 } \\
& = 2\|\BPhi - \BZ\BQ\|_F \\
& \lesssim \sigma d
\end{align*}
where the third equality follows from Lemma~\ref{lem:con}.
Note that if
\[
\sigma \leq c_0 \eps \sqrt{n/d},
\]
for some small constant $c_0,$ then $d_F(\BS^0,\BZ) \leq \eps\sqrt{nd}$ which gives~\eqref{eq:kappa1-init}.

Also note that Lemma~\ref{lem:dk2} implies
\[
\|\BPhi - \BPhi^{(m)}\BQ_m\|_F \lesssim \sigma n^{-1/2}d^{1/2}(\sqrt{d} + \sqrt{\log n})
\]
where $\BQ_m = \PP((\BPhi^{(m)})^{\top}\BPhi)$ and  $|\sigma_j(\BPhi_i) - 1|\lesssim \sigma n^{-1/2}(\sqrt{d} + \sqrt{\log n}) < 1/2$ for all $1\leq j\leq d$ and $1\leq i\leq n.$ As a result, we have~\eqref{eq:kappa2-init} which follows from 
\begin{align*}
\|\BS^0 - \BS^{0,m}\BQ_m\|_F & = \|\PP_n(\BPhi) - \PP_n(\BPhi^{(m)}\BQ_m)\|_F \\
 & \leq 8\|\BPhi - \BPhi^{(m)}\BQ_m\|_F \\
& \lesssim \sigma n^{-1/2}d^{1/2}(\sqrt{d} + \sqrt{\log n}) \\
& \leq \frac{\kappa \sqrt{d}}{2}
\end{align*}
provided that 
\[
\sigma < c_0 \frac{\kappa\sqrt{n}}{\sqrt{d} + \sqrt{\log n}}
\] for some small constant $c_0.$ 

Then for the correlation between $\BW_m$ and $\BS^0 $, we have
\begin{align*}
\|\BW_m^{\top}\BS^0\|_F & \leq \|\BW_m^{\top}(\BS^0 - \BS^{0,m}\BQ_m)\|_F + \|\BW_m^{\top}\BS^{0,m}\|_F \\
& \leq \|\BW_m\|\cdot \|\BS^0 - \BS^{0,m}\BQ_m\|_F + \|\BW_m^{\top}\BS^{0,m}\|_F\\
& \leq 3\sqrt{nd}\cdot \frac{\kappa\sqrt{d}}{2} + \sqrt{nd}(\sqrt{d} + \gamma\sqrt{\log n})
\end{align*}
with probability at least $1 - n^{-\gamma^2/2}$ following from Lemma~\ref{lem:chisq} 
where all entries of $n^{-1/2}\BW_m^{\top}\BS^{0,m}\in\RR^{d\times d}$ are i.i.d. standard normal random variables and $n^{-1}\|\BW_m^{\top}\BS^{0,m}\|_F^2\sim\chi^2_{d^2}.$ Now taking the union bound over $1\leq m\leq n$ gives
\begin{align*}
\|\BW_m^{\top}\BS^0\|_F & \leq \left(1+\frac{3\kappa}{2}\right) \sqrt{nd}(\sqrt{d} + \gamma\sqrt{\log n})
\end{align*}
holds uniformly for all $1\leq m\leq n$ with probability at least $1 - n^{-\gamma^2/2+1}$ with $\gamma=4.$
\end{proof}

%\bibliography{SpectraBlock.bib}

\begin{thebibliography}{10}

\bibitem{ABBS14}
E.~Abbe, A.~S. Bandeira, A.~Bracher, and A.~Singer.
\newblock Decoding binary node labels from censored edge measurements: Phase
  transition and efficient recovery.
\newblock {\em IEEE Transactions on Network Science and Engineering},
  1(1):10--22, 2014.

\bibitem{AFWZ20}
E.~Abbe, J.~Fan, K.~Wang, and Y.~Zhong.
\newblock Entrywise eigenvector analysis of random matrices with low expected
  rank.
\newblock {\em Annals of Statistics}, 48(3):1452, 2020.

\bibitem{AMS09}
P.-A. Absil, R.~Mahony, and R.~Sepulchre.
\newblock {\em Optimization Algorithms on Matrix Manifolds}.
\newblock Princeton University Press, 2009.

\bibitem{AKKSB12}
M.~Arie-Nachimson, S.~Z. Kovalsky, I.~Kemelmacher-Shlizerman, A.~Singer, and
  R.~Basri.
\newblock Global motion estimation from point matches.
\newblock In {\em 2012 Second International Conference on 3D Imaging, Modeling,
  Processing, Visualization \& Transmission}, pages 81--88. IEEE, 2012.

\bibitem{BBS17}
A.~S. Bandeira, N.~Boumal, and A.~Singer.
\newblock Tightness of the maximum likelihood semidefinite relaxation for
  angular synchronization.
\newblock {\em Mathematical Programming}, 163(1-2):145--167, 2017.

\bibitem{BBV16}
A.~S. Bandeira, N.~Boumal, and V.~Voroninski.
\newblock On the low-rank approach for semidefinite programs arising in
  synchronization and community detection.
\newblock In {\em Conference on Learning Theory}, pages 361--382, 2016.

\bibitem{BN11}
F.~Benaych-Georges and R.~R. Nadakuditi.
\newblock The eigenvalues and eigenvectors of finite, low rank perturbations of
  large random matrices.
\newblock {\em Advances in Mathematics}, 227(1):494--521, 2011.

\bibitem{B15}
N.~Boumal.
\newblock A {R}iemannian low-rank method for optimization over semidefinite
  matrices with block-diagonal constraints.
\newblock {\em arXiv preprint arXiv:1506.00575}, 2015.

\bibitem{B16}
N.~Boumal.
\newblock Nonconvex phase synchronization.
\newblock {\em SIAM Journal on Optimization}, 26(4):2355--2377, 2016.

\bibitem{BVB16}
N.~Boumal, V.~Voroninski, and A.~Bandeira.
\newblock {The non-convex Burer-Monteiro approach works on smooth semidefinite
  programs}.
\newblock In {\em Advances in Neural Information Processing Systems}, pages
  2757--2765, 2016.

\bibitem{BVB20}
N.~Boumal, V.~Voroninski, and A.~S. Bandeira.
\newblock Deterministic guarantees for {B}urer-{M}onteiro factorizations of
  smooth semidefinite programs.
\newblock {\em Communications on Pure and Applied Mathematics}, 73(3):581--608,
  2020.

\bibitem{BM03}
S.~Burer and R.~D. Monteiro.
\newblock A nonlinear programming algorithm for solving semidefinite programs
  via low-rank factorization.
\newblock {\em Mathematical Programming}, 95(2):329--357, 2003.

\bibitem{BM05}
S.~Burer and R.~D. Monteiro.
\newblock Local minima and convergence in low-rank semidefinite programming.
\newblock {\em Mathematical Programming}, 103(3):427--444, 2005.

\bibitem{CLS15}
E.~J. Candes, X.~Li, and M.~Soltanolkotabi.
\newblock Phase retrieval via wirtinger flow: Theory and algorithms.
\newblock {\em IEEE Transactions on Information Theory}, 61(4):1985--2007,
  2015.

\bibitem{CKS15}
K.~N. Chaudhury, Y.~Khoo, and A.~Singer.
\newblock Global registration of multiple point clouds using semidefinite
  programming.
\newblock {\em SIAM Journal on Optimization}, 25(1):468--501, 2015.

\bibitem{CC18}
Y.~Chen and E.~J. Cand{\`e}s.
\newblock The projected power method: An efficient algorithm for joint
  alignment from pairwise differences.
\newblock {\em Communications on Pure and Applied Mathematics},
  71(8):1648--1714, 2018.

\bibitem{CCFMY20}
Y.~Chen, Y.~Chi, J.~Fan, C.~Ma, and Y.~Yan.
\newblock Noisy matrix completion: Understanding statistical guarantees for
  convex relaxation via nonconvex optimization.
\newblock {\em SIAM Journal on Optimization}, 30(4):3098--3121, 2020.

\bibitem{CFMW19}
Y.~Chen, J.~Fan, C.~Ma, and K.~Wang.
\newblock Spectral method and regularized {MLE} are both optimal for top-$k$
  ranking.
\newblock {\em Annals of Statistics}, 47(4):2204, 2019.

\bibitem{CGH14}
Y.~Chen, L.~Guibas, and Q.~Huang.
\newblock Near-optimal joint object matching via convex relaxation.
\newblock {\em Proceedings of the 31st International Conference on Machine
  Learning}, 32(2):100--108, 2014.

\bibitem{CSG16}
Y.~Chen, C.~Suh, and A.~J. Goldsmith.
\newblock Information recovery from pairwise measurements.
\newblock {\em IEEE Transactions on Information Theory}, 62(10):5881--5905,
  2016.

\bibitem{C15}
M.~Cucuringu.
\newblock Synchronization over $\mathbb{Z}_2$ and community detection in signed
  multiplex networks with constraints.
\newblock {\em Journal of Complex Networks}, 3(3):469--506, 2015.

\bibitem{CLS12}
M.~Cucuringu, Y.~Lipman, and A.~Singer.
\newblock Sensor network localization by eigenvector synchronization over the
  {Euclidean} group.
\newblock {\em ACM Transactions on Sensor Networks (TOSN)}, 8(3):1--42, 2012.

\bibitem{DK70}
C.~Davis and W.~M. Kahan.
\newblock The rotation of eigenvectors by a perturbation {III}.
\newblock {\em SIAM Journal on Numerical Analysis}, 7(1):1--46, 1970.

\bibitem{DRWMC20}
F.~Dellaert, D.~M. Rosen, J.~Wu, R.~Mahony, and L.~Carlone.
\newblock Shonan rotation averaging: Global optimality by surfing
  $\text{SO}(p)^{n}$.
\newblock In {\em European Conference on Computer Vision}, pages 292--308.
  Springer, 2020.

\bibitem{DLS21}
S.~Deng, S.~Ling, and T.~Strohmer.
\newblock Strong consistency, graph laplacians, and the stochastic block model.
\newblock {\em Journal of Machine Learning Research}, 22(117):1--44, 2021.

\bibitem{EBW18}
J.~Eldridge, M.~Belkin, and Y.~Wang.
\newblock Unperturbed: spectral analysis beyond {Davis-Kahan}.
\newblock In F.~Janoos, M.~Mohri, and K.~Sridharan, editors, {\em Proceedings
  of Algorithmic Learning Theory}, volume~83 of {\em Proceedings of Machine
  Learning Research}, pages 321--358. PMLR, 07--09 Apr 2018.

\bibitem{FWZ18}
J.~Fan, W.~Wang, and Y.~Zhong.
\newblock An $\ell_{\infty}$ eigenvector perturbation bound and its application
  to robust covariance estimation.
\newblock {\em Journal of Machine Learning Research}, 18(207):1--42, 2018.

\bibitem{GK06}
A.~Giridhar and P.~R. Kumar.
\newblock Distributed clock synchronization over wireless networks: Algorithms
  and analysis.
\newblock In {\em Proceedings of the 45th IEEE Conference on Decision and
  Control}, pages 4915--4920. IEEE, 2006.

\bibitem{GW95}
M.~X. Goemans and D.~P. Williamson.
\newblock Improved approximation algorithms for maximum cut and satisfiability
  problems using semidefinite programming.
\newblock {\em Journal of the ACM (JACM)}, 42(6):1115--1145, 1995.

\bibitem{GB14}
M.~Grant and S.~Boyd.
\newblock {CVX}: Matlab software for disciplined convex programming, version
  2.1, 2014.

\bibitem{G11}
D.~Gross.
\newblock Recovering low-rank matrices from few coefficients in any basis.
\newblock {\em IEEE Transactions on Information Theory}, 57(3):1548--1566,
  2011.

\bibitem{HG13}
Q.-X. Huang and L.~Guibas.
\newblock Consistent shape maps via semidefinite programming.
\newblock {\em Computer Graphics Forum}, 32(5):177--186, 2013.

\bibitem{LS21}
G.~Lerman and Y.~Shi.
\newblock Robust group synchronization via cycle-edge message passing.
\newblock {\em Foundations of Computational Mathematics}, pages 1--77, 2021.

\bibitem{LMCC19}
Y.~Li, C.~Ma, Y.~Chen, and Y.~Chi.
\newblock Nonconvex matrix factorization from rank-one measurements.
\newblock In {\em The 22nd International Conference on Artificial Intelligence
  and Statistics}, pages 1496--1505. PMLR, 2019.

\bibitem{L20b}
S.~Ling.
\newblock Near-optimal performance bounds for orthogonal and permutation group
  synchronization via spectral methods.
\newblock {\em arXiv preprint arXiv:2008.05341}, 2020.

\bibitem{L20a}
S.~Ling.
\newblock Solving orthogonal group synchronization via convex and low-rank
  optimization: Tightness and landscape analysis.
\newblock {\em arXiv preprint arXiv:2006.00902}, 2020.

\bibitem{L21a}
S.~Ling.
\newblock Generalized power method for generalized orthogonal {Procrustes}
  problem: Global convergence and optimization landscape analysis.
\newblock {\em arXiv preprint arXiv:2106.15493}, 2021.

\bibitem{L21b}
S.~Ling.
\newblock Near-optimal bounds for generalized orthogonal procrustes problem via
  generalized power method.
\newblock {\em arXiv preprint arXiv: 2112.13725}, 2021.

\bibitem{LXB19}
S.~Ling, R.~Xu, and A.~S. Bandeira.
\newblock On the landscape of synchronization networks: A perspective from
  nonconvex optimization.
\newblock {\em SIAM Journal on Optimization}, 29(3):1879--1907, 2019.

\bibitem{LYS17}
H.~Liu, M.-C. Yue, and A.~Man-Cho~So.
\newblock On the estimation performance and convergence rate of the generalized
  power method for phase synchronization.
\newblock {\em SIAM Journal on Optimization}, 27(4):2426--2446, 2017.

\bibitem{LYS20}
H.~Liu, M.-C. Yue, and A.~M.-C. So.
\newblock A unified approach to synchronization problems over subgroups of the
  orthogonal group.
\newblock {\em arXiv preprint arXiv:2009.07514}, 2020.

\bibitem{MWCC20}
C.~Ma, K.~Wang, Y.~Chi, and Y.~Chen.
\newblock Implicit regularization in nonconvex statistical estimation: Gradient
  descent converges linearly for phase retrieval, matrix completion, and blind
  deconvolution.
\newblock {\em Foundations of Computational Mathematics}, 20:451--632, 2020.

\bibitem{MMMO17}
S.~Mei, T.~Misiakiewicz, A.~Montanari, and R.~I. Oliveira.
\newblock {Solving SDPs for synchronization and MaxCut problems via the
  Grothendieck inequality}.
\newblock In {\em Conference on Learning Theory}, pages 1476--1515, 2017.

\bibitem{NN94}
Y.~Nesterov and A.~Nemirovskii.
\newblock {\em Interior-Point Polynomial Algorithms in Convex Programming}.
\newblock SIAM, 1994.

\bibitem{OVBS17}
O.~Ozyesil, V.~Voroninski, R.~Basri, and A.~Singer.
\newblock A survey of structure from motion.
\newblock {\em Acta Numerica}, 26:305--364, 2017.

\bibitem{PKS13}
D.~Pachauri, R.~Kondor, and V.~Singh.
\newblock Solving the multi-way matching problem by permutation
  synchronization.
\newblock In {\em Advances in Neural Information Processing Systems}, pages
  1860--1868, 2013.

\bibitem{PWBM18b}
A.~Perry, A.~S. Wein, A.~S. Bandeira, and A.~Moitra.
\newblock Message-passing algorithms for synchronization problems over compact
  groups.
\newblock {\em Communications on Pure and Applied Mathematics},
  71(11):2275--2322, 2018.

\bibitem{PWBM18}
A.~Perry, A.~S. Wein, A.~S. Bandeira, and A.~Moitra.
\newblock Optimality and sub-optimality of {PCA I}: {S}piked random matrix
  models.
\newblock {\em The Annals of Statistics}, 46(5):2416--2451, 2018.

\bibitem{RCY11}
K.~Rohe, S.~Chatterjee, B.~Yu, et~al.
\newblock Spectral clustering and the high-dimensional stochastic blockmodel.
\newblock {\em The Annals of Statistics}, 39(4):1878--1915, 2011.

\bibitem{RG19}
E.~Romanov and M.~Gavish.
\newblock The noise-sensitivity phase transition in spectral group
  synchronization over compact groups.
\newblock {\em Applied and Computational Harmonic Analysis}, 2019.

\bibitem{RCBL19}
D.~M. Rosen, L.~Carlone, A.~S. Bandeira, and J.~J. Leonard.
\newblock {SE-Sync}: A certifiably correct algorithm for synchronization over
  the special {Euclidean} group.
\newblock {\em The International Journal of Robotics Research},
  38(2-3):95--125, 2019.

\bibitem{SHSS16}
Y.~Shen, Q.~Huang, N.~Srebro, and S.~Sanghavi.
\newblock Normalized spectral map synchronization.
\newblock In {\em Advances in Neural Information Processing Systems}, pages
  4925--4933, 2016.

\bibitem{SS12}
Y.~Shkolnisky and A.~Singer.
\newblock Viewing direction estimation in cryo-{EM} using synchronization.
\newblock {\em SIAM Journal on Imaging Sciences}, 5(3):1088--1110, 2012.

\bibitem{S11}
A.~Singer.
\newblock Angular synchronization by eigenvectors and semidefinite programming.
\newblock {\em Applied and Computational Harmonic Analysis}, 30(1):20--36,
  2011.

\bibitem{S18}
A.~Singer et~al.
\newblock Mathematics for cryo-electron microscopy.
\newblock {\em Proceedings of the International Congress of Mathematicians
  (ICM)}, 3:3981--4000, 2018.

\bibitem{SS11}
A.~Singer and Y.~Shkolnisky.
\newblock Three-dimensional structure determination from common lines in
  cryo-{EM} by eigenvectors and semidefinite programming.
\newblock {\em SIAM Journal on Imaging Sciences}, 4(2):543--572, 2011.

\bibitem{S10}
R.~Szeliski.
\newblock {\em Computer Vision: Algorithms and Applications}.
\newblock Springer Science \& Business Media, 2010.

\bibitem{V18}
R.~Vershynin.
\newblock {\em High-Dimensional Probability: An Introduction with Applications
  in Data Science}, volume~47.
\newblock Cambridge University Press, 2018.

\bibitem{V17}
U.~Von~Luxburg.
\newblock A tutorial on spectral clustering.
\newblock {\em Statistics and Computing}, 17(4):395--416, 2007.

\bibitem{WW20}
I.~Waldspurger and A.~Waters.
\newblock Rank optimality for the {Burer-Monteiro} factorization.
\newblock {\em SIAM Journal on Optimization}, 30(3):2577--2602, 2020.

\bibitem{WS13}
L.~Wang and A.~Singer.
\newblock Exact and stable recovery of rotations for robust synchronization.
\newblock {\em Information and Inference: A Journal of the IMA}, 2(2):145--193,
  2013.

\bibitem{WY13}
Z.~Wen and W.~Yin.
\newblock A feasible method for optimization with orthogonality constraints.
\newblock {\em Mathematical Programming}, 142(1-2):397--434, 2013.

\bibitem{ZB18}
Y.~Zhong and N.~Boumal.
\newblock Near-optimal bounds for phase synchronization.
\newblock {\em SIAM Journal on Optimization}, 28(2):989--1016, 2018.

\end{thebibliography}
%\bibliographystyle{abbrv}

\end{document}